\newif\ifshort
\newif\iflong
\newcommand{\inlineoronline}[1]{\ifshort{\(#1\)}\else{\[#1\]}\fi}
\newcommand{\preciser}{\sqsubseteq}
\newcommand{\dynr}{\sqsubseteq}
\newcommand{\equiprecise}{\mathrel{\sqsupseteq\sqsubseteq}}
\newcommand{\equidyn}{\mathrel{\sqsupseteq\sqsubseteq}}
\newcommand{\vdynr}{\rotatebox[origin=c]{-90}{$\dynr$}}
\newcommand{\dyn}{{?}}
\newcommand{\obcast}[2]{\langle{#2}\Leftarrow{#1}\rangle}
\newcommand{\uarrow}{\mathrel{\rotatebox[origin=c]{-30}{$\leftarrowtail$}}}
\newcommand{\uarrowpr}{\mathrel{\rotatebox[origin=c]{-30}{$\hookleftarrow$}}}
\newcommand{\darrow}{\mathrel{\rotatebox[origin=c]{30}{$\twoheadleftarrow$}}}
\newcommand{\upcast}[2]{\langle{#2}\uarrow{#1}\rangle}
\newcommand{\upcastpr}[2]{\langle{#2}\uarrowpr{#1}\rangle}
\newcommand{\dncast}[2]{\langle{#1}\darrow{#2}\rangle}
\newcommand{\type}{\,\text{type}}
\newcommand{\ctx}{\,\text{ctx}}
\newcommand{\dom}{\text{dom}}
\newcommand{\cat}{\mathbb}
\newcommand{\Ctx}{\text{Ctx}}
\newcommand{\Multi}{\text{Multi}}
\newcommand{\PTT}{\text{PTT}}
\newcommand{\PTTS}{\text{PttS} }
\newcommand{\CPM}{CPM }
\newcommand{\GTTS}{\text{GttS}}
\newcommand{\GTTC}{\text{GttC}}
\newcommand{\sem}[1]{\llbracket{#1}\rrbracket}
\newcommand{\interp}[1]{\llparenthesis{#1}\rrparenthesis}
\newcommand{\src}{s}
\newcommand{\tgt}{t}
\newcommand{\err}{\mho}
\newcommand{\id}{\text{id}}
\newcommand{\app}{\text{app}}
\newcommand{\pair}{\text{pair}}
\newcommand{\unit}{\text{unit}}
\newcommand{\tl}{\triangleleft}
\newcommand{\alt}{\,|\,}
\newcommand{\coreflection}{\text{CoReflect}}
\newcommand{\axiom}[2]{\inferrule*[Right=#1]{~}{#2}}
\newcommand{\funupcast}[6]{\lambda #6 : #3. \upcast{#2}{#4}{(#5 (\dncast{#1}{#3}{#6}))}}
\newcommand{\produpcast}[5]{(\upcast{#1}{#3}\pi_0{#5}, \upcast{#2}{#4}\pi_0{#5})}
\newcommand{\aagrantack}{This material is based upon work supported by
  the National Science Foundation under grant CCF-1453796. Any
  opinions, findings, and conclusions or recommendations expressed in
  this material are those of the authors and do not necessarily
  reflect the views of the National Science Foundation.}
\newcommand{\dlgrantack}{This research was partially supported by the
  United States Air Force Research Laboratory under agreement numbers
  FA-95501210370 and FA-95501510053. The U.S. Government is authorized
  to reproduce and distribute reprints for Governmental purposes
  notwithstanding any copyright notation thereon.  The views and
  conclusions contained herein are those of the authors and should not
  be interpreted as necessarily representing the official policies or
  endorsements, either expressed or implied, of the United States Air
  Force Research Laboratory, the U.S. Government or Carnegie Mellon
  University.}
\author{Max S. New}{Northeastern University, Boston, USA}{maxnew@ccs.neu.edu}{}{\aagrantack}
\author{Daniel R. Licata}{Wesleyan University, Middletown, USA}{dlicata@wesleyan.edu}{}{\dlgrantack}
\authorrunning{M.\,S. New and D.\,R. Licata}
\begin{document}
\title{Call-by-Name Gradual Type Theory}

\iflong
\author[M.S.~New]{Max S. New}
\address{Northeastern University, Boston, USA}
\email{maxnew@ccs.neu.edu}
\thanks{\aagrantack}
\author[D.R.~Licata]{Daniel R. Licata}
\address{Wesleyan University, Middletown, USA}
\email{dlicata@wesleyan.edu}
\thanks{\dlgrantack}
\fi

\theoremstyle{plain}\newtheorem{theorem}[thm]{Theorem}
\theoremstyle{plain}\newtheorem{lemma}[thm]{Lemma}
\theoremstyle{definition}\newtheorem{definition}[thm]{Definition}

\subjclass{F.3.2 Semantics of Programming Languages}

\keywords{Gradual Typing, Type Systems, Program Logics, Category Theory, Denotational Semantics}

\ifshort
\relatedversion{An extended version of this article is available from the arXiv \cite{extended}.}
\acknowledgements{We thank Amal Ahmed for countless insightful discussions of this
  work.}
\fi

\bibliographystyle{plainurl}

\ifshort
\EventEditors{H\'{e}l\`{e}ne Kirchner}
\EventNoEds{1}
\EventLongTitle{3rd International Conference on Formal Structures for Computation and Deduction (FSCD 2018)}
\EventShortTitle{FSCD 2018}
\EventAcronym{FSCD}
\EventYear{2018}
\EventDate{July 9--12, 2018}
\EventLocation{Oxford, UK}
\EventLogo{}
\SeriesVolume{108}
\ArticleNo{25}
\nolinenumbers
\fi

\begin{abstract}
  We present \emph{gradual type theory}, a logic and type theory for
  call-by-name gradual typing.  We define the central constructions of
  gradual typing (the dynamic type, type casts and type error) in a
  novel way, by universal properties relative to new judgments for
  \emph{gradual type and term dynamism}.  These dynamism judgments
  build on prior work in blame calculi and on the ``gradual
  guarantee'' theorem of gradual typing.  Combined with the ordinary
  extensionality ($\eta$) principles that type theory provides, we
  show that most of the standard operational behavior of casts is
  \emph{uniquely determined} by the gradual guarantee.  This provides
  a semantic justification for the definitions of casts, and shows
  that non-standard definitions of casts must violate these
  principles.  Our type theory is the internal language of a certain
  class of preorder categories called \emph{equipments}.  We give a
  general construction of an equipment interpreting gradual type
  theory from a 2-category representing non-gradual types and
  programs.
  This construction is a semantic analogue of the interpretation of
  gradual typing using contracts, and use it to build some concrete
  domain-theoretic models of gradual typing.
\end{abstract}

\maketitle

\section{Introduction}

Gradually typed languages allow for static and dynamic programming
styles within the same language.
They are designed with twin goals of allowing easy interoperability
between static and dynamic portions of a codebase and facilitating a
smooth transition from dynamic to static typing.
This allows for the introduction of new typing features to legacy
languages and codebases without the enormous manual effort currently
necessary to migrate code from a dynamically typed language to a fully
statically typed language.
Gradual typing allows exploratory programming and prototyping to be
done in a forgiving, dynamically typed style, while later that code
can be typed to ease readability and refactoring.
Due to this appeal, there has been a great deal of research
on extending
gradual typing~\cite{tobin-hochstadt06,siek-taha06} to numerous language features such as parametric
polymorphism~\cite{ahmed08:paramseal,ahmed17,igarashipoly17,torolabradatanter19}, effect
tracking~\cite{gradeffects2014}, typestate~\cite{Wolff:2011:GT},
session types~\cite{igarashisession17}, refinement
types~\cite{lehmann17} and security types \cite{toro18:sec}.
Almost all work on gradual typing is based solely on operational
semantics, and recent work such as \cite{refined} has codified some of
the central design principles of gradual typing in an operational
setting.
In this paper, we are interested in complementing this operational
work with a type-theoretic and category-theoretic analysis of these
design principles. We believe this will improve our understanding of
gradually typed languages, particularly with respect to principles for
reasoning about program equivalence, and assist in designing and
evaluating new gradually typed languages because it gives criteria for
casts (that they form embedding-projection pairs) that imply
graduality.

One of the central design principles for gradual typing is
\emph{gradual type soundness}.
At its most general, this should mean that the types of the gradually
typed language provide the same type-based reasoning that one could
reasonably expect from a similar statically typed language, i.e. one
with runtime errors and general recursion.
While this has previously been defined using operational semantics and
a notion of \emph{blame}~\cite{wadler-findler09}, the idea of
soundness we consider here is that the types should provide the same
extensionality ($\eta$) principles as in a statically typed language.
This way, programmers can reason about the ``typed'' parts of gradual
programs in the same way as in a fully static language.  
This definition fits nicely with a category-theoretic perspective,
because the $\beta$ and $\eta$ principles correspond to definitions of
connectives by a \emph{universal property}.
\iflong{}

\fi
The second design principle is the \emph{gradual
  guarantee}~\cite{refined}, which we will refer to as
\emph{graduality} (by analogy with parametricity).
Informally, graduality of a language means that syntactic changes from
dynamic to static typing (or vice-versa) should result in simple,
predictable changes to the semantics of a term.
More specifically, if a portion of a program is made ``more
static''/``less dynamic'' then the new program should either have the
same behavior or result in a runtime type error.
Other observable behavior such as values produced, I/O actions
performed or termination should not be changed.
\iflong
In other words, a ``less dynamic'' program should expose ``less
information'': by making types more static, we limit the interface for
the program and thus hide behavior, replacing it with a runtime type
error.
Of course, limiting the interface is precisely what allows for the
typed reasoning principles that gradual type soundness requires.
\fi

In this paper, we codify these two principles of soundness and
graduality \emph{directly} into a logical syntax we dub (call-by-name)
\emph{Gradual Type Theory} (Section~\ref{sec:gtt}).
For graduality, we develop a logic of \emph{type and term dynamism}
that can be used to reason about the relationship between ``more
dynamic'' and ``less dynamic'' versions of a program, and to give
novel specifications/universal properties for the dynamic
type, type errors, and runtime type casts of a gradually typed
language.
These universal properties extend the judgmental approach to type theory
(see ~\cite{martinlof83sienna,pfenningdavies}) to the key features of
gradual typing.
For soundness, we assert $\beta$ and $\eta$ principles as axioms of term
dynamism, so that the logic models program behavior.
Furthermore, using the $\eta$ principles for types, we 
show that most of the operational rules of runtime casts of existing
(call-by-name) gradually typed languages are \emph{uniquely determined}
by these constraints of soundness and graduality
(Section~\ref{sec:theorems}).
\iflong
As an example application, uniqueness implies that a complicated
space-efficient contract enforcement scheme in a particular language
(e.g. as in \cite{siek-wadler10}) is equivalent to a standard wrapping
implementation, \emph{if} it satisfies soundness and graduality (which
might be separately provable by a logical relations argument).
Contrapositively, %
\else For example,\fi
uniqueness implies that any enforcement scheme in a specific gradually
typed language that is \emph{not} equivalent to the standard
``wrapping'' ones \emph{must} violate either soundness or graduality.
We have chosen call-by-name because it is a simple setting with the
necessary $\eta$ principles (for negative types) to illustrate our
technique.
We have in other work considered application to other evaluation
orders (\cite{newlicataahmed19}), which we discuss in more detail in
Section~\ref{sec:related}.

We give a sound and complete category theoretic semantics for gradual
type theory in terms of certain \emph{preorder categories} (double
categories where one direction is thin) (Section~\ref{sec:semantics}).
We show that the contract interpretation of gradual
typing~\cite{tobin-hochstadt08} can be understood as a tool
for constructing models (Section~\ref{sec:contract-translation}):
starting from some existing language/category $C$, we first implement
casts as suitable pairs of functions/morphisms in $C$, and then equip
every type with canonical casts to the dynamic type.
\iflong
Technically, the first step forms a double category from a 2-category by
interpreting vertical arrows as Galois insertions/coreflections, i.e.,
related pairs of an upcast and a downcast.
Second, from a suitable choice of dynamic type, we construct a
``vertical slice'' preorder category whose objects are vertical arrows
into the chosen dynamic type.
\fi
We apply this to construct some concrete models in domains
(Section~\ref{sec:models}).

Conceptually, gradual type theory is analogous to Moggi's \emph{monadic
  metalanguage}~\cite{moggi91}: it clarifies general principles present in
many different programming languages; it is the internal language of a
quite general class of category-theoretic structures; and, for a
specific language, a number of useful results can be proved all at once
by showing that a logical relation over it is a model of the type
theory.

\ifshort \vspace{-0.2in}\fi

\subsection{A logic of dynamism and casts}
Before proceeding to the technical details, we explain at a high level
how our type theory accounts for two key features of gradual typing:
graduality and casts.
The ``gradual guarantee'' as defined in \cite{refined} applies to a
surface language where runtime type casts are implicitly inserted based
on type annotations, but we will focus here on an analysis of fully
elaborated languages, where explicit casts have already been inserted
(so our work does not yet address gradual type checking).
The gradual guarantee as defined in~\cite{refined} makes use of a
\emph{syntactically less dynamic}\footnote{Throughout this work, we
  will use the words ``less than'' to mean ``less than or equal to''
  rather than ``strictly less than'', and similarly for other terms
  such as ``greater than'', ``more dynamic'', etc.}  ordering on
types:
the dynamic type (universal domain) $\dyn$ is the most
dynamic, and $A$ is less dynamic than $B$ if $B$ has the same structure
as $A$ but some sub-terms are replaced with $\dyn$ (for example, $A \to (B
\times C)$ is less dynamic than $\dyn \to (B \times \dyn)$, $\dyn \to \dyn$ and $\dyn$).
Intuitively, a less dynamic type constrains the behavior of the
program more, but consequently gives stronger reasoning principles.  
This notion is extended to closed well-typed \emph{terms} $t:A$ and $t' : A'$ with $A$ less
dynamic than $A'$: $t$
is \emph{syntactically less dynamic} than $t'$ if 
$t$ is obtained from $t'$ by replacing the input and
output type of each type cast with a less (or equally) dynamic type (in
\cite{refined} this was called ``precision'').
For example, if $\mathsf{add1} : \dyn \to \mathbb{N}$ and $\mathsf{true} :
\dyn$, then $\mathsf{add1}( (\dyn \Leftarrow \mathbb{N})(\mathbb{N}
\Leftarrow \dyn) \mathsf{true})$ (cast $\mathsf{true}$ from dynamic
to $\mathbb{N}$ and back, to assert it is a number) is syntactically
less dynamic than
$\mathsf{add1}( (\dyn \Leftarrow \dyn)(\dyn \Leftarrow
\dyn) \mathsf{true})$ (where both casts are the identity).
Then the gradual guarantee~\cite{refined} says that if $t$ is
syntactically less dynamic than $t'$, then $t$ is \emph{semantically
  less dynamic} than $t'$: either $t$ evaluates to a type error (in
which case $t'$ may do anything) or $t, t'$ have the same first-order
behavior (both diverge or both terminate with $t$ producing a less
dynamic value).
In the above example, the less dynamic term always errors (because
$\mathsf{true}$ fails the runtime $\mathbb{N}$ check), while the more
dynamic term only errors if $\mathsf{add1}$ uses its argument as a
number.
In contrast, a program that returns a different value than
$\mathsf{add1}(\mathsf{true})$ does will not be semantically less
dynamic than it.

The approach we take in this paper is to give a \emph{syntactic logic}
for the \emph{semantic} notion of one term being less dynamic than
another, with $\err$ (type error) the least element, and all term
constructors monotone.
We call this the \emph{term dynamism relation} $t \dynr t'$, and it includes not only
syntactic changes in type casts, as above, but also equational laws like
identity and composition for casts, and $\beta\eta$ rules---so $t
\dynr t'$ intuitively means that $t$ type-errors more than (or as
much as) $t'$, but is otherwise equal according to these equational laws.
A programming language that is a model of our type theory will therefore
be equipped with a semantic $t \sem{\dynr} t'$ relation validating
these rules, so $t \sem{\dynr} t'$ if $t$ type-errors more than $t'$
up to these equational and monotonicity laws.
In particular, making type cast annotations less dynamic will result
in related programs, and if $\sem{\dynr}$ is adequate (i.e., no
operationally distinguishable terms are order-equivalent), then this
implies the gradual guarantee~\cite{refined}.
Therefore, we say a model of gradual type theory ``satisfies
graduality'' in the same sense that we would say a language satisfies
parametricity.
We have developed operational models for CBV and CBPV languages that
interpret the semantic ordering here as a type of contextual
approximation \cite{newahmed18,newlicataahmed19}.

Next, we discuss the relationship between term dynamism and casts, the
most novel aspect of our theory.
Explicit casts in a gradually typed language are typically presented by
the syntactic form $(B \Leftarrow A) t$, and their semantics is either
defined by various operational reductions that inspect the structure of
$A$ and $B$, or by ``contract'' translations, which compile a language
with casts to another language, where the casts are implemented as
ordinary functions.
In both cases, the behavior of casts is defined by inspection on types
and part of the language definition, with little justification beyond
intuition and precedent.

In gradual type theory, on the other hand, the behavior of casts is
\emph{not} defined by inspection of types.
Rather, we use the new type and term dynamism judgments, which are
defined \emph{prior to} casts, to give a few simple and uniform rules
specifying casts in all types via a universal property (optimal
implementation of a specification).
Our methodology requires isolating two special subclasses of casts,
upcasts and downcasts.
An upcast goes from a ``more static'' to a ``more dynamic'' type--- for
instance $(\dyn \Leftarrow (A \to B))$ is an upcast from a function type
up to the dynamic type---whereas a downcast is the opposite, casting to
the more static type.
We represent the relationship ``$A$ is less dynamic than $B$'' by
a \emph{type dynamism} judgment $A \dynr B$ (which corresponds to
the ``na\"ive subtyping'' of \cite{wadler-findler09}).
In gradual type theory, the upcast $\upcast{A}{B}{}$ from $A$ to $B$ and the downcast
$\dncast{A}{B}{}$ from $B$ to $A$ can be formed whenever $A \dynr B$.
This leaves out certain casts like $(\dyn \times \mathbb{N}) \Leftarrow
(\mathbb{N} \times \dyn)$ where neither type is more dynamic than the
other.
However, as first recognized in \cite{henglein94:dynamic-typing}, these
casts are macro-expressible~\cite{felleisen90} as a composite of an
upcast to the dynamic type and then a downcast from it (define $(B
\Leftarrow A)t$ as the composite 
$\dncast{B}{\dyn}{\upcast{A}{\dyn}{t}}$).

A key insight is that we can give upcasts and downcasts dual
specifications using term dynamism, which say how the casts relate
programs to type dynamism.
If $A \dynr B$, then for any term $t : A$, the
upcast $\upcast{A}{B}{t} : B$ is the \emph{least} dynamic term of type
$B$ that is more dynamic than $t$.
In order-theoretic terms, $\upcast{A}{B}{t} : B$ is the
$\dynr$-meet of all terms $u : B$ with $t \dynr u$.
Downcasts have a dual interpretation as a $\dynr$-join.
Intuitively, this property means upcast $\upcast{A}{B}{t}$ behaves as
much as possible like $t$ itself, while supporting the additional
interface provided by expanding the type from $A$ to $B$.

This simple definition has powerful consequences that we explore in
Section~\ref{sec:theorems}, because it characterizes the upcasts and
downcasts up to program equivalence.
We show that standard implementations of casts are the \emph{unique}
implementations that satisfy $\beta, \eta$ and basic congruence
rules.
In fact, almost all of the standard operational rules of a simple
call-by-name gradually typed language are term-dynamism equivalences in
gradual type theory.  The exception is rules that rely on disjointness
of different type connectives (such as
$\dncast{\dyn\to\dyn}{\dyn}{\upcast{\dyn\times\dyn}{\dyn}{t}} \mapsto
\err$), which are independent, and can be added as axioms.

\subsection{Models of Gradual Typing}
\iflong In addition to axiomatizing graduality in gradual type theory,
we also consider categorical semantics and denotational models of the
theory.

We give a definition of a \emph{model} of gradual type theory in
\emph{cartesian preorder multi-categories} (CPMS), which are related
to categories internal to the category of preordered sets, i.e., sets
with a reflexive, transitive relation.
This presents a simple alternative, algebraic specification of type
and term dynamism.
A CPM is like a category where the sets of objects and arrows have the
structure of a preorder, and the source, target, identity and
composition functions are all monotone. However, as in type theory,
and in contrast to categories, arrows can have $0$ or more inputs
rather than exactly $1$.
The ordering on objects models type dynamism and the ordering on
terms models term dynamism, and the rest of the requirements
succinctly describe the relationship between those two notions.

To model the casts, we in addition need that for any two objects with
$A \dynr B$, there exist morphisms $A \to B$ and $B \to A$ that
model upcasts and downcasts.
In the category theory literature, this structure is
called an \emph{equipment} and we adapt existing constructions and
results from that work \cite{shulman2008framed}.

We then prove an \emph{initiality} theorem for gradual type theory
with respect to the category of models, extending the classical
correspondence between simply typed lambda calculus and cartesian
closed categories \cite{Lambek:1986} to gradual typing.
In logical terms, we show that gradual type theory is \emph{sound} and
\emph{complete} with respect to our notion of model.
Of course this is no accident, we used this notion of model as the
basis for our design of gradual type theory.
However, we prefer to present the syntax first, since it does not
require any knowledge of category theory to understand.

In addition to providing a different perspective on the structure of
type and term dynamism, the CPM semantics of gradual typing enables us
to systematically build models of gradual typing.
In particular, we present the ``contract interpretation'' of casts as
a semantic construction of a model of gradual typing from a
cartesian 2-category.
Furthermore we can decompose this construction into simple pieces.
First, we form a double category from a 2-category by interpreting
vertical arrows as Galois insertions/coreflections, i.e., related
pairs of an upcast and a downcast.
Second, from a suitable choice of dynamic type, we construct a
``vertical slice'' CPM whose objects are vertical arrows into the
chosen dynamic type.

Finally, we instantiate this construction with several concrete
models, and in doing so make a formal connection between gradual
typing and domain-theoretic interpretations of dynamic typing.
Such a connection has been folklore since the earliest days of higher
order contracts, and we make this precise by constructing models of
gradual type theory.
First, we give a simple first-order model that doesn't support
function types, but as a consequence is also elementary in that it
only requires a solution to a covariant fixed point equation.
Then, we show that Dana Scott's classical construction of a model of
types from retracts of a universal domain is an instance of our
contract construction, but is \emph{inadequate} for interpreting
gradual typing because it conflates type errors and nontermination.
Finally we show that a better model can be constructed by using a
category of ``ordered domains'' that in addition to the domain
ordering have a separate ``type error ordering'' that models term
dynamism.
\fi

\subsection{Overview}

The paper proceeds as follows
\begin{itemize}
\item In Section \ref{sec:gtt} we present the syntax of gradual type
  theory (GTT).
\item In Section \ref{sec:theorems}, we formulate and prove many
  theorems \emph{within} GTT, including many equivalences between
  casts.
\item In Section \ref{sec:semantics}, we define models of gradual type
  theory and prove that the syntax of GTT provides the initial model.
\item In Section \ref{sec:contract-translation}, we present a semantic
  formulation of the ``contract interpretation'' of gradual types as
  constructing a model of GTT from a suitable 2-category.
\item In Section \ref{sec:models}, we instantiate our contract
  interpretation with several concrete models, including classic
  domain-theoretic interpretations of dynamically typed lambda
  calculus.
\item Finally, in Section \ref{sec:related}, we discuss related work.
\end{itemize}

This article is an extended version of \cite{newlicata18}. The primary
differences are that we (1) expanded Section~\ref{sec:theorems} to
include proofs, illustrating the process of reasoning about gradually
typed programs in the logic of GTT, (2) expanded
Section~\ref{sec:semantics} to include additional details of the
initiality theorems, and (3) expanded Section~\ref{sec:models} to more
fully describe the construction of our denotational models.

While Sections~\ref{sec:semantics}, \ref{sec:contract-translation},
\ref{sec:models} heavily use category-theoretic and domain-theoretic
terminology and techniques, Sections~\ref{sec:gtt} and
\ref{sec:theorems} are self-contained presentations of the syntax and
derivable theorems of GTT that require no knowledge of category or
domain theory, so readers with an interest in gradual typing but not
these semantic techniques may prefer to focus on these sections on GTT
syntax.

\section{Gradual Type Theory}
\label{sec:gtt}

In this section, we present the rules of gradual type theory (GTT).
Gradual type theory presents the types, connectives and casts of
gradual typing in a modular, type-theoretic way: the dynamic type,
type error and casts are defined by rules using the \emph{judgmental
  structure} of the type theory, specifically the new judgments for
type and term dynamism which we add to the usual judgmental structure
of typed lambda calculus.
Since the judgmental structure is so important, we
first present a bare-bones type theory we call \emph{preorder type
  theory} (PTT) which only has base types.
We can then modularly define what it means for this theory to have a
dynamic type, type errors, casts, functions and products.
Then gradual type theory is defined to be preorder type theory with
all of these constructions.

\iflong\subsection{Preorder Type Theory}
\else\subparagraph*{Preorder Type Theory}\fi
\label{sec:gtt:ptt}

\begin{figure}
  \begin{small}
    \begin{mathpar}
    \iflong A \type \and \Gamma \ctx \and \fi
    \inferrule{A \type\and A' \type}{A \dynr A'}\qquad
    \inferrule{\Gamma \ctx\and \Gamma' \ctx}
              {\Phi : \Gamma \dynr \Gamma'}\qquad
    \inferrule{\Gamma\ctx \and A \type}{\Gamma \vdash t : A}\qquad
    \inferrule{\Phi : \Gamma \dynr \Gamma' \and
               A      \dynr A'  \\\\
               \Gamma  \vdash t  : A   \and
               \Gamma' \vdash t' : A'}
              {\Phi \vdash t \dynr t' : A \dynr A'}
    \end{mathpar}
  \end{small}
  \caption{Judgment Presuppositions of Preorder Type Theory}
  \label{fig:formation}
\end{figure}

Preorder type theory (PTT) has 6 judgments: types, contexts, type
dynamism, dynamism contexts, terms and term dynamism. Their
presuppositions (one is only allowed to make a judgment when these
conditions hold) are presented in Figure~\ref{fig:formation},
where $A \type$ and $\Gamma \ctx$ have no conditions. 
The types, contexts and terms (Figure~\ref{fig:type-term}) are
structured as a standard type theory.
Terms are treated as intrinsically typed with respect to a context and
an output type, contexts are ordered lists (this is important for our
definition of dynamism context below).
For bare preorder type theory, the only types are base types, and the
only terms are variables and applications of uninterpreted function
symbols \ifshort(whose rule we omit)\fi.
\ifshort
In the extended version \cite{extended}, we give a precise definition of a
\emph{signature} specifying valid base types, function symbols, and type
and term dynamism axioms.
\else
These are all given by a \emph{signature} $\Sigma =
(\Sigma_0,\Sigma_1,\Sigma_2,\Sigma_3)$, formally defined below in
Definition \ref{def:ptt-signature}.
\fi
A substitution $\gamma : \Delta \vdash \Gamma$ is defined as usual:
\begin{definition}
  A substitution $\gamma : \Delta \vdash \Gamma$ is a function which
  given a variable in the output context $x:A \in \Gamma$, produces a
  term of that type relative to the input context $\Delta \vdash
  \gamma(x) : A$.
\end{definition}
Our term language supports a notion of substitution where if $\gamma :
\Delta \vdash \Gamma$ and $\Gamma \vdash t : A$ then $\Delta \vdash
t[\gamma] : A$, defined in the standard way for each construction we
add.
Weakening, contraction and exchange are all special cases of the
admissible action of substitution.

\begin{figure}
  \begin{small}
    \begin{mathpar}
    \iflong
    \inferrule{X \in \Sigma_0}{X \type}\and
    \else
    \inferrule{X \text{ base type}}{X \type}\and
    \fi
    \inferrule{~}{\cdot \ctx}\and
    \inferrule{\Gamma \ctx \and A \type \and x \not\in \dom(\Gamma)}{\Gamma, x : A \ctx}\\
    \inferrule{~}
              {\Gamma, x : A, \Gamma' \vdash x : A}
    \iflong
    \and
    \inferrule{f \in \Sigma_2(A_0,\ldots; B)\and \Delta \vdash t_0 : A_0 \,\,\cdots}
              {\Delta \vdash f(t_0,\ldots) : B}\and
    \fi
  \end{mathpar}
  \end{small}
  \caption{Preorder Type Theory: Type and Term Structure}
  \label{fig:type-term}
\end{figure}

Next, we discuss the new judgments of type dynamism, dynamism
contexts, and term dynamism, shown in Figure~\ref{fig:type-ctx-prec}.
A type dynamism judgment $A \dynr B$ relates two well-formed types,
and is read as ``$A$ is less dynamic than $B$''.
In preorder type theory, the only rules are reflexivity
(\textsc{TyDyn-Refl}) and transitivity (\textsc{TyDyn-Trans}), which
make type dynamism a preorder, and any axioms from the signature
$\Sigma_1$ (\textsc{TyDyn-Ax}).

The remaining rules in Figure~\ref{fig:type-ctx-prec} define \emph{type
  dynamism contexts} $\Phi$, which are used in the definition of term
dynamism.
While terms are indexed by a type and a typing context, term dynamism
judgments $\Phi \vdash t \dynr t' : A \dynr A'$ are indexed by
two terms $\Gamma \vdash t : A$ and $\Gamma' \vdash t' : A'$,
such that $A \dynr A'$ ($A$ is less dynamic than $A'$)
and $\Gamma$ is less dynamic than $\Gamma'$.  
Thus, we require a judgment $\Phi : \Gamma \dynr \Gamma'$,
which lifts type dynamism to contexts pointwise (for any $x:A \in
\Gamma$, the corresponding $x':A' \in \Gamma'$ satisfies $A \dynr
A'$).
This uses the structure of $\Gamma$ and $\Gamma'$ as ordered lists: a
dynamism context $\Phi : \Gamma \dynr \Gamma'$ implies that $\Gamma$
and $\Gamma'$ have the same length and associates variables based on
their order in the context, so that $\Phi$ is uniquely determined by
$\Gamma$ and $\Gamma'$; %
\ifshort
this is sufficient because of an admissible exchange rule for terms.
\else
if we want to form a judgment $t \dynr t'$ where their contexts
are not aligned in this way, we can always use exchange on one of them
to align it with the other.
\fi
We notate dynamism contexts to evoke a logical relations
interpretation of term dynamism: under the conditions that
$x_0\dynr x_0' : A_0 \dynr A_0',\ldots$ then we have that $t
\dynr t' : B \dynr B'$.

\begin{figure}
  \begin{mathpar}
  \inferrule*[right=TyDyn-Refl]{~}{A \dynr A}\and
  \inferrule*[right=TyDyn-Trans]{A \dynr A' \and A' \dynr A''}{A \dynr A''}\and
  \inferrule*[right=TyDyn-Ax]{(A,B) \in \Sigma_1}{A\dynr B}\\
  \inferrule*{~}
             {\cdot : \cdot \dynr \cdot}\and
  \inferrule*{\Phi : \Gamma \dynr \Gamma' \and A \dynr A'}
             {(\Phi, x \dynr x' : A \dynr A') : \Gamma, x : A \dynr \Gamma', x' : A'}
  \end{mathpar}
  \caption{Type and Context Dynamism}
  \label{fig:type-ctx-prec}
\end{figure}

The term dynamism judgment admits constructions
(Figure~\ref{fig:term-dyn}) corresponding to both the structural rules
of terms and the preorder structure of type dynamism, beginning from
arbitrary term dynamism axioms \ifshort(see the extended version \cite{extended} for a
formal definition)\else(\textsc{TmDyn-Ax})\fi .
First, there is a rule (\textsc{TmDyn-Var}) that relates
variables.
Next there is a \emph{compositionality} rule (\textsc{TmDyn-Comp})
that allows us to prove dynamism judgments by breaking terms down into
components. This uses a notion of substitution dynamism $\Phi\vdash
\gamma \dynr \gamma' : \Psi$ which is the pointwise extension of term
dynamism to substitutions:
\begin{definition}
  Given $\Phi : \Gamma \dynr \Gamma'$, $\Psi : \Delta \dynr \Delta'$,
  $\gamma : \Delta \vdash \Gamma$ and $\gamma' : \Delta' \vdash
  \Gamma'$, then
  \[ \Psi \vdash \gamma \dynr \gamma' : \Phi \]
  is defined to hold when for every $x \dynr x' : A \dynr A' \in \Phi$,
  $\Psi \vdash \gamma(x) \dynr \gamma'(x') : A \dynr A'$
\end{definition}
Last, we add an appropriate form of reflexivity (\textsc{TmDyn-Refl}) and
transitivity (\textsc{TmDyn-Trans}) as rules, whose well-formedness depends on
the reflexivity and transitivity of type dynamism.
While the reflexivity rule is intuitive, the transitivity rule is
more complex.
Consider an example where $A \dynr A' \dynr A''$ and $B
\dynr B' \dynr B''$:
\[
  \inferrule{x \dynr x':A \dynr A' \vdash t \dynr t' : B \dynr B'
    \and
    x' \dynr x'' : A' \dynr A''  \vdash t' \dynr t'' : B' \dynr B''
  }{x \dynr x'' : A \dynr A'' \vdash t \dynr t'' : B \dynr B''}
\]
In a logical relations interpretation of term dynamism, we would have
relations $\dynr_{A,A'}$, $\dynr_{A',A''}$, $\dynr_{A,A''}$ and
similarly for the $B$'s, and the term dynamism judgment of the
conclusion would be interpreted as saying that for any $u
\dynr_{A,A''} u''$, $t[u/x] \dynr_{B,B''} t''[u''/x'']$.
However, we could only instantiate the premises of the judgment if we
could produce some middle $u'$ with $u \dynr_{A,A'} u'
\dynr_{A',A''} u''$.
In such models, a middle $u'$ \emph{always} exists, because an
implicit condition of the transitivity rule is that $\dynr_{A,A''}$
is the relation composite of $\dynr_{A,A'}$ and $\dynr_{A',A''}$
(the composite exists by type dynamism transitivity, and type dynamism
witnesses are unique in PTT (thin in the semantics)).
PTT itself does not give a term for this $u'$, but the upcasts and
downcasts in gradual type theory do (take it to be $\upcast{A}{A'}{u}$
or $\dncast{A'}{A''}{u''}$).

\begin{figure}
  \begin{small}
    \begin{mathpar}
    \inferrule*[right=TmDyn-Var]
        {x \dynr x' : A \dynr A' \in \Phi}
        {\Phi \vdash x \dynr x' : A \dynr A'}\and
    \inferrule*[right=TmDyn-Comp]
        {\Phi \vdash t \dynr t' : A \dynr A' \and \Psi \vdash \gamma \dynr \gamma' : \Phi}
        {\Psi \vdash t[\gamma] \dynr t'[\gamma'] : A \dynr A'}\and
    \inferrule*[right=TmDyn-Refl]
               {\Gamma \vdash t : A\and \Phi : \Gamma \dynr \Gamma}
               {\Phi \vdash t \dynr t : A \dynr A}\and
    \inferrule*[right=TmDyn-Trans]
               {\Phi : \Gamma \dynr \Gamma'  \vdash t \dynr t' : A \dynr A' \\\\
                \Phi' : \Gamma' \dynr \Gamma'' \vdash t' \dynr t'' : A' \dynr A''\\\\
                \Psi : \Gamma \dynr \Gamma''
               }
    {\Psi : \Gamma \dynr \Gamma'' \vdash t \dynr t'' : A \dynr A''}\and
    \iflong
    \inferrule*[right=TmDyn-Ax]{(t, t') \in \Sigma_3\and
      \Gamma \vdash t : A\and
      \Gamma'\vdash t' : A'\and
      \Phi : \Gamma \dynr \Gamma'\and
    }{\Phi \vdash t \dynr t' : A \dynr A'}
    \fi
  \end{mathpar}
  \end{small}
  \caption{Primitive Rules of Term Dynamism}
  \label{fig:term-dyn}
\end{figure}

\ifshort
Sometimes it is convenient to use the same variable name at the same
type in both $t$ and $t'$, so 
we sometimes write $x:A$ in a dynamism context for $x \dynr x :
A \dynr A$, and write $\Gamma$ for $x_i \dynr x_i : A_i
\dynr A_i$ for all $x_i:A_i$ in $\Gamma$.  Similarly, we write $A$
as the conclusion of a dynamism judgment for $A \dynr A$, so
$\Gamma \vdash t \dynr t' : A$ means
$\Gamma \dynr \Gamma \vdash t \dynr t' : A \dynr A$.  
\else
We also introduce some convenient syntactic sugar for term dynamism
contexts and term dynamism, but for maximum clarity we will not use
the sugar when introducing rules, only when it shortens proofs we
present in the theory.
Sometimes it is convenient to use the same variable name at the same
type in both $t$ and $t'$ and so in such a case we simply write $x :
A$, which, in a type dynamism context is just a macro for $x
\dynr x : A \dynr A$ using the reflexivity of type dynamism.
Then with this sugar, type contexts are a subset of type dynamism
contexts.
Similarly when $t$ and $t'$ have the same output type we write $\Phi
\vdash t \dynr t' : A$ rather than the tediously long $\Phi \vdash
t \dynr t' : A \dynr A$.
\fi

\iflong
\subsection{PTT Signatures}

While gradual type theory proves that most operational rules of
gradual typing are equivalences, some must be added as axioms.
Compare Moggi's monadic metalanguage \cite{moggi91}: since it is a
general theory of monads, it is not provable that an effect is
commutative, but we can add a commutativity axiom and prove additional
consequences.
Similarly, in our type theory it is not provable without adding
additional axioms that an upcast followed by its complementary
downcast is the identity, or that the function type and product type
are disjoint.
To allow such axioms, preorder type theory is formally a \emph{family}
of type theories parameterized by a \emph{signature}; the signature is
also needed for a precise categorical semantics, because it represents
the ``generating data'' of a specific model.  

The signatures for preorder type theory (and, below, gradual type
theory) package together all of the base types, uninterpreted function
symbols and type and term dynamism axioms we desire.
This is mutually defined with the definition of the type theory
itself, so that for instance we can add function symbols whose
codomain is a non-base type.
%

\begin{definition}[PTT Signature]
  \label{def:ptt-signature}
  The notion of preorder type theory signature (PTT signature) is
  built as follows
  \begin{enumerate}
  \item A $0$-PTT signature is a set, and elements are called
    \emph{base types}.
  \item For a $0$-PTT signature $\Sigma_0$, $PTT_0(\Sigma_0)$ is the
    set of types generated by that signature and the rules of preorder
    type theory.
  \item A $1$-PTT Signature relative to a $0$-PTT signature $\Sigma_0$
    is a subset of $PTT_0(\Sigma_0)^2$, and elements are called
    \emph{type dynamism axioms}.
  \item A $2$-PTT Signature relative to a $0$-PTT signature
    $\Sigma_0$ is a set $\Sigma_2$ with functions $\src :
    \Sigma_2 \to PTT_0(\Sigma_0)^*$ and $\tgt : \Sigma_2 \to
    \PTT_0(\Sigma_0)$, and whose elements are called \emph{function
      symbols}.\footnote{technically the dependency on $\Sigma_1$ is trivial here, but is needed when we extend to GTT signatures.}
    We define $\Sigma_2(A_0,\ldots;B) = \{ t \in \Sigma_2 \alt \src(t) = A_0,\ldots \wedge \tgt(t) = B \}$
  \item For $0,1,2$-PTT signatures $\Sigma_0,\Sigma_1,\Sigma_2$,
    define $PTT_1(\Sigma_0,\Sigma_1,\Sigma_2)$ to be the set of all
    terms in PTT generated from the signatures.
  \item A $3$-PTT Signature $\Sigma_3$ relative to $0,1,2$-signatures
    $\Sigma_0,\Sigma_1,\Sigma_2$ is a set
    \[ \Sigma_3 \subseteq PTT_1(\Sigma_0,\Sigma_1,\Sigma_2)^2 \]
    such that if $(t, t') \in
    \Sigma_3$ and $\Gamma \vdash t : A$ and $\Gamma' \vdash t' : A'$,
    then it is derivable using $\Sigma_0,\Sigma_1,\Sigma_2$ that
    $\Gamma \dynr \Gamma'$ and $A \dynr A'$. Elements of
    $\Sigma_3$ are called \emph{term dynamism axioms}.
  \item Finally a PTT signature is a tuple of $0,1,2,3$-PTT signatures
    $(\Sigma_0,\Sigma_1,\Sigma_2,\Sigma_3)$, each relative to the
    previous signatures.
  \end{enumerate}
\end{definition}
\fi

\iflong\subsection{Gradual Type Theory}
\else\subparagraph*{Gradual Type Theory} \fi
\label{sec:gtt:dyn-cast}

Preorder Type Theory gives us a simple foundation with which to build
Gradual Type Theory in a modular way: we can characterize different
aspects of gradual typing, such as a dynamic type, casts, and type
errors separately.  

\iflong\subsubsection{Casts}\fi
We start by defining upcasts and downcasts, 
using type and term dynamism in Figure~\ref{fig:casts}.  
Given that $A \dynr A'$, the upcast is a function from $A$ to
$A'$ such that for any $t : A$, $\upcast{A}{A'}{t}$ is the
\emph{least dynamic term of type $A'$ that is at least as dynamic as $t$}.
The \textsc{UR} rule can be thought of as the ``introduction rule'',
saying $\upcast{A}{A'}{x}$ is more
dynamic than $x$, and then \textsc{UL} is the ``elimination rule'',
saying that if some $x' : A'$ is more dynamic than $x:A$, then it
is more dynamic than $\upcast{A}{A'}{x}$ --- since $\upcast{A}{A'}{x}$
is the \emph{least} dynamic term with this property.
The rules for projections are dual, ensuring that for $x' : A'$,
$\dncast{A}{A'}{x'}$ is the most dynamic term of type $A$ that
is less dynamic than $x'$.
\iflong

\fi
In fact, combined with the \textsc{TmDyn-Trans} rule, we can show that
it has a slightly more general property: $\upcast{A}{A'}{x}$ is not just
less dynamic than any term of type $A'$ more dynamic than $x$, but is
less dynamic than any term of type $A'$ \emph{or higher}, i.e. of type
$A'' \sqsupseteq A'$.
\iflong
Indeed, it is often convenient to use the following sequent-calculus
style rules (everything in the conclusion is fully general, except for
one cast), which are derivable using the
\textsc{TmDyn-Trans} and \textsc{TmDyn-Comp}
\begin{mathpar}
  \inferrule*[right=UR(S)]
             { \Phi \vdash t \dynr t' : A \preciser A' \and A' \dynr A''}
             {\Phi \vdash t \dynr \upcast{A'}{A''}{t'} : A \dynr A''}\and
  \and
  \inferrule*[right=UL(S)]
  { \Phi \vdash t \dynr t'' : A \dynr A'' \\\\ A \dynr A' \and A' \dynr A''}
  { \Phi \vdash \upcast{A}{A'}{t} \dynr t'' : A' \dynr A''}

  \inferrule*[right=DL(S)]
  {\Phi \vdash t' \dynr t'' : A' \dynr A'' \and A \dynr A'}
  {\Phi \vdash \dncast{A}{A'}{t'} \dynr t'' : A \dynr A''}\and

  \inferrule*[right=DR(S)]
  {\Phi \vdash t \dynr t'' : A \dynr A'' \\\\ A \dynr A' \and A' \dynr A''
  }
  {\Phi \vdash t \dynr \dncast{A'}{A''}{t''} : A \dynr A'}\and
\end{mathpar}
In particular, the upcast is left-invertible, and the downcast is
right-invertible (which agrees with their status as left and right
adjoints discussed below).

Though when read ``top-down'' the \textsc{UL(S)} and
\textsc{DR(S)} rules have more side-conditions in them than the
\textsc{UR(S)},\textsc{DL(S)} rules, when read ``bottom-up'', they
require fewer assumptions.
That is, in \textsc{UL(S)} (and similarly \textsc{DR(S)}) if we know the
conclusion is \emph{well-formed}, as we would when constructing a
proof, then the assumption that $A \dynr A'$ follows from the fact
that the upcast $\upcast{A}{A'}t$ is well-formed, $A' \dynr A''$
follows from the typing, and finally $A \dynr A''$ follows by
transitivity.
On the other hand, in \textsc{UR(S)} (similarly \textsc{DL(S)}), for the
premise to be well-formed, we need to know that $A \dynr A'$ which
does \emph{not} follow from the well-formedness of the conclusion.
So from a proof-construction standpoint, we can always apply a rule
when we have an upcast on the left or a downcast on the right, but we
must check a side-condition when we have an upcast on the right or
downcast on the left.
\fi

As we will discuss in Section~\ref{sec:theorems}, these rules allow us
to prove that the pair of the upcast and downcast form a \emph{Galois
  connection} (adjunction), meaning $\upcast{A}{A'}{\dncast{A}{A'}{t}}
\dynr t$ and $t \dynr \dncast{A}{A'}{\upcast{A}{A'}{t}}$.
However in existing gradually typed languages, the casts satisfy the
stronger condition of being a \emph{Galois insertion}, in which the
left adjoint, the downcast, is a \emph{retract} of the upcast, meaning
$t \equidyn \dncast{A}{A'}{\upcast{A}{A'}{t}}$.  We can restrict to
Galois insertions by adding the \emph{retract axiom}
\textsc{Retract}. Most theorems of gradual type theory do not
require it, though this axiom is satisfied in all models of preorder
type theory in Section~\ref{sec:models}.
\begin{figure}
  \begin{mathpar}
  \inferrule
      {\Gamma \vdash t : A \and A \dynr A'}
      {\Gamma \vdash \upcast{A}{A'}{t} : A'}\and

  \inferrule
  {\Gamma \vdash t : A' \and A \dynr A'}
  {\Gamma \vdash \dncast{A}{A'}{t} : A}
  \\
  \inferrule*[right=UR]
             {A \dynr A'}
             {x \dynr x : A \dynr A \vdash x \dynr \upcast{A}{A'}{x} : A \dynr A'}

  \inferrule*[right=DL]
  {A \dynr A'}
  {x' \dynr x' : A' \dynr A' \vdash \dncast{A}{A'}{x'} \dynr x' : A \dynr A'}

  \inferrule*[right=UL]
  {A \dynr A'}
  {x \dynr x' : A \dynr A' \vdash \upcast{A}{A'}{x} \dynr x' : A' \dynr A'}

  \inferrule*[right=DR]
  {A \dynr A'}
  {x \dynr x' : A \dynr A' \vdash x \dynr \dncast{A}{A'}{x'} : A \dynr A}
  
  \inferrule*[right=Retract]
   {A \dynr A'}
   {x : A \dynr x : A \vdash \dncast{A}{A'}{\upcast{A}{A'}{x}} \dynr x : A}
   \\
   \inferrule*{~}{\dyn \type}\and
   \inferrule*[right=$\dyn$Top]{~}{A \dynr \dyn}\and
   \inferrule*{~}{\Gamma \vdash \err_A : A}\and
   \inferrule*[right=$\err$Bot]{\Phi : \Gamma \preciser \Gamma}{\Phi \vdash \err_A \dynr t : A}
  \end{mathpar}
\caption{Upcasts, Downcasts, Dynamic Type and Type Error}
\label{fig:casts}
\end{figure}



\iflong\subsubsection{Dynamic Type and Type Errors} \fi
The remaining rules in Figure~\ref{fig:casts} define the dynamic type
and type errors, which are also given a universal property in terms of
type and term dynamism.
The dynamic type is defined as the most dynamic type (\textsc{$\dyn$Top}).
The type error, written as $\err$, is defined by the fact that it is a
constant at every type $A$ that is a least element of $A$ (\textsc{$\err$Bot}).
By transitivity, this further implies that $\err_{A} \dynr t : A
\dynr A'$ for any $A' \sqsupseteq A$.

\iflong \subsubsection{Negative Connectives} \fi
Next we illustrate how simple negative types can be defined in
preorder type theory in Figure~\ref{fig:negative}.

The first portion of the figure presents the rules for function types.
First, we have a rule to say a function type is well-formed, and
\textsc{${\to}$Mon} states that function types are monotone in
\emph{both} arguments with respect to term dynamism, following
previous work on type dynamism
\cite{henglein94:dynamic-typing,wadler-findler09,refined}.
Because of this, type dynamism is sometimes referred to as ``na\"ive
subtyping''.
See Section~\ref{sec:models:coreflections} for a semantic
explanation of the meaning of \textsc{${\to}$Mon}.
Next we have standard typing rules for $\lambda$ and application, and
corresponding monotonicity rules \textsc{$\lambda$Mon} and
\textsc{AppMon} for term dynamism.
Finally, we have call-by-name $\beta$ and $\eta$ rules, which we
present as equi-dynamism: we write $\equidyn$ to mean a rule exists in
each direction.

Next, the presentation of product types is much the same:
well-formedness and monotonicity of the type constructor, standard
introduction and elimination rules with corresponding monotonicity
rules and finally call-by-name $\beta\eta$ rules.
Lastly, we have three rules for the unit type.
First, we have a well-formedness rule, the corresponding monotonicity rule
is unnecessary because it follows from reflexivity of type dynamism (\textsc{TyDyn-Refl}).
Next, we have introduction, whose monotonicity follows from
reflexivity of term dynamism (\textsc{TmDyn-Refl}).
Finally we include the unit type's $\eta$ law. There is no $\beta$ law
because there is no elimination rule.

Specifically, we present the unit type,
products and function types in Figure \ref{fig:negative}.
The type and term constructors are the same as those in the simply
typed $\lambda$-calculus.
Each type constructor extends type dynamism in the standard way
: every
connective is \emph{monotone} in every argument, including the
function type.
Due to the covariance of the function type, For term dynamism, we add two classes of rules.
First, there are congruence rules that ``extrude'' the
term constructor rules for the type, which 
are like a ``congruence of contextual approximation'' condition.
Next, the computational rules reflect the ordinary $\beta,\eta$
equivalences as equi-dynamism: 

\begin{figure}
  \begin{mathpar}
    \inferrule*{A \type \and B \type}{A \to B \type}\and
    \inferrule*[right=$\to$Mon]
               {A \dynr A' \and B \dynr B'}
               {A \to B \dynr A' \to B'}
    \\
    \inferrule*{\Gamma, x : A \vdash t : B}
              {\Gamma \vdash \lambda x : A . t : A \to B}
    \and
    \inferrule*
        {\Gamma \vdash t : A \to B \and
        \Gamma \vdash u : A}
        {\Gamma \vdash t\,u : B}
    \\
    \inferrule*[right=$\lambda$Mon]{\Phi, x \dynr x' : A \dynr A' \vdash t \dynr t' : B \dynr B'}
    {\Phi \vdash \lambda x : A . t \dynr \lambda x' : A' . t' : A \to B \dynr A' \to B'}

    \inferrule*[right=AppMon]
        {\Phi \vdash t \dynr t' : A \to B \dynr A' \to B' \\\\
          \Phi \vdash u \dynr u' : A \dynr A'
        }
    {\Phi \vdash t\,u \dynr t'\,u' : B \dynr B'}
    \\
    \inferrule*[right=$\to\beta$]
        {\Phi : \Gamma \dynr \Gamma\\\\
          \Gamma \vdash t : A \to B\and
          \Gamma \vdash u : A
        }
        {\Phi \vdash  (\lambda x : A. t) u \equidyn t[u/x] : B \dynr B}
    \and
    \inferrule*[right=$\to\eta$]
    {\Phi : \Gamma \dynr \Gamma \\\\
      \Gamma \vdash t : A \to B}
    {\Phi \vdash t \equidyn (\lambda x : A. t\,x) : A \to B
      \dynr A \to B}\\\\
    \inferrule*{A_1 \type \and A_2 \type}{A_1 \times A_2 \type}\and
    \inferrule*[right=$\times$Mon]{A_1 \dynr A_1' \and A_2 \dynr A_2'}
              {A_1 \times A_2 \dynr A_1' \times A_2'}\\
    \inferrule*{\Gamma \vdash t_1 : A_1 \and \Gamma \vdash t_2 : A_2}
               {\Gamma \vdash (t_1, t_2) : A_1 \times A_2}
    \and
    \inferrule*{\Gamma \vdash t : A_1 \times A_2 \and i \in {1,2}}
              {\Gamma \vdash \pi_i t : A_i}
    \\
    \inferrule*[right=PairMon]
    {\Phi \vdash t_1 \dynr t_1' : A_1 \dynr A_1' \and
      \Phi \vdash t_2 \dynr t_2' : A_2 \dynr A_2'}
    {\Phi \vdash (t_1, t_2) \dynr (t_1',t_2') : A_1 \times A_2 \dynr A_1' \times A_2'}

    \inferrule*[right=PrjMon]
    {\Phi \vdash t \dynr t' : A_1 \times A_2 \dynr A_1' \times A_2'}
    {\Phi \vdash \pi_i t \dynr \pi_i t' : A_i \dynr A_i'}
    \\
    \inferrule*[right=${\times}\beta$]{i \in \{1,2\}}{\Gamma \vdash \pi_i(t_1,t_2) \equidyn t_i : A_i}
    \and
    \inferrule*[right=${\times}\eta$]{~}{\Gamma \vdash t \equidyn (\pi_1 t, \pi_2 t) : A_1 \times A_2}
    \\
    \inferrule*{~}{1 \type}\and
    \inferrule*{~}{\Gamma \vdash () : 1} \and
    \inferrule*[right=$1\eta$]{~}{\Gamma \vdash t \equidyn () : 1}    
  \end{mathpar}
  \caption{Function, Product and Unit Types}
  \label{fig:negative}
\end{figure}

We call the accumulation of all of these connectives \emph{gradual type
  theory}.
\ifshort
In the extended version \cite{extended}, we define a
GTT signature, which gives axioms for base types, function symbols,
type dynamism, and term dynamism, which all may make use of the
dynamic type, casts, type error, function types and product types, in
addition to the rules of PTT.
\else
A gradual type theory signature is a PTT signature where each
declaration can additionally use the structure of gradual type theory:
\begin{definition}[GTT Signature]
  A GTT signature $(\Sigma_0,\Sigma_1,\Sigma_2,\Sigma_3)$ is a PTT
  signature, where each declaration may make use of the rules for dynamic type,
  casts, type error, functions, products and unit types, in addition to
  the rules of PTT.
\end{definition}
\fi

\ifshort \input{theorems-short}
\else \section{Theorems and Constructions in Gradual Type Theory}
\label{sec:theorems}

In this section, we discuss some of the consequences of the axioms of
gradual type theory in the form of \emph{theorems} derivable in
gradual type theory.
These theorems come in the form of term orderings and equivalences.
If we accept that gradual type theory is a reasonable axiomatization
of a gradually typed language satisfying $\beta\eta$ equivalence and
graduality, then the equivalences we derive here imply program
equivalences in any such language.
We divide the theorems we show into two groups. First, we present the
``reductions'', i.e., equivalences that correspond to operational
reductions in gradually typed languages.
Second, we present theorems that are not operational reductions, but
more abstract properties of casts, that help us relate to the
denotational semantics and other presentations of graduality.

\subsection{Cast Reduction Theorems}

We now present several theorems that are typically the part of the
operational semantics of a gradually typed language.
Since we are able to \emph{derive} these as theorems, this shows that
they are essential components of a language satisfying $\beta\eta$ and
graduality.
For instance, if any of these program equivalences is violated, then a
language must violate $\beta,\eta$ or graduality.

First we show that the upcast and downcast from a type to itself are
the identity function.
\begin{theorem}[Identity Casts]
  \label{case:id}  
  ${\upcast{A}{A}{t} \equidyn t}$ and ${\dncast{A}{A}{t} \equidyn t}$.
\end{theorem}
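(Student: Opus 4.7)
The plan is to derive each identity directly from the universal-property rules \textsc{UR}, \textsc{UL}, \textsc{DR}, \textsc{DL} of Figure~\ref{fig:casts}, specialized to $A \dynr A$ (which holds by \textsc{TyDyn-Refl}), together with reflexivity of term dynamism \textsc{TmDyn-Refl} and the compositionality rule \textsc{TmDyn-Comp}.

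For the upcast, I would first apply \textsc{UR} at $A \dynr A$, obtaining $x \dynr x : A \dynr A \vdash x \dynr \upcast{A}{A}{x} : A \dynr A$. Substituting $t$ for $x$ via \textsc{TmDyn-Comp} (using $t \dynr t$ by \textsc{TmDyn-Refl}) yields one direction: $t \dynr \upcast{A}{A}{t}$. For the other direction, apply \textsc{UL} at $A \dynr A$ to get $x \dynr x' : A \dynr A \vdash \upcast{A}{A}{x} \dynr x' : A \dynr A$, then substitute $t$ for both $x$ and $x'$ (again using \textsc{TmDyn-Refl} to supply the hypothesis $t \dynr t$) to obtain $\upcast{A}{A}{t} \dynr t$. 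Combining the two gives $\upcast{A}{A}{t} \equidyn t$.

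The argument for $\dncast{A}{A}{t} \equidyn t$ is entirely dual, using \textsc{DL} in place of \textsc{UR} and \textsc{DR} in place of \textsc{UL}: \textsc{DL} at $A \dynr A$ with $x' := t$ gives $\dncast{A}{A}{t} \dynr t$, while \textsc{DR} at $A \dynr A$ with both variables set to $t$ gives $t \dynr \dncast{A}{A}{t}$.

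There is no real obstacle here; the result is essentially an instance of the fact that a least upper bound of $\{t\}$ in the dynamism preorder (and dually a greatest lower bound) is $t$ itself. The only subtlety is the bureaucratic one of instantiating the \textsc{UL}/\textsc{DR} rules, whose premise requires a cross-type dynamism hypothesis $x \dynr x' : A \dynr A'$, in the degenerate case $A = A'$ where this collapses to $t \dynr t$ obtainable from \textsc{TmDyn-Refl}.
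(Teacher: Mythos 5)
Your proof is correct and is essentially the paper's own argument: one direction of each equivalence is an instance of \textsc{UR} (resp.\ \textsc{DL}) at $A \dynr A$, the other an instance of \textsc{UL} (resp.\ \textsc{DR}), with the paper merely leaving the final substitution of $t$ for the variable(s) implicit where you spell it out via \textsc{TmDyn-Comp} and \textsc{TmDyn-Refl}.
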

\proof
The intuition is simple: given $t : A$, $t$ \emph{itself} is the least
dynamic element of $A$ that is at least as dynamic as $t$.
For a formal proof, we show that $x : A$ and $\upcast{A}{A}{x}$ are
equi-dynamic and each direction is an instance of \textsc{UL} or
\textsc{UR}:
\begin{mathpar}
  \inferrule*[right=UR]
      {~}
      {x : A \vdash x \dynr \upcast{A}{A}{x} : A}\and

  \inferrule*[right=UL]
      {~}
      {x : A \preciser x : A \vdash \upcast{A}{A}{x} \dynr x : A \preciser A}
\end{mathpar}
The downcast has a perfectly dual proof.
\qed

Since this is our first example of using the cast term dynamism rules,
it is instructive to note that, given $A \preciser A'$ so that
$\upcast{A}{A'}{}$ is well-defined, we \emph{cannot} show that
$\upcast{A}{A'}{x} \dynr x$ analogously to the second derivation
\[
  \inferrule*[right=not an instance of UL]
      {~}
      {x : A \dynr y : A \vdash \upcast{A}{A'}{x} \dynr y : A' \preciser A}
\]
because the conclusion violates the presupposition of the judgment,
which would require $A' \dynr A$, and is moreover not an instance of
$\textsc{UL}$, which would require $y$ to have
type $A'$, not type $A$.  That is, the existence of appropriate type
dynamism relations is crucial to these rules, so it is important to be
careful about the types involved.

Next, we show that if $A \dynr A' \dynr A''$, then the upcast
from $A$ to $A''$ factors through $A'$, and dually for the downcast
from $A''$ to $A$.
This justifies the operational rule familiar in gradual typing that
separates the function contract into the ``higher-order'' part that
proxies the original function and the ``first-order'' tag checking:
\[ \upcast{A\to B}{\dyn}{t} \mapsto \upcast{\dyn\to\dyn}{\dyn}{\upcast{A\to B}{\dyn \to \dyn}{t}} \]
More generally, it implies that casts from $A$ to $A'$ where $A \dynr A'$ commute
over the dynamic type, e.g.
$\upcast{A'}{?}{\upcast{A}{A'}{x}} \equidyn \upcast{A}{?}{x}$---intuitively,
if casts only perform checks, and do not change values, then a value's
representation in the dynamic type should not depend on how it got
there.  This can also justify some \emph{optimizations} of gradual programs,
collapsing multiple casts into one.
This property, combined with the identity property, also says that
upcasts and downcasts form respective \emph{subcategories} of
arbitrary terms (the composition of two upcasts (downcasts) is an
upcast (downcast) and identity terms are also upcasts and downcasts),
and that the upcasts and downcasts each determine functors from the
category of types and type dynamism relations to the category of types
and terms.

\begin{theorem}[Casts (De-)Compose]
  \label{case:comp}
  If $A \dynr A' \dynr A''$, then
  $\upcast{A}{A''}{t} \equidyn \upcast{A'}{A''}{\upcast{A}{A'}{t}}$
  and dually,
  $\dncast{A}{A''}{t}\equidyn{\dncast{A}{A'}{\dncast{A'}{A''}{t}}}$.
\end{theorem}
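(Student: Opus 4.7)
The plan is to prove both directions of each equi-dynamism by applying the appropriate universal property rule (\textsc{UL} for upcasts, \textsc{DR} for downcasts), reducing to strictly smaller term dynamism judgments that follow from \textsc{UR}/\textsc{DL} and transitivity. The downcast case is entirely dual to the upcast case, so I will focus on the upcast.

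For the ``$\dynr$'' direction, I would show
\inlineoronline{\upcast{A}{A''}{t} \dynr \upcast{A'}{A''}{\upcast{A}{A'}{t}} : A'' \dynr A'' }
by applying \textsc{UL(S)} at the type dynamism $A \dynr A''$ (obtained from $A \dynr A' \dynr A''$ by \textsc{TyDyn-Trans}). This reduces the goal to showing
\inlineoronline{t \dynr \upcast{A'}{A''}{\upcast{A}{A'}{t}} : A \dynr A'', }
which I would derive from two uses of \textsc{UR(S)}: first $t \dynr \upcast{A}{A'}{t} : A \dynr A'$, then applying \textsc{UR(S)} on the right with $A' \dynr A''$ to raise the right-hand side to $\upcast{A'}{A''}{\upcast{A}{A'}{t}} : A''$. (Equivalently, one can view this as two applications of \textsc{UR} chained by \textsc{TmDyn-Trans}.)

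For the ``$\sqsupseteq$'' direction, I would show
\inlineoronline{\upcast{A'}{A''}{\upcast{A}{A'}{t}} \dynr \upcast{A}{A''}{t} : A'' \dynr A''}
by a double application of \textsc{UL(S)}. First apply \textsc{UL(S)} at $A' \dynr A''$ to peel off the outer upcast, reducing the goal to
\inlineoronline{\upcast{A}{A'}{t} \dynr \upcast{A}{A''}{t} : A' \dynr A''.}
Then apply \textsc{UL(S)} at $A \dynr A'$ to peel off the inner upcast, reducing the goal to $t \dynr \upcast{A}{A''}{t} : A \dynr A''$, which is exactly \textsc{UR} at the composite $A \dynr A''$.

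The downcast equi-dynamism is proved dually, swapping \textsc{UL}/\textsc{UR} for \textsc{DR}/\textsc{DL} and reversing the roles of the endpoints; the chain $A \dynr A' \dynr A''$ is used in the same way. I do not expect a genuine obstacle: the only delicate point is bookkeeping the types at which each rule fires, since the presuppositions of the term dynamism judgments require the underlying type dynamism relation to match up. Using the sequent-style derived rules \textsc{UR(S)}, \textsc{UL(S)}, \textsc{DL(S)}, \textsc{DR(S)} makes this mechanical, because they already package \textsc{TmDyn-Trans} and \textsc{TmDyn-Comp} with the basic universal property so one can substitute the general term $t$ for the bound variable $x$ in the axiomatic \textsc{UR}/\textsc{UL}/\textsc{DL}/\textsc{DR} rules without extra work.
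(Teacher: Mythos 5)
Your proof is correct and is essentially identical to the paper's: both directions use exactly the same derivations (one \textsc{UL(S)} over two \textsc{UR(S)} applications for the forward inequality, and two \textsc{UL(S)} applications over a single \textsc{UR} at the composite $A \dynr A''$ for the reverse), with the downcast case handled dually. You merely describe the trees top-down as goal reductions rather than bottom-up, which changes nothing.
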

\proof
The proofs are dual, so we only show the argument for upcasts.
We want to show $\upcast{A}{A''}{x}\equiprecise
\upcast{A'}{A''}{\upcast{A}{A'}{x}}$.
On the one hand, to show something of type $A''$ is more dynamic than
$\upcast{A}{A''}{x}$, we just have to show that it is more dynamic
than $x$, which is true of $\upcast{A'}{A''}{\upcast{A}{A'}{x}}$.
The other direction is similar, first we peel off
$\upcast{A'}{A''}{\cdot}$ and then $\upcast{A}{A'}{\cdot}$.
More formally, assuming $A \dynr A' \dynr A''$, the following are valid
derivations:
\begin{mathpar}
  \inferrule*[right=UL(S)]
  {\inferrule*[right=UR(S)]
    {\inferrule*[right=UR]
      {~}
      {x : A \vdash x \dynr \upcast{A}{A'}{x} : A \dynr A'}}
    {x : A\vdash x \dynr \upcast{A'}{A''}{\upcast{A}{A'}{x}} : A \dynr A''}}
  {x : A \vdash \upcast{A}{A''}{x} \dynr \upcast{A'}{A''}{\upcast{A}{A'}{x}} : A''}\and

  \inferrule*[right=UL(S)]
  {\inferrule*[right=UL(S)]
    {\inferrule*[right=UR]
      {~}
      {x : A \vdash x \dynr \upcast{A}{A''}{x} : A \dynr A''}}
    {x : A \vdash \upcast{A}{A'}{x} \dynr \upcast{A}{A''}{x} : A' \dynr A''}}
  {x : A \vdash \upcast{A'}{A''}{\upcast{A}{A'}{x}}\dynr \upcast{A}{A''}{x} : A''}
\end{mathpar}
\qed

Next, we show the most important theorems, which state that casts between
function types must be implemented by the functorial action of the
function type on casts (and the analogous case for products).
This reproduces the standard ``wrapping'' implementation of function
and product casts \cite{findler-felleisen02, findler-blume06}.
Each proof is modular (depends on no more type or term constructors
other than those related to function and product types respectively).

\begin{theorem}[Function and Product Cast Reductions]
  Whenever $A \dynr A'$ and $B \dynr B'$, the following are all
  satisfied.
  \begin{enumerate}
  \item $\upcast{A \to B}{A'\to B'}{t} \equidyn \lambda x:A'. \upcast{B}{B'}{(t (\dncast{A}{A'}{x}))}$
  \item $\dncast{A \to B}{A'\to B'}{t} \equidyn \lambda x:A. \dncast{B}{B'}{(t (\upcast{A}{A'}{x}))}$.
  \item $\upcast{A_0 \times A_1}{A_0' \times A_1'}{t} \equidyn (\upcast{A_0}{A_0'}{\pi_0 t}, \upcast{A_1}{A_1'}{\pi_1 t})$
  \item $\dncast{A_0 \times A_1}{A_0' \times A_1'}{t} \equidyn (\dncast{A_0}{A_0'}{\pi_0 t}, \dncast{A_1}{A_1'}{\pi_1 t})$.
  \end{enumerate}
\end{theorem}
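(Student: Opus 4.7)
The plan is to prove each equidynamism by establishing the two $\dynr$-directions independently, using the universal properties of casts (\textsc{UL}, \textsc{UR}, \textsc{DL}, \textsc{DR} and their sequent-style variants) together with $\eta$-expansion of the appropriate type former and the congruence rules for term constructors. All four cases follow the same high-level pattern, so I will outline cases (1) and (3) in detail; cases (2) and (4) are dual.

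For case (1), to show $\upcast{A\to B}{A'\to B'}{t} \dynr \lambda x{:}A'.\,\upcast{B}{B'}{(t(\dncast{A}{A'}{x}))}$, first apply \textsc{UL(S)} to strip the outer upcast, reducing the goal to a comparison at $A\to B \dynr A'\to B'$. Then use $\to\eta$ to rewrite $t$ as $\lambda x{:}A.\,t\,x$ and apply $\lambda$Mon, moving under the binder with dynamism context $x \dynr x' : A \dynr A'$. The remaining obligation $t\,x \dynr \upcast{B}{B'}{(t(\dncast{A}{A'}{x'}))}$ is discharged by \textsc{UR(S)} followed by \textsc{AppMon}, using reflexivity on $t$ and \textsc{DR} to obtain $x \dynr \dncast{A}{A'}{x'} : A$. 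For the reverse direction, $\eta$-expand the upcast on the right as $\lambda x{:}A'.\,(\upcast{A\to B}{A'\to B'}{t})\,x$, apply $\lambda$Mon, then \textsc{UL(S)} to expose $t(\dncast{A}{A'}{x})$ on the left. Conclude by \textsc{AppMon} using \textsc{UR} for $t \dynr \upcast{A\to B}{A'\to B'}{t}$ and \textsc{DL} for $\dncast{A}{A'}{x} \dynr x'$.

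Cases (3) and (4) follow the same recipe with products. For the forward direction of (3), \textsc{UL(S)} reduces the goal to $t \dynr (\upcast{A_0}{A_0'}{\pi_0 t}, \upcast{A_1}{A_1'}{\pi_1 t}) : A_0 \times A_1 \dynr A_0' \times A_1'$; rewriting $t$ by $\times\eta$ as $(\pi_0 t, \pi_1 t)$ and applying \textsc{PairMon} reduces to two componentwise instances of \textsc{UR}. For the reverse direction, $\times\eta$-expand the upcast on the right, apply \textsc{PairMon} componentwise, then \textsc{UL(S)} followed by \textsc{PrjMon}, which leaves $t \dynr \upcast{A_0\times A_1}{A_0'\times A_1'}{t}$, an instance of \textsc{UR}. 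Case (2) is obtained from (1) by swapping \textsc{UR}/\textsc{UL} with \textsc{DL}/\textsc{DR} and \textsc{UL(S)}/\textsc{UR(S)} with \textsc{DR(S)}/\textsc{DL(S)}, while $\eta$-expanding on the opposite side; case (4) is the product dual.

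The main obstacle is not conceptual but notational bookkeeping: each sequent-style cast rule carries a type dynamism side-condition, and each congruence step must be applied over a specific asymmetric dynamism context. In particular one must choose carefully which side of the judgment to $\eta$-expand and whether to apply \textsc{UL} versus \textsc{UL(S)} (and similarly for the other casts), so that every branch terminates at one of the axioms \textsc{UR}, \textsc{UL}, \textsc{DL}, \textsc{DR} rather than reintroducing a cast that needs further analysis.
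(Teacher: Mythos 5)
Your proposal is correct and follows essentially the same route as the paper: your derivations coincide step-for-step with the ``direct'' proofs the paper gives in Figure~\ref{fig:negative-upcast}, which use only $\eta$, the congruence rules, and the sequent-style cast rules (\textsc{UL(S)}/\textsc{UR(S)} to strip the cast on the appropriate side, then \textsc{AppMon}/\textsc{PairMon}/\textsc{PrjMon} down to \textsc{UR}, \textsc{DL}, or \textsc{DR}). The only difference is presentational: the paper's main narrative first routes through a derived extensionality principle (\textsc{fun-ext}), which additionally requires $\beta$, before noting that the $\eta$-only derivations you describe suffice.
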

\proof

To make the function proof easier to understand, we first derive a
higher-level ``extensionality principle'', which the reader may find
to be more intuitive than the $\eta$ principle: a function is
less dynamic than another if applying it to a less dynamic input
yields a less dynamic result:
\begin{mathpar}
    \inferrule*[right=fun-ext]
    {\Phi, x \dynr x' : A \dynr A' \vdash t x \dynr t' x' : B \dynr B'}
    {\Phi \vdash t \dynr t' : A \to B \dynr A' \to B'}
\end{mathpar}
It follows from the $\eta$ principles and the
congruence rules for $\lambda$:
  \begin{mathpar}
    \inferrule*[right=tmprec-trans]
    {t \dynr \lambda x. t x \and
      \lambda x'. t' x' \dynr t'\and
      \inferrule*
      {\Phi, x \dynr x' : A \dynr A' \vdash t x \dynr t' x' : B \dynr B'}
      {\Phi \vdash \lambda x. t x \dynr \lambda x'. t' x' : A \to B \dynr A' \to B'}
    }
    {\Phi \vdash t \dynr t' : A \to B \dynr A' \to B'}\\
  \end{mathpar}

For the function contract, we need to show
\[
\upcast{A\to B}{A'\to B'}{f} \equiprecise \funupcast{A}{B}{A'}{B'}{f}{x'}.
\]
First to show $\dynr$, it is sufficient to show that the right
hand side is more dynamic than $f$ itself.
Next we invoke the extensionality principle (\textsc{fun-ext}) and
$\beta$ and then we have to show that $x \dynr x' : A \dynr A'
\vdash f x \dynr \upcast{B}{B'}{(f (\dncast{A}{A'}{x'}))}$.
This follows from congruence of application and the rules of casts.
As a derivation tree:
\begin{mathpar}
  \inferrule*
  {
    \inferrule*
    {
      \inferrule*
      {
        \inferrule*
        {
          \inferrule*
          {f, x \dynr x' \vdash f \dynr f \and
            \inferrule*
            {f, x \dynr x' \vdash x \dynr x'}
            {f, x \dynr x' \vdash x \dynr \dncast{A}{A'}x'}
          }
          {f, x \dynr x' \vdash f x \dynr f (\dncast{A}{A'}x')}
        }
        {f, x \dynr x' \vdash f x \dynr \upcast{B}{B'}(f (\dncast{A}{A'}x'))}
      }
      {f, x \dynr x' : A \dynr A' \vdash f x \dynr (\funupcast{A}{B}{A'}{B'}{f}{x'})x'}
    }
    {f \vdash f \dynr \funupcast{A}{B}{A'}{B'}{f}{x'}}
  }
  {f : A\to B \vdash \upcast{A\to B}{A'\to B'}{f} \dynr \funupcast{A}{B}{A'}{B'}{f}{x'}}
\end{mathpar}

For the opposite direction, we invoke the extensionality principle and
$\beta$ reduce, then needing to show $\upcast{B}{B'}(f
(\dncast{A}{A'}x')) \dynr (\upcast{A \to B}{A'\to B'}f)x'$. We can
remove the upcast from the left and then use the congruence rules.
As a derivation tree:
\begin{mathpar}
  \inferrule
  {
    \inferrule
    {
      \inferrule
      {
        \inferrule
        {
          \inferrule
          {f, x' \vdash f \dynr f}
          {f, x' \vdash f \dynr \upcast{A\to B}{A'\to B'}{f}}
          \and
          \inferrule
          {f , x' \vdash x' \dynr x'}
          {f , x' \vdash\dncast{A}{A'}x' \dynr x'}
        }
        {f, x' \vdash f (\dncast{A}{A'}x') \dynr (\upcast{A\to B}{A'\to B'}{f})x'}
      }
      {f, x' \vdash\upcast{B}{B'}(f (\dncast{A}{A'}x')) \dynr (\upcast{A\to B}{A'\to B'}{f})x'}
    }
    {f, x' : A' \vdash (\funupcast{A}{B}{A'}{B'}{f}{x'})x'\dynr (\upcast{A\to B}{A'\to B'}{f})x'}
  }
  {f : A\to B \vdash \funupcast{A}{B}{A'}{B'}{f}{x'}\dynr \upcast{A\to B}{A'\to B'}{f}}
\end{mathpar}

The downside of using the extensionality principle is that we need to
use $\beta$ reduction, when in actuality we only need to use $\eta$ in
the proof. In figure~\ref{fig:negative-upcast}, we present ``direct''
proofs for function and product upcasts that use only $\eta$
equivalence, and not the extensionality principle or $\beta$. The
proofs for downcasts are exactly dual.
\begin{figure}
  \begin{small}
\begin{mathpar}
    \inferrule
          {\inferrule*
            {f \dynr \lambda x. f x \and
              \inferrule*
                  {\inferrule*
                    {\inferrule*
                      {f, x \dynr x' \vdash f \dynr f : A \to B \dynr A \to B \and
                        \inferrule*
                            {f, x \dynr x' \vdash x \dynr x' : A \dynr A'}
                            {f, x \dynr x' \vdash x \dynr \dncast{A}{A'}{x'} : A \dynr A}
                      }
                      {f, x \dynr x' \vdash f x \dynr f (\dncast{A}{A'}{x'}) : B \dynr B}}
                    {f, x \dynr x' \vdash f x \dynr \upcast{B}{B'}{(f (\dncast{A}{A'}{x'}))} : B \dynr B'}}
                  {\lambda x. f x \dynr \funupcast{A}{B}{A'}{B'}{f}{x'}}
            }
            {f \vdash f \dynr \funupcast{A}{B}{A'}{B'}{f}{x'} : A \to B \dynr A' \to B'}}
          {f : A \to B \vdash \upcast{A\to B}{A'\to B'}f \dynr \funupcast{A}{B}{A'}{B'}{f}{x'} : A' \to B'}    \\\\
  \end{mathpar}    
  \end{small}  
  \begin{mathpar}
          \inferrule*[right=$\mathcal{D}$]
                {\inferrule*
                  {\inferrule*
                    {f \vdash f \dynr (\upcast{A\to B}{A'\to B'}f) \and
                      x' \vdash \dncast{A}{A'}{x'} \dynr x'
                    }
                    {f, x' \vdash f (\dncast{A}{A'}{x'}) \dynr (\upcast{A\to B}{A'\to B'}f) x'}}
                  {f, x' : A' \vdash \upcast{B}{B'}{(f (\dncast{A}{A'}{x'}))} \dynr (\upcast{A\to B}{A'\to B'}f) x'}}
                {f \vdash \funupcast{A}{B}{A'}{B'}{f}{x'}\dynr \lambda x'. (\upcast{A\to B}{A'\to B'}f) x'}\\\\
          \inferrule*
              { \mathcal{D}
                \and
                \lambda x'. t x' \dynr t \text{ with } t = \upcast{A\to B}{A'\to B'}{f}}
              {f : A \to B \vdash \funupcast{A}{B}{A'}{B'}{f}{x'}\dynr \upcast{A\to B}{A'\to B'}f : A' \to B'}

              \\\\

              \inferrule*
                  {\inferrule*
                    {p \dynr (\pi_0 p, \pi_1 p) \and
                      \inferrule*
                          {\forall i \in {0,1}.
                            \inferrule*{\pi_i p \dynr \pi_i p}
                                       {\pi_i p \dynr \upcast{A_i}{A_i'}\pi_i p}}
                          {p \vdash (\pi_0 p, \pi_1 p) \dynr \produpcast{A_0}{A_1}{A_0'}{A_1'}{p}}
                    }
                    {p \vdash p \dynr \produpcast{A_0}{A_1}{A_0'}{A_1'}{p}}}
                  {p : A_0 \times A_1 \vdash \upcast{A_0\times A_1}{A_0' \times A_1'}{p} \dynr \produpcast{A_0}{A_1}{A_0'}{A_1'}{p} : A_0' \times A_1'}

                  \\\\
  \end{mathpar}
  \begin{small}
    \begin{mathpar}    
      \inferrule*[right=$\mathcal{E}$]
                 {\forall i \in {0,1}.
                   \inferrule*
                       {\inferrule*
                         {\inferrule*
                           {p \dynr p}
                           {p \dynr \upcast{A_0\times A_1}{A_0' \times A_1'}{p}}}
                         {\pi_i p \dynr \pi_i\upcast{A_0\times A_1}{A_0' \times A_1'}{p}}}
                       {\upcast{A_i}{A_i'}{\pi_i p} \dynr \pi_i\upcast{A_0\times A_1}{A_0' \times A_1'}{p}}
                 }
                 {\produpcast{A_0}{A_1}{A_0'}{A_1'}{p} \dynr (\pi_0\upcast{A_0\times A_1}{A_0' \times A_1'}{p}, \pi_1\upcast{A_0\times A_1}{A_0' \times A_1'}{p})}\\\\
    \end{mathpar}
  \end{small}
  \begin{mathpar}
                  \inferrule*
                      { \mathcal{E}
                        \and
                        (\pi_0t, \pi_1t) \dynr t \text{ with }  t = \upcast{A_0\times A_1}{A_0' \times A_1'}{p}
                      }
                      {p : A_0 \times A_1 \vdash \produpcast{A_0}{A_1}{A_0'}{A_1'}{p} \dynr \upcast{A_0\times A_1}{A_0' \times A_1'}{p} : A_0' \times A_1'}
    \end{mathpar}
  \caption{Function, Product Upcast Reduction Derivations}
  \label{fig:negative-upcast}
\end{figure}
\qed

We note that in the proofs above, each direction of the equivalence
depends only on \emph{one} direction of $\eta$ equivalence, and so in
gradual languages where $\eta$ only holds as an ordering, we still get
an ordering relationship with the wrapping semantics.

Finally, we can show that upcasts must preserve errors, and if the
retract axiom is assumed, downcasts must as well.
\begin{theorem}[Casts are strict]
  If $A \dynr A'$, then
  \begin{enumerate}
  \item $\upcast{A}{A'}{\err_A} \equidyn \err_{A'}$
  \item Assuming the retract axiom, $\dncast{A}{A'}{\err_A} \equidyn \err_{A'}$
  \end{enumerate}
\end{theorem}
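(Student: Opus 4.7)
The plan is to handle each cast separately, proving the two directions of $\equidyn$ one at a time. The upcast case needs only the universal property of $\err$ and the $\textsc{UL}$ rule; the downcast case reduces to the upcast case via the retract axiom.

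For the upcast, first I would establish $\err_{A'} \dynr \upcast{A}{A'}{\err_A} : A'$, which is immediate from $\textsc{$\err$Bot}$ applied at type $A'$, since the rule says $\err_{A'}$ is below any term of type $A'$. The opposite direction, $\upcast{A}{A'}{\err_A} \dynr \err_{A'} : A'$, follows from the \textsc{UL} rule (or its sequent-calculus form \textsc{UL(S)}): it suffices to exhibit $\err_A \dynr \err_{A'} : A \dynr A'$, and this is precisely the consequence of $\textsc{$\err$Bot}$ with transitivity noted in the text after that rule (``$\err_A \dynr t : A \dynr A'$ for any $A' \sqsupseteq A$''), instantiated with $t = \err_{A'}$.

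For the downcast, the direction $\err_A \dynr \dncast{A}{A'}{\err_{A'}} : A$ is again a direct application of $\textsc{$\err$Bot}$ at type $A$. For the opposite direction, my plan is to chain together the upcast result with the retract axiom rather than attempt a direct derivation from \textsc{DL}, since \textsc{DL} only yields a $A \dynr A'$ conclusion, not a judgment at type $A$ alone. Concretely, by part $(1)$ we have $\upcast{A}{A'}{\err_A} \equidyn \err_{A'}$; applying the congruence of $\dncast{A}{A'}{(-)}$ via \textsc{TmDyn-Comp} gives $\dncast{A}{A'}{\upcast{A}{A'}{\err_A}} \equidyn \dncast{A}{A'}{\err_{A'}}$; and \textsc{Retract} collapses the left-hand side to $\err_A$, completing the equi-dynamism.

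The only subtle point will be keeping track of the type dynamism indices on each judgment (which type is on the left, which on the right, and which is the ``output'' type of the conclusion), since the same term $\err$ appears at several types and the same cast may be read at different dynamism pairs. Apart from this bookkeeping, every step is a single-rule invocation, so no calculation is involved.
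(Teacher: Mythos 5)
Your proof is correct and follows essentially the same route as the paper: \textsc{$\err$Bot} gives $\err_{A'} \dynr \upcast{A}{A'}{\err_A}$, \textsc{UL(S)} applied to $\err_A \dynr \err_{A'} : A \dynr A'$ gives the converse, and the downcast case is handled by applying the downcast congruence to part (1) and collapsing $\dncast{A}{A'}{\upcast{A}{A'}{\err_A}}$ with \textsc{Retract}. The only (harmless) redundancy is your separate derivation of $\err_A \dynr \dncast{A}{A'}{\err_{A'}}$, which the chained equi-dynamism argument already yields.
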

\begin{proof}
The upcast preserves $\err$ because it is a left/upper adjoint and
therefore preserves colimits/joins, and $\err$ is the empty join.
More concretely, $\upcast{A}{A'}{\err_A} \dynr \err_{A'}$ because
$\err_{A'}$ is more dynamic than $\err_A$ and is the least dynamic
term of type $A'$ so is in particular less dynamic than anything more
dynamic than $\err_A$.  As derivations:

\begin{mathpar}
  \axiom{Err-Bot}{\cdot \vdash \err_{A'} \dynr \upcast{A}{A'}{\err_{A}} : A'}

  \inferrule*[right=UL(S)]
  {\axiom{Err-Bot'}
    {\cdot \vdash \err_{A} \dynr \err_{A'} : A \preciser A'}
  }
  {\cdot \vdash \upcast{A}{A'}{\err_{A}} \dynr \err_{A'} : A'}
\end{mathpar}

The proof that the downcast preserves $\err$ is less modular as it
depends on the presence of the upcast and the retract axiom.
The proof is simple though: to show $\dncast{A}{A'}{\err_{A'}} \equiprecise
\err_{A} : A$, we have $\err_{A'} \equiprecise \upcast{A}{A'}{\err_{A}}$
by above, and
so we can apply the downcast to both sides
to get
$\dncast{A}{A'}{\err_{A'}} \equiprecise \dncast{A}{A'}{\upcast{A}{A'}{\err_{A}}}$,
and the right-hand side is equivalent
to $\err_{A}$ by the retract axiom.
\end{proof}

\subsection{Properties of Casts}

Next we present a few properties of casts that are not operational
reductions, but have other semantic significance.

First, we make precise the idea that the specification we have given
for casts characterizes them uniquely up to order-equivalence
($\equidyn$).
In category theoretic terms, we show that the rules for
upcasts/downcasts constitute a \emph{universal property}.
First, to prove that casts are unique, suppose that there was a second
version of the upcast $\upcastpr{A}{A'}{t}$ with analogous sequent-style rules
\textsc{UL(S)'} and \textsc{UR(S)'}.

\begin{theorem}[Upcasts are Unique]
  \label{thm:syntactic-casts-are-uniq}
  If $A \dynr A'$, then $\upcast{A}{A'}x \equidyn \upcastpr{A}{A'}x$.
\end{theorem}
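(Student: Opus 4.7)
The plan is to exploit the universal property for upcasts directly: since each of $\upcast{A}{A'}{-}$ and $\upcastpr{A}{A'}{-}$ is characterized as the least dynamic term of type $A'$ that is more dynamic than $x$, the two must be order-equivalent. Concretely, I will prove the two directions $\upcast{A}{A'}{x}\dynr\upcastpr{A}{A'}{x}$ and $\upcastpr{A}{A'}{x}\dynr\upcast{A}{A'}{x}$ independently, and then combine them to get $\equidyn$.

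For the first direction, I would start from the assumption that $\upcastpr{A}{A'}{-}$ also satisfies an introduction rule \textsc{UR'}, which gives $x \dynr x : A \dynr A \vdash x \dynr \upcastpr{A}{A'}{x} : A \dynr A'$. I would then feed this into the elimination rule \textsc{UL} (or its sequent form \textsc{UL(S)}) for the original upcast $\upcast{A}{A'}{-}$, which turns a term more dynamic than $x$ of type $A'$ into a term more dynamic than $\upcast{A}{A'}{x}$. This yields $x : A \vdash \upcast{A}{A'}{x} \dynr \upcastpr{A}{A'}{x} : A'$.

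The second direction is completely symmetric: apply \textsc{UR} for $\upcast{A}{A'}{-}$ to obtain $x \dynr \upcast{A}{A'}{x}$, and then apply the assumed \textsc{UL(S)'} for $\upcastpr{A}{A'}{-}$ to conclude $\upcastpr{A}{A'}{x} \dynr \upcast{A}{A'}{x}$. Combining with the first direction via the definition of $\equidyn$ completes the proof.

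There is no serious obstacle here; the argument is essentially a category-theoretic ``uniqueness of left adjoints'' argument, specialized to the thin ordered setting. The only subtlety is being careful that the side-conditions on the sequent-style rules are satisfied: in each application of \textsc{UL(S)} (or \textsc{UL(S)'}) the relevant type dynamism $A \dynr A'$ is part of our hypotheses, so every derivation is well-formed. An entirely analogous argument (dualized) would show that downcasts are unique, replacing \textsc{UR}/\textsc{UL} with \textsc{DL}/\textsc{DR}.
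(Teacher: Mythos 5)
Your proof is correct and is essentially identical to the paper's: both directions use the introduction rule (\textsc{UR}) of one upcast followed by the elimination rule (\textsc{UL(S)}) of the other, exactly the ``uniqueness of adjoints'' argument the paper gives (note the paper's displayed derivation labels the outer inferences \textsc{UR(S)}, but the rule actually applied there is \textsc{UL(S)}, matching your description).
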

\begin{proof}
We show that the second upcast is equivalent to the original in
analogy with the way we show function/product types are unique: use
the ``elimination'' rule of one and then the ``introduction'' rule of
the other.
\begin{mathpar}
  \inferrule*[right=UR(S)]
  {\inferrule*[right=UR(S)']
   {x : A \vdash x \preciser x : A}
   {x : A \vdash x \preciser \upcastpr{A}{A'}{x} : A \dynr A'}}
  {x : A \vdash \upcast{A}{A'}{x} \preciser \upcastpr{A}{A'}{x} : A'}

  \inferrule*[right=UR(S)']
  {\inferrule*[right=UR(S)]
   {x : A \vdash x \preciser x : A}
   {x : A \vdash x \preciser \upcast{A}{A'}{x} : A \dynr A'}}
  {x : A \vdash \upcastpr{A}{A'}{x} \preciser \upcast{A}{A'}{x} : A'}
\end{mathpar}  
\end{proof}
\noindent A dual proof gives uniqueness for downcasts as well.  

Next, we show that the upcast and downcast for a given $A \dynr A'$
always form a \emph{Galois Connection}.
This fact forms the basis for our models considered in sections
\ref{sec:contract-translation} and \ref{sec:models}.

\begin{theorem}[Casts are Galois Connections]
  \label{thm:galois}
  If $A \dynr A'$, then $t \dynr \dncast{A}{A'}{\upcast{A}{A'}{t}}$
  and $\upcast{A}{A'}{\dncast{A}{A'}{t}}\dynr t$
\end{theorem}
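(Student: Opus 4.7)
The plan is to prove each of the two inequalities by a single composition of the one-sided cast rules, exploiting the fact that the four rules \textsc{UR}, \textsc{UL}, \textsc{DR}, \textsc{DL} (and their sequent-style variants \textsc{UR(S)}, \textsc{UL(S)}, \textsc{DR(S)}, \textsc{DL(S)}) are exactly what is needed to express the unit and counit of a Galois connection. Indeed, the content of the theorem is that, taking the upcast as left adjoint and the downcast as right adjoint, the rules \textsc{UR} and \textsc{DL} are the ``introductions'' while \textsc{UL} and \textsc{DR} are the ``eliminations'' of these adjoints.

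For the unit inequality $t \dynr \dncast{A}{A'}{\upcast{A}{A'}{t}} : A$, I would start from the trivial dynamism $t \dynr \upcast{A}{A'}{t} : A \dynr A'$, which is exactly the instance of \textsc{UR} at $x := t$. Then I would apply \textsc{DR(S)} (which takes a judgment of shape $\Phi \vdash u \dynr u' : A \dynr A'$ and wraps $u'$ in the downcast to get $\Phi \vdash u \dynr \dncast{A}{A'}{u'} : A$) with $u := t$ and $u' := \upcast{A}{A'}{t}$. The resulting conclusion is the desired inequality, and all type-dynamism side conditions are immediate from $A \dynr A'$ together with reflexivity.

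For the counit inequality $\upcast{A}{A'}{\dncast{A}{A'}{t}} \dynr t : A'$, I would dually begin from $\dncast{A}{A'}{t} \dynr t : A \dynr A'$, which is \textsc{DL} with $x' := t$. Then I would apply \textsc{UL(S)} with $u := \dncast{A}{A'}{t}$ and $u'' := t$, yielding the desired dynamism judgment on the right-hand side of type $A'$.

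I do not expect any genuine obstacle: both derivations are short linear compositions of one upcast rule with one downcast rule, and each step is a direct instance of a primitive rule from Figure~\ref{fig:casts}, modulo recognizing that \textsc{UR} and \textsc{DL} give the ``unit/counit at the variable'' and that \textsc{UL(S)} and \textsc{DR(S)} are the general-context forms that accept an arbitrary inhabitant on the opposite side. The only care needed is to track the type-dynamism indices (everything is of the form $A \dynr A$ or $A \dynr A'$), which follows automatically from the assumption $A \dynr A'$ and reflexivity.
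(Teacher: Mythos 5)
Your proof is correct and is essentially the paper's own derivation: the paper likewise obtains the unit by deriving $x \dynr \upcast{A}{A'}{x}$ (via reflexivity and \textsc{UR(S)}, which is interchangeable with your direct appeal to \textsc{UR}) and then applying \textsc{DR(S)}, and obtains the counit dually from \textsc{DL} followed by \textsc{UL(S)}. The instantiations of the type-dynamism indices you describe are exactly the ones the paper uses.
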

\begin{proof}
  The derivations are as follows:
  \begin{mathpar}
\inferrule*[right=UL(S)]
             {{x : A \vdash x \dynr x : A \dynr  A}}
             {\inferrule*[right=DR(S)]
                         {x : A \vdash x \dynr {\upcast{A}{A'}{x}} : A \dynr  A'}
                         {x : A \vdash x \dynr \dncast{A}{A'}{\upcast{A}{A'}{x}} : A}}
\and
\inferrule*[right=DL(S)]
             {x' : A' \vdash x' \dynr x' : A' \dynr  A'}
             {\inferrule*[right=UL(S)]
                         {x' : A' \vdash {\dncast{A}{A'}{x'}} \dynr x'  : A \dynr  A'}
                         {x' : A' \vdash \upcast{A}{A'}{\dncast{A}{A'}{x'}} \dynr x' : A'}}
\end{mathpar}
\end{proof}
In programming practice, we expect the stronger property that round
trip from $A$ to $A'$ and back to be in fact an identity and this is
implied by the addition of the \emph{retract axiom}.

Recall that the gradual guarantee~\cite{refined} says that making
casts less dynamic results in semantically less dynamic terms,
but does not otherwise change the behavior of programs.
To see that a model of gradual type theory satisfies the gradual
guarantee, the key syntactic fact is that making \emph{casts} less dynamic
results in a term dynamism relationship.
We state this as the following theorem in GTT:
\begin{theorem}[Cast Congruence]
  When $A \dynr A', B \dynr B', A \dynr B, A' \dynr B'$, we can derive
  \[ x \dynr y : A \dynr B \vdash \upcast{A}{A'}{x} \dynr \upcast{B}{B'}{y} : A' \dynr B'\]
  and
  \[ x' \dynr y' : A' \dynr B' \vdash \dncast{A}{A'}{x'} \dynr \dncast{B}{B'}{y'} : A \dynr B .\]
\end{theorem}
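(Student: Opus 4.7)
The plan is to prove both halves by applying the sequent-calculus style cast rules \textsc{UR(S)}, \textsc{UL(S)}, \textsc{DL(S)}, \textsc{DR(S)} derived earlier, which let us introduce or eliminate a single cast while keeping the rest of the dynamism judgment fully general. Since we start with a dynamism judgment between $x : A$ and $y : B$ and want to arrive at a dynamism judgment between $\upcast{A}{A'}{x} : A'$ and $\upcast{B}{B'}{y} : B'$, the proof amounts to introducing one cast on each side in succession.

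For the upcast case, starting from the hypothesis $x \dynr y : A \dynr B$, I would first apply \textsc{UR(S)} using $B \dynr B'$ to obtain $x \dynr \upcast{B}{B'}{y} : A \dynr B'$ (the required $A \dynr B'$ follows by \textsc{TyDyn-Trans} from $A \dynr B \dynr B'$). Then I would apply \textsc{UL(S)} using $A \dynr A'$ and $A' \dynr B'$ to conclude $\upcast{A}{A'}{x} \dynr \upcast{B}{B'}{y} : A' \dynr B'$. The downcast case is exactly dual: from $x' \dynr y' : A' \dynr B'$, apply \textsc{DL(S)} with $A \dynr A'$ to get $\dncast{A}{A'}{x'} \dynr y' : A \dynr B'$, and then apply \textsc{DR(S)} with $A \dynr B$ and $B \dynr B'$ to conclude $\dncast{A}{A'}{x'} \dynr \dncast{B}{B'}{y'} : A \dynr B$.

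The only subtlety is bookkeeping: every invocation of the sequent rules carries type-dynamism side-conditions on the intermediate and conclusion types, but each one is either directly among the four hypotheses $A \dynr A', B \dynr B', A \dynr B, A' \dynr B'$ or is an immediate consequence of \textsc{TyDyn-Trans} applied to two of them. There is no real obstacle here; the theorem is essentially a direct consequence of the one-sided invertibility of the sequent-form cast rules, and the whole argument reflects semantically the fact that upcasts and downcasts are functorial in the type dynamism relation (a point relevant to the models in Section~\ref{sec:semantics}).
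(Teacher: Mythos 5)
Your proof is correct and follows essentially the same route as the paper: the paper derives the upcast case by applying \textsc{UR(S)} to the hypothesis $x \dynr y : A \dynr B$ and then \textsc{UL(S)}, exactly as you do, and dismisses the downcast case as dual (which you spell out correctly via \textsc{DL(S)} then \textsc{DR(S)}). The type-dynamism side conditions you track are handled the same way, via transitivity of $\dynr$ on the four hypotheses.
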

\begin{proof}
The proof of the
first is
\begin{mathpar}
\inferrule*[right=UR(S)]
           {
             {x \dynr y : A \dynr B \vdash x \dynr y : A \dynr B}
           }
{\inferrule*[right=UL(S)]
           {x \dynr y : A \dynr B \vdash x \dynr \upcast{B}{B'}{y} : A \dynr B'}
           {x \dynr y : A \dynr B \vdash \upcast{A}{A'}{x} \dynr \upcast{B}{B'}{y} : A' \dynr B'}}
\end{mathpar}
and the second is dual.
\end{proof}

All other term constructors are congruences by primitive rules, so
$\dynr$ is a congruence.

Finally, we discuss what it means for two types to be equivalent in gradual type theory.
Because types $A$ and $B$ in gradual type theory can be related both
by type dynamism $A \dynr B$ and by functions $A \to B$, there are two
reasonable notions of equivalence of types.
First, if they are order-equivalent:
\begin{definition}[Equi-dynamic]
  We say types $A$ and $B$ are \emph{equi-dynamic} if $A \dynr B$ and $B \dynr A$
\end{definition}
And second, if they are isomorphic:
\begin{definition}[Isomorphism]
  We say types $A$ and $B$ are \emph{isomorphic} if there exist terms
  $x : A \vdash t : B$ and $y : B \vdash u : A$ such that
  \[ x:A \vdash u[t/y] \equidyn x : A\]
  and
  \[ y:B \vdash t[u/x] \equidyn y : B\]
\end{definition}
Equi-dynamism of types turns out to be strictly stronger.
First, equi-dynamism of types implies isomorphism, with the casts
between them forming the isomorphism:
\begin{theorem}[Equi-dynamic Types are Isomorphic]
  If $A \equidyn B$, then
  \begin{enumerate}
\item $x. \upcast{A}{B}{x}$ and $y. \upcast{B}{A}{y}$ form an
    isomorphism of types.
\item $y. \dncast{A}{B}{x}$ and $x. \dncast{B}{A}{y}$ form an
    isomorphism of types.
\item $x. \upcast{A}{B}{x}$ and $y. \dncast{A}{B}{y}$ form an
    isomorphism of types.  
\item $y:B \vdash \upcast{B}{A}{x} \equidyn \dncast{A}{B}{x} : A$.
\end{enumerate}
\end{theorem}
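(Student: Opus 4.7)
The plan is to prove claim (4) first, since it records the key structural fact — that the upcast and downcast between equi-dynamic types coincide — and this fact makes (1) and (2) almost immediate from (3). Reading (4) as $y : B \vdash \upcast{B}{A}{y} \equidyn \dncast{A}{B}{y} : A$ (treating the $x$ in the statement as a typo for $y$), one direction $\upcast{B}{A}{y} \dynr \dncast{A}{B}{y}$ follows by applying the sequent form \textsc{UL(S)} (using $B \dynr A$) to reduce the goal to $y \dynr \dncast{A}{B}{y} : B \dynr A$; this in turn is an instance of \textsc{DR(S)} applied to the reflexivity judgment $y \dynr y : B \dynr B$, which is well-formed precisely because equi-dynamism supplies both $A \dynr B$ and $B \dynr A$. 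The opposite direction $\dncast{A}{B}{y} \dynr \upcast{B}{A}{y}$ is dual: apply \textsc{DL(S)} to reduce to $y \dynr \upcast{B}{A}{y} : B \dynr A$, which is immediate from \textsc{UR}.

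Next I prove (3), that $\upcast{A}{B}$ and $\dncast{A}{B}$ form an isomorphism of types. One direction of each round-trip, namely $x \dynr \dncast{A}{B}{\upcast{A}{B}{x}}$ and $\upcast{A}{B}{\dncast{A}{B}{y}} \dynr y$, is exactly the Galois connection of Theorem~\ref{thm:galois} and uses only $A \dynr B$. For the reverse inequalities I exploit the additional $B \dynr A$ provided by equi-dynamism. To derive $\dncast{A}{B}{\upcast{A}{B}{x}} \dynr x$, I apply \textsc{DL(S)}, reducing the goal to $\upcast{A}{B}{x} \dynr x : B \dynr A$; this follows from \textsc{UL(S)} applied to the reflexive $x \dynr x : A \dynr A$, justified by the chain $A \dynr B \dynr A$. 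Symmetrically, $y \dynr \upcast{A}{B}{\dncast{A}{B}{y}}$ follows by \textsc{UR(S)} from the intermediate judgment $y \dynr \dncast{A}{B}{y} : B \dynr A$ already derived in the proof of (4).

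Claims (1) and (2) then fall out by combining (3) with (4) and its $A$/$B$-swapped instance. For (1), both of the compositions $\upcast{B}{A}{\upcast{A}{B}{x}}$ and $\upcast{A}{B}{\upcast{B}{A}{y}}$ can be rewritten using (4) as $\dncast{A}{B}{\upcast{A}{B}{x}}$ and $\dncast{B}{A}{\upcast{B}{A}{y}}$ respectively, and each of these is then equi-dynamic to the identity by (3) (applied with $A, B$ in the appropriate roles). For (2), one proceeds dually: use (4) to replace each outer downcast by the corresponding upcast in the opposite direction, and then invoke (3) to collapse the composite to the identity.

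The main obstacle is purely bookkeeping rather than conceptual: each use of \textsc{UL(S)}, \textsc{UR(S)}, \textsc{DL(S)}, \textsc{DR(S)} carries side-conditions on type dynamism, and one must check that the particular direction needed — sometimes $A \dynr B$, sometimes $B \dynr A$, sometimes a transitive chain of both — is licensed by the equi-dynamism hypothesis. Once the right direction is chosen at each step, the derivations are routine applications of the sequent-style cast rules, with claim (4) serving as the pivot that lets the upcast and downcast universal properties agree whenever their two types are mutually related.
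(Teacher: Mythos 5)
Your proposal is correct, but it inverts the paper's decomposition. The paper proves items (1), (2), and (3) each by a pair of direct two-step sequent derivations (\textsc{UL(S)}/\textsc{UR(S)} for (1), dually for (2), \textsc{DL(S)}--\textsc{UL(S)} and \textsc{DR(S)}--\textsc{UR(S)} for (3)), and then obtains (4) as a corollary of (1) and (3) by appealing to ``uniqueness of inverses by the usual argument.'' You instead prove (4) first by direct derivations --- and your derivations check out: \textsc{UL(S)} followed by \textsc{DR(S)} for $\upcast{B}{A}{y} \dynr \dncast{A}{B}{y}$, and \textsc{DL(S)} followed by \textsc{UR} for the converse, with the side conditions discharged by the two halves of $A \equidyn B$ --- then prove (3) directly, and finally derive (1) and (2) by rewriting the outer cast of each composite via (4) (using the congruence/compositionality of $\dynr$) and collapsing with (3). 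What your route buys is that the informal ``uniqueness of inverses'' step is replaced by an explicit four-line derivation of (4), and items (1) and (2) become corollaries rather than requiring their own derivations; what the paper's route buys is that (1)--(3) are treated uniformly as instances of the same ``eliminate one cast, introduce the other'' pattern without needing congruence reasoning. One small point in your favor: the paper's displayed derivations for (3) only exhibit the round trip $\dncast{A}{B}{\upcast{A}{B}{x}} \equidyn x$, leaving the other composite $\upcast{A}{B}{\dncast{A}{B}{y}} \equidyn y$ implicit, whereas you handle both explicitly (correctly observing that one inequality of each round trip is just Theorem~\ref{thm:galois} and only the reverse inequalities need the extra hypothesis $B \dynr A$).
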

\begin{proof}
  \begin{enumerate}
  \item By the following derivations:

    \begin{mathpar}
    \inferrule*[right=UL(S)]{
      \inferrule*[right=UL(S)]{x : A \vdash x \dynr x : A}
                {x : A \vdash \upcast{A}{B}{x} \dynr x : B \dynr A}
    }{x : A \vdash \upcast{B}{A}{\upcast{A}{B}{x}} \dynr x : A}

    \inferrule*[right=UR(S)]
    {\inferrule*[right=UR(S)]{x : A \vdash x \dynr x : A}
      {x : A \vdash x \dynr \upcast{A}{B}{x} : A \dynr B}}
    {x : A \vdash x \dynr \upcast{B}{A}{\upcast{A}{B}{x}} : A}
\end{mathpar}
  \item This is dual to the previous case.
  \item By the following derivation:
    \begin{mathpar}
    \inferrule*[right=DL(S)]
    {\inferrule*[right=UL(S)]
     {x : A \vdash x \dynr x : A}
     {x : A \vdash \upcast{A}{B}{x} \dynr x : B \dynr A}
    }
    {x : A \vdash \dncast{A}{B}{\upcast{A}{B}{x}} \dynr x : A}

    \inferrule*[right=DR(S)]
    {\inferrule*[right=UR(S)]{x:A \vdash x \dynr x : A}{x : A \vdash x \dynr \upcast{A}{B}{x} : A \dynr B}}
    {x : A \vdash x \dynr \dncast{A}{B}{\upcast{A}{B}{x}} : A}
\end{mathpar}
  \item This follows from uniqueness of inverses (which is true by the
    usual argument) and the previous two.
    \qedhere
  \end{enumerate}
\end{proof}

The converse, that isomorphic types are equi-dynamic, does not hold
by design, because it does not match gradual typing practice.
Gradually typed languages typically have \emph{disjointness} of
connectives as operational reductions; for example, disjointness of
products and functions can be expressed by an axiom
$\dncast{(C \times D)}{\dyn}{\upcast{(A \to B)}{\dyn}{x}}
\dynr \err$ which says that casting a function to a product
errors.
This axiom is incompatible with isomorphic types being
equi-dynamic, because a function type \emph{can} be isomorphic to a
product type (e.g. $X \to Y \cong (X \to Y) \times 1$), and for
equi-dynamic types $A$ and $B$, a cast
$\dncast{{B}}{\dyn}{\upcast{A}{\dyn}{x}}$ cannot fail, because if
it fails, then every term of $A$, $B$ equals $\err$:
Assume
$A \equidyn B$
and 
$\dncast{B}{\dyn}{\upcast{A}{\dyn}{x}} \dynr \err$.
By composition and the adjunction property
\[
\upcast{A}{B}{x} \dynr
\dncast{B}{\dyn}{\upcast{B}{\dyn}{\upcast{A}{B}{x}}} \equidyn
\dncast{B}{\dyn}{\upcast{A}{\dyn}{x}} \dynr
\err
\]
But by above,
$\upcast{B}{A}{}$ is an isomorphism, so
\[
x:A \equidyn \dncast{A}{B}{\upcast{A}{B}{x}} \dynr \dncast{B}{A}{\err} \equidyn \err
\]
where the last step is by strictness, so every element of $A$ (and $B$, by congruence
of casts) is equal to a type error.
That is, disjointness axioms make equi-dynamism an intensional property
of the representation of a type, and therefore stronger than
isomorphism.
Nonetheless, the basic rules of gradual type theory do not imply
disjointness; in Section~\ref{sec:models}, we discuss a countermodel.


\fi

\section{Categorical Semantics}
\label{sec:semantics}

Next, we define what a category-theoretic model of preorder and
gradual type theory is, and prove that PTT/GTT are \emph{internal
  languages} of these classes of models by proving that they are left
adjoint functors from categories of signatures to categories of
models.
This alternative axiomatic description of PTT/GTT is a useful bridge
between the syntax and the concrete models presented in
Section~\ref{sec:models}.  

The models are based on a variant of \emph{preorder categories}, which are categories
internal to the category of preorders.\footnote{To avoid confusion,
these are not categories that happen to be preorders (thin categories)
and these are not categories \emph{enriched} in the category of
preorders, where the hom-sets between two objects are preordered, but
the set of objects is not.}
A preorder category is a category where the set of all objects and set
of all arrows are each equipped with a preorder (a reflexive,
transitive, but not necessarily anti-symmetric, relation).
\ifshort
Furthermore the source, target, identity and composition functions are
all \emph{monotone} with respect to these orderings.
\else
That is, rather than having merely a \emph{set} of
objects and \emph{set} of arrows, preorder categories have a
\emph{preordered set} of objects and \emph{preordered set} of arrows
and the relevant functions are all monotone with respect to these
orderings.  
\fi
A preorder category is equivalently a double category where
one direction of morphism is thin.
Intuitively, the preorder of objects represents types and type
dynamism, while the preorder of morphisms represents terms and term
dynamism, and we reuse the notation $\dynr$ for the orderings on
objects and morphisms.

\iflong
\begin{definition}[Preorder Category]
  A preorder category $\cat C$ consists of
  \begin{enumerate}
  \item A preorder of ``objects'' $\cat{C}_0$
  \item A preorder of ``arrows'' $\cat{C}_1$
  \item Monotone functions of ``source'' and ``target'' $s, t : \cat{C}_1
    \to \cat{C}_0$ and ``identity'' $i : \cat{C}_0 \to \cat{C}_1$
  \item A monotone composition function $\circ : \cat{C}_1
    \mathrel{\times_{\cat{C}_0}} \cat{C}_1 \to \cat{C}_1$ where
    $\cat{C}_1 \mathrel{\times_{\cat{C}_0}} \cat{C}_1$ is the pullback
    of the source and target maps, which is explicitly given by the set
    \[ \cat{C}_1 \mathrel{\times_{\cat{C}_0}} \cat{C}_1 = \{ (f, g) \in \cat{C}_1^2 \alt s f = t g \}. \]
    This composition must satisfy that for any $(f,g) \in \cat{C}_1
    \mathrel{\times_{\cat{C}_0}} \cat{C}_1$, we have
    $s(f\circ g) = s g$ and $t(f\circ g) = t f$.
  \item Unitality and associativity laws for composition: $f \circ
    i(A) = f$, $i(B) \circ f = f$ and $(f \circ g) \circ h = f \circ
    (g \circ h)$ whenever these are well-defined.
  \end{enumerate}  
\end{definition}

The algebra of composition in a preorder category can be viewed using
a simple form of 2-dimensional string diagrams, which we adapt from \cite{shulman2008framed}.
We will not use this visualization for proofs, but we present it
because they help to understand the somewhat complex presuppositions
of the term dynamism judgment in gradual type theory.
We draw the objects as points, and draw arrows horizontally, while
drawing ordering relationships between objects vertically.
Then the ordering relationship between arrows forms a kind of square,
where the necessary ordering relationships are expressed
geometrically.
For instance if $f : A \to B$ is less than $f' : A' \to B'$, then it must
be the case that $A \dynr A'$ and $B \dynr B'$ because source and target
are monotone functions.
In gradual type theory, this is part of the presuppositions of the
term dynamism judgment.
This somewhat complex relationship between domains and codomains is
succinctly expressed as the following square which says that $f \dynr
f'$:

\[\begin{tikzcd}
A \arrow[dd, "\vdynr", dashed] \arrow[rr, "f"] &             & B \arrow[dd, "\vdynr", dashed] \\
                                                    & \vdynr &                                     \\
A' \arrow[rr, "f'"]                                 &             & B'                                 
\end{tikzcd} \]

Then, the substitution rule of gradual type theory can be
visualized as horizontal concatenation of squares.
If we have $f \dynr f'$ and $g \dynr g'$ then $g \circ f
\dynr g' \circ f'$, which is visualized as
\[\begin{tikzcd}
A \arrow[rr, "f"] \arrow[dd, "\vdynr", dashed] &        & B \arrow[rr, "g"] \arrow[dd, "\vdynr"] &        & C \arrow[dd, "\vdynr"] \\
                                               & \vdynr &                                        & \vdynr &                        \\
A' \arrow[rr, "f'"]                            &        & B' \arrow[rr, "g'"]                    &        & C'                    
\end{tikzcd}\]
And the transitivity rule similarly corresponds to vertical
concatenation of squares.
So we visualize $f \dynr f' \dynr f''$ as
\[ \begin{tikzcd}
A \arrow[rr, "f"] \arrow[dd, "\vdynr", dashed]   &        & B \arrow[dd, "\vdynr", dashed]  \\
                                                 & \vdynr &                                 \\
A' \arrow[rr, "f'"] \arrow[dd, "\vdynr", dashed] &        & B' \arrow[dd, "\vdynr", dashed] \\
                                                 & \vdynr &                                 \\
A'' \arrow[rr, "f''"]                            &        & B''                            
\end{tikzcd}\]
\fi

While the axioms of a preorder category are \emph{similar to} the
judgmental structure of preorder type theory, in a preorder category,
morphisms have \emph{one} source object and one target object, whereas
in preorder type theory, terms have an entire \emph{context} of inputs
and one output.
This is a standard mismatch between categories and type theories, and is
often resolved by assuming that models have product types and using
categorical products to interpret the context~\cite{Lambek:1986}.
However, we will take a \emph{multicategorical} view, in
which our notion of model will axiomatize algebraically a notion of
morphism with many inputs.
Though for ordinary simple type theory the difference between the two
is a matter of taste, for preorder type theory the difference is
important when modeling term and specifically context dynamism.
If a context $\Gamma = x_0:A_0,\ldots$ is modeled as a product of its
objects $A_0 \times \cdots$, then $\Gamma$ should be less dynamic than
another $\Gamma' = x_0':A_0',\ldots$ just when their product is
$A_0\times \cdots \dynr A_0'\times \cdots$. However, in the syntax of
our type theory, $\Gamma \dynr \Gamma'$ holds only when they are
pointwise $\dynr$, i.e., $A_0 \dynr A_0', \ldots$ all hold.
Since we allow for type dynamism axioms, these two notions are
\emph{not} equivalent, for instance there might be an axiom $1 \times
1\dynr 1$, which would mean that $x:1,y:1 \dynr z:1$ would hold if we
interpreted contexts simply as their product.
Therefore the syntax of context dynamism would be \emph{incomplete} if
it were interpreted as ordering of the products of the objects.
Instead, we give a multicategorical definition in which the notion of
context dynamism in the model is also pointwise.
Specifically, we define a model of preorder type theory to be a
\emph{cartesian preorder multicategory}, to be defined shortly, which
is like a preorder category that does not necessarily have true
product \emph{objects}, but whose morphisms' source can be a
``virtual'' product of objects, i.e. a context.

We developed this notion of cartesian preorder multi\-category using
the theory of generalized multicategories \cite{crutwell-shulman}.
Briefly, there is a monad on the (double) category of preorder
categories whose algebras are preorder categories with cartesian
products and cartesian preorder multi\-categories are the generalized
multicategories with respect to this monad.
While we will not make use of this abstract formulation, it was quite
useful in ensuring we had a reasonable definition.
\ifshort
In the extended version \cite{extended}, we prove soundness and completeness of
PTT for CPMs.  
\fi

\begin{definition}[CPM]
  \ifshort
  A cartesian preorder multi\-category (CPM) $\cat{C}$
  consists of
  a preordered set of ``objects'' $\cat{C}_0$, a
  preordered set of ``multiarrows'' $\cat{C}_1$, monotone functions
  ``source'' $s : \cat{C}_1 \to \Ctx(\cat{C})_0$, ``target'' $t :
  \cat{C}_1 \to \cat{C}_0$, ``projection'' $x : \Ctx(\cat{C})_0\times
  \cat{C}_0 \times \Ctx(\cat{C})_0 \to \cat{C}_1$ and composition
  $\circ : \cat{C}_1 \times_{\Ctx(\cat{C})_0} \Ctx(\cat{C})_1 \to
  \cat{C}_1$. Here $\Ctx(\cat{C})_0$ is the set of lists of objects
  preordered pointwise, and a substitution $\gamma \in
  \Ctx(\cat{C})_1(\Gamma;B_1,\ldots,B_n)$ consists of a multiarrow
  $\gamma(i) \in \cat{C}_1(\Gamma;B_i)$ for each $i \in 1,\ldots,n$,
  also preordered pointwise, and with composition defined in the same
  way as syntactic substitutions.
  the extended version \cite{extended}).  \else

  We mutually define what constitutes a cartesian preorder
  multi\-category (CPM) $\cat C$ with an associated preorder category
  $\Ctx (\cat C)$ of ``contexts'' and ``substitutions''.
  \begin{itemize}
  \item A cartesian preorder multi\-category (CPM) $\cat C$ consists of
    \begin{enumerate}
    \item a preordered set of ``objects'' $\cat{C}_0$
    \item a preordered set of ``multiarrows'' $\cat{C}_1$
    \item Monotone functions ``source'' $s : \cat{C}_1 \to \Ctx(\cat{C})_0$ and
      ``target'' $t : \cat{C}_1 \to \cat{C}_0$. We define $\cat{C}_1(\Gamma;A) = \{ f \in \cat{C}_1 \alt s(f) = \Gamma \wedge t(f) = A \}$
    \item Monotone ``projection'' functions $x :
      \Ctx(\cat{C})_0\times \cat{C}_0\times \Ctx(\cat{C})_0 \to \cat{C}_1$ satisfying
      $s(x(\Gamma;A;\Delta)) = \Gamma,A,\Delta$ and
      $t(x(\Gamma;A;\Delta)) = A$.
      We will sometimes refer to $(x(\cdot;A;\cdot)) \in \cat{C}_1(A;A)$ as $\id_A$, the identity multiarrow.
    \item A monotone ``composition'' function $\circ : \cat{C}_1
      \times_{\Ctx(\cat{C})_0} \Ctx(\cat{C})_1 \to \cat{C}_1$ satisfying
      $s(f \circ \gamma) = s(\gamma)$ and $t(f \circ \gamma) = t(f)$
      where $\cat{C}_1 \times_{\Ctx(\cat{C})_0} \Ctx(\cat{C})_1$ denotes
      a pullback, which is explicitly given by
      \[ \cat{C}_1 \times_{\Ctx(\cat{C})_0} \Ctx(\cat{C})_1 = \{ (f, \gamma) \alt t(\gamma) = s(f) \} \]
      equipped with the pointwise ordering.
    \item Satisfying the ``Identity'' law: for any $f \in \cat{C}_1(\Gamma;A)$,
      \[ f \circ \id_{\Gamma} = f\]
    \item Satisfying the ``Projection'' law, that for every $\gamma
      \in \Ctx(\cat{C})_1(\Theta;\Gamma,A,\Delta)$,
      \[
      x(\Gamma;A;\Delta) \circ \gamma = \gamma(|\Gamma|)
      \]
    \item Satisfying the ``Associativity'' law: for any $f \in
      \cat{C}_1(\Gamma;A)$, $\gamma \in \Ctx(\cat{C})_1(\Delta;\Gamma)$, $\delta \in \Ctx(\cat{C})_1(\Theta;\Delta)$
      \[
      (f \circ \gamma) \circ \delta = f \circ (\gamma \circ \delta)
      \]
    \end{enumerate}
  \item $\Ctx(\cat C)$ is a preorder category where
    \begin{enumerate}
    \item Objects $\Gamma \in \Ctx(\cat{C})_0$ are lists of elements of
      $\cat{C}_0$ (called ``contexts'') with the pointwise ordering.
    \item We define $\Ctx(\cat{C})_1(\Delta;\Gamma)$, the set of
      morphisms with source $\Delta$ and target $\Gamma = A_0,\ldots$
      to consist of functions $\gamma$ that assign for every $i \in
      \{0,\ldots,|\Gamma| - 1\}$ a multiarrow $\gamma(i) \in
      \cat{C}_1(\Delta; A_i)$. Then $\Ctx(\cat{C})_1$ is the set of
      triples $(\gamma; \Delta; \Gamma)$ such that $\gamma \in
      \Ctx(\cat{C})_1(\Delta;\Gamma)$, equipped with the pointwise
      ordering.
    \item Given $\gamma \in \Ctx(\cat{C})_1(\Delta;\Gamma)$ and
      $\gamma' \in \Ctx(\cat{C})_1(\Gamma;A_1,\ldots,A_n)$, we define
      the composition
      \[ (\gamma' \circ \gamma)(i) = \gamma'(i) \circ \gamma\]
      where the latter $\circ$ is
      composition of a substitution with a multiarrow.
    \item For an object $\Gamma = A_0,\ldots,A_n$, the identity substitution
      is given by the pointwise projections
      morphism.
      \[ \id_{\Gamma} = (x(\cdot;A_0;A_1,\ldots,A_n), \ldots, x(A_0,\ldots,A_{n-1};A_n;\cdot))\]
    \end{enumerate}
  \end{itemize}
  \fi
\end{definition}

\iflong An intuition for the axioms of multiarrow composition with
substitutions are that they are precisely what is needed to make the definition of
identity and composition for $\Ctx(\cat{C})$ into a cartesian preorder
category where the cartesian product is given by context concatenation.

\subsection{Soundness, Completeness and Initiality for PTT}
Next, we present the soundness and completeness theorems of the
interpretation of preorder type theory in a cartesian preorder
multicategory.
Soundness informally means that any interpretation of the base types
and function symbols in a CPM that satisfy the axioms in a signature
can be extended to a compositional semantics in which all derivable
type and term dynamism theorems are true in the model.
To make this precise, we first need to define precisely what a valid
interpretation of a signature in a CPM is. This proceeds in stages.
First we define an interpretation of base types.
\begin{definition}[Interpretation of Base Types]
  An interpretation of a $0$-PTT signature $\Sigma_0$ in a CPM $\cat
  C$ is a function $\interp{\cdot}_0 : \Sigma_0 \to \cat{C}_0$
  assigning an object to each base type.
\end{definition}

Then we define when an interpretation of base types validates all type
dynamism axioms.
\begin{definition}[Interpretation of Type Dynamism Axioms]
  An interpretation $\interp{\cdot}_0$ of $\Sigma_0$ in $\cat C$
  extends to an interpretation of a $1$-signature $\Sigma_1$, if for
  every type dynamism axiom $(A,B) \in \Sigma_1$, $\interp{A}_0 \dynr \interp{B}_0$
\end{definition}

And we can easily prove that any interpretation of type dynamism
axioms extends to a \emph{sound} interpretation of type dynamism
proofs.
\begin{theorem}[Soundness of PTT Type Dynamism]
  If $\interp{\cdot}_0$ is an interpretation of $\Sigma_0,\Sigma_1$ in
  $\cat C$, then if $A \dynr A'$ is derivable then $\interp{A}_0 \dynr
  \interp{A'}_0$ in $\cat C$.
\end{theorem}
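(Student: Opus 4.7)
The plan is to proceed by a straightforward induction on the derivation of the type dynamism judgment $A \dynr A'$ in PTT. Since type dynamism in bare preorder type theory is generated by only three rules (\textsc{TyDyn-Refl}, \textsc{TyDyn-Trans}, and \textsc{TyDyn-Ax}), there are exactly three cases to handle, and in each case the required fact in $\cat C$ is immediate from the structure that a CPM is required to carry.

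First I would dispatch the axiom case: if the derivation ends in \textsc{TyDyn-Ax} with $(A,B) \in \Sigma_1$, then $\interp{A}_0 \dynr \interp{B}_0$ holds in $\cat{C}_0$ by definition of what it means for $\interp{\cdot}_0$ to extend to an interpretation of $\Sigma_1$. For the reflexivity case \textsc{TyDyn-Refl}, we need $\interp{A}_0 \dynr \interp{A}_0$ in $\cat{C}_0$, which holds because $\cat{C}_0$ is a preordered set and its order is reflexive. For the transitivity case \textsc{TyDyn-Trans}, we have subderivations of $A \dynr A'$ and $A' \dynr A''$; applying the induction hypothesis to each yields $\interp{A}_0 \dynr \interp{A'}_0$ and $\interp{A'}_0 \dynr \interp{A''}_0$ in $\cat{C}_0$, and then transitivity of the preorder on $\cat{C}_0$ gives $\interp{A}_0 \dynr \interp{A''}_0$.

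There is essentially no obstacle here: the definition of CPM was arranged so that $\cat{C}_0$ is a preorder, precisely matching the reflexivity and transitivity rules of PTT, and the definition of an interpretation of $\Sigma_1$ was arranged so that the axiom rule is sound by fiat. The only bookkeeping subtlety is that the proof depends on the prior clause that $\interp{\cdot}_0$ is already defined on all of $\PTT_0(\Sigma_0)$, which is immediate in the bare PTT setting but will be extended in later sections as more type constructors are added; at that point the same induction would need monotonicity of each type constructor in $\cat{C}_0$, but for this theorem no such additional structure is required.
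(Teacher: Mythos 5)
Your proof is correct and takes exactly the same approach as the paper's: induction on the derivation, with reflexivity and transitivity handled by the preorder structure of $\cat{C}_0$ and the axiom case by the definition of an interpretation of $\Sigma_1$. Your version simply spells out the three cases in more detail than the paper does.
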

\begin{proof}
  By induction on proofs of $A \dynr A'$. Reflexivity and transitivity
  follows by the fact that $\dynr$ is a preorder and axioms follow by
  assumption that $\interp{\cdot}_0$ is an interpretation of
  $\Sigma_1$.
\end{proof}

Then we define an interpretation of function symbols that extends an
interpretation of base types and type dynamism axioms.
\begin{definition}[Interpretation of Function Symbols]
  If $\interp{\cdot}_0$ is an interpretation of $\Sigma_0,\Sigma_1$ in
  a CPM $\cat C$, and $\Sigma_2$ is $2$-PTT signature extending
  $\Sigma_0,\Sigma_1$ then an extension of $\interp{\cdot}_0$ to
  interpret function symbols is a function $\interp{\cdot}_2 :
  \Sigma_2 \to \cat{C}_1$ that respects typing in that
  $t(\interp{f}_2) = \interp{t(f)}_0$ and $s(\interp{f}_2) =
  \interp{s(f)}_0$, where the last formula means the pointwise
  application of $\interp{\cdot}_0$ to each element of the list of
  input types.
\end{definition}

Next, we define how to extend such an interpretation to an
interpretation of all PTT terms.
\begin{definition}[Interpretation of PTT Terms]
  \label{def:ptt-soundness}
  If $\interp{\cdot}_0,\interp{\cdot}_1$ are interpretations of
  $\Sigma_0,\Sigma_1,\Sigma_2$ in a CPM $\cat C$, we extend this to an
  interpretation $\sem{\cdot}_2$ of all PTT terms in $\cat C$ as
  follows.
  \begin{align*}
      \sem{\Gamma \vdash f(t_0,\ldots)}_2 &= \interp{f} \circ (\sem{t_0}_2,\ldots)\\
      \sem{\Gamma,x:A,\Delta \vdash x : A}_2 &= x(\interp{\Gamma}_0;\interp {A}_0;\interp{\Delta}_0)
  \end{align*}
  where we define $\interp{x_0:A_0,\ldots}_0 = \interp{A_0}_0,\ldots$.
  Note that this respects typing in that if $\Gamma \vdash t : A$,
  then $s(\sem{t}_2) = \interp{\Gamma}_0$ and $t(\sem{t}_2) =
  \interp{A}_0$.
\end{definition}

Finally, we define when an interpretation satisfies all \emph{term} dynamism
axioms.
\begin{definition}[Interpretation of GTT Term Dynamism Axioms]
  If $\interp{\cdot}_0,\interp{\cdot}_2$ form an interpretation of
  $\Sigma_0,\Sigma_1,\Sigma_2$ in $\cat C$ and $\Sigma_3$ is a $3$-PTT
  signature relative to $\Sigma_0,\Sigma_1,\Sigma_2$, then we say
  $\interp{\cdot}_0,\interp{\cdot}_1$ interpret $\Sigma_3$ if for
  every term dynamism axiom $(t, u) \in \Sigma_3$, $\sem{t}_2 \dynr
  \sem{u}_2$ holds in $\cat C$.
\end{definition}

Then we can show that this interpretation of term dynamism is also
\emph{sound} in that if all term dynamism axioms are valid, then all
derivable term dynamism statements are true in the model.
\begin{theorem}[Soundness of Term Dynamism]
  If $\interp{\cdot}_0,\interp{\cdot}_2$ form an interpretation of
  $\Sigma = (\Sigma_0,\Sigma_1,\Sigma_2,\Sigma_3)$ in a CPM $\cat C$,
  then for every $\Phi \vdash t \dynr t' : A \dynr A'$ provable in PTT
  from $\Sigma$, $\sem{t}_2 \dynr \sem{u}_2$ holds in $\cat C$.
\end{theorem}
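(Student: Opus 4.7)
The plan is to proceed by induction on the derivation of $\Phi \vdash t \dynr t' : A \dynr A'$, handling each of the five primitive rules of term dynamism (TmDyn-Ax, TmDyn-Var, TmDyn-Comp, TmDyn-Refl, TmDyn-Trans) using the preorder and monotonicity structure of the CPM $\cat C$. Before beginning the induction, I would establish a \emph{substitution lemma}: for any $\Delta \vdash \gamma : \Gamma$ and $\Gamma \vdash t : A$, we have
\[ \sem{t[\gamma]}_2 = \sem{t}_2 \circ \sem{\gamma}_2, \]
where $\sem{\gamma}_2 \in \Ctx(\cat C)_1$ is the substitution in $\cat C$ assembled from $\sem{\gamma(x_i)}_2$ componentwise. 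This is proved by induction on $t$: the variable case uses the ``Projection'' law of the CPM, and the function-symbol case uses associativity together with the analogous substitution property of the context category $\Ctx(\cat C)$. I would also note in passing that the interpretation of a substitution is monotone in the sense that $\Psi \vdash \gamma \dynr \gamma' : \Phi$ entails $\sem{\gamma}_2 \dynr \sem{\gamma'}_2$, since $\Ctx(\cat C)_1$ is ordered pointwise.

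With these preliminaries the cases dispatch uniformly. TmDyn-Ax is immediate from the definition of ``interprets $\Sigma_3$''. TmDyn-Refl follows from reflexivity of the preorder on $\cat C_1$ applied to $\sem{t}_2$. TmDyn-Trans is handled by transitivity of the same preorder: although the syntactic rule threads a middle term $t'$ through an intermediate context $\Gamma'$, in $\cat C_1$ both $\sem{t}_2$ and $\sem{t''}_2$ are simply multiarrows, and the inductive hypotheses give $\sem{t}_2 \dynr \sem{t'}_2 \dynr \sem{t''}_2$, from which the conclusion follows. TmDyn-Var reduces to monotonicity of the projection function $x : \Ctx(\cat C)_0 \times \cat{C}_0 \times \Ctx(\cat C)_0 \to \cat{C}_1$ applied to the pointwise orderings guaranteed by $\Phi$ and the hypothesis $A \dynr A'$.

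The principal case is TmDyn-Comp. By the substitution lemma, the two sides of the conclusion are $\sem{t}_2 \circ \sem{\gamma}_2$ and $\sem{t'}_2 \circ \sem{\gamma'}_2$. The inductive hypothesis on the first premise gives $\sem{t}_2 \dynr \sem{t'}_2$, and componentwise application of the inductive hypothesis to the substitution-dynamism premise yields $\sem{\gamma}_2 \dynr \sem{\gamma'}_2$ in $\Ctx(\cat C)_1$. Monotonicity of composition $\circ : \cat{C}_1 \times_{\Ctx(\cat C)_0} \Ctx(\cat C)_1 \to \cat{C}_1$ then produces the desired ordering.

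The main obstacle I expect is stating and proving the substitution lemma cleanly, since syntactic substitution is defined inductively on terms and threaded through every constructor, whereas composition in the CPM is an algebraic operation; the proof must verify that these definitions agree on generators (variables, via the Projection law, and function-symbol applications, via Associativity) so that the equation extends to all PTT terms. Once this is in place, the monotonicity and preorder axioms of a CPM make each case of the inductive step essentially a one-line argument.
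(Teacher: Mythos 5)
Your proof is correct and follows essentially the same route as the paper's: induction on term dynamism derivations, with each primitive rule discharged by the corresponding CPM axiom (monotonicity of projections for \textsc{TmDyn-Var}, monotonicity of $\circ$ for \textsc{TmDyn-Comp}, the preorder laws for reflexivity and transitivity, and the interpretation hypothesis for axioms). The substitution lemma $\sem{t[\gamma]}_2 = \sem{t}_2 \circ \sem{\gamma}_2$ that you make explicit is left implicit in the paper's proof here, though it is stated and used later in the initiality theorem, so spelling it out is a reasonable strengthening rather than a divergence.
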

\begin{proof}
  By induction on term dynamism derivations.
  \begin{enumerate}
  \item Variable rule follows by monotonicity of projections
  \item Composition rule follows by monotonicity of $\circ$.
  \item Reflexivity and transitivity follow because $\dynr$ is a
    preorder in $\cat C$.
  \item Axioms follow by assumption.
    \qedhere
  \end{enumerate}
\end{proof}

We summarize these in the following soundness theorem.
\begin{theorem}[Soundness of Preorder Type Theory]
  \label{def:ptt-sem}
  For any PTT signature $\Sigma$ and CPM $\cat{C}$ and interpretation
  $\interp{\cdot}_0,\interp{\cdot}_1$ of $\Sigma$ in $\cat C$,
  \begin{enumerate}
  \item For every PTT type $A$, $\sem{A}_0$ is an object of $\cat{C}$.
  \item If $A \dynr A'$ is provable in PTT, then $\sem{A}_0 \dynr
    \sem{A'}_0$ in $\cat C$.
  \item For every $\Gamma \vdash t : A$ in PTT, $\sem{t}_2$ is a
    multi-arrow in $\cat C$ with $s(\sem{t}_2) = \interp{\Gamma}_0$ and
    $t(\sem{t}_2) = \interp{A}_0$.
  \item If $\Phi \vdash t \dynr t' : A \dynr A'$ is provable in GTT,
    then $\sem{t}_2 \dynr \sem{t'}_2$ in $\cat C$.
  \end{enumerate}
\end{theorem}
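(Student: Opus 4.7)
The plan is to assemble the four clauses of the theorem from the lemmas and definitions already established in this subsection, since the statement is a consolidated summary rather than a new result. Each clause corresponds to a piece of machinery built up in stages: clause (1) to the inductive extension of $\interp{\cdot}_0$ to all types, clause (2) to the Soundness of PTT Type Dynamism theorem, clause (3) to the source/target-respecting property of $\sem{\cdot}_2$ packaged into Definition~\ref{def:ptt-soundness}, and clause (4) to the Soundness of Term Dynamism theorem.

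First I would handle clause (1) by induction on the formation rules of PTT types. In bare preorder type theory the only types are base types $X \in \Sigma_0$, and $\interp{X}_0 \in \cat{C}_0$ by the definition of an interpretation of a $0$-PTT signature, so there is essentially nothing to do. Next, clause (2) is an immediate citation of the Soundness of PTT Type Dynamism theorem: the proof there proceeds by induction on derivations of $A \dynr A'$, with reflexivity and transitivity handled by the fact that $\cat{C}_0$ is a preorder, and axioms handled by the assumption that $\interp{\cdot}_0$ interprets $\Sigma_1$.

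For clause (3), I would appeal to Definition~\ref{def:ptt-soundness}, checking by induction on the derivation of $\Gamma \vdash t : A$ that $s(\sem{t}_2) = \interp{\Gamma}_0$ and $t(\sem{t}_2) = \interp{A}_0$. The variable case holds because the projection morphism $x(\interp{\Gamma}_0;\interp{A}_0;\interp{\Delta}_0)$ has source $\interp{\Gamma}_0,\interp{A}_0,\interp{\Delta}_0$ and target $\interp{A}_0$ by the CPM axioms, and the function-symbol case uses the typing constraints imposed on $\interp{\cdot}_2$ together with the source/target compatibility of composition in a CPM. Finally, clause (4) is the Soundness of Term Dynamism theorem, proved by induction on term dynamism derivations: the variable rule uses monotonicity of the projection functions, the compositionality rule uses monotonicity of $\circ$, reflexivity and transitivity use the preorder structure on $\cat{C}_1$, and axioms hold by the assumption on $\Sigma_3$.

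There is no real obstacle here beyond bookkeeping; the only point demanding care is that each clause carries the presuppositions required by the next, so when invoking the clause~(4) induction hypothesis one must already know that sources and targets line up (clause~(3)) and that type dynamism in the model holds wherever the derivation uses it (clause~(2)). Organizing the proof so that clauses are established in the order (1), (2), (3), (4) ensures these dependencies are available when needed, and the theorem then follows by stitching the four pieces together.
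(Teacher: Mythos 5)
Your proposal is correct and matches the paper exactly: the paper presents this theorem as a summary of the preceding results (the interpretation of base types, Soundness of PTT Type Dynamism, the typing-respecting property noted in the definition of the term interpretation, and Soundness of Term Dynamism) and gives no separate proof. Your assembly of the four clauses from those pieces, including the inductive case analyses for clauses (2)--(4), is precisely the intended argument.
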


Next, we show the \emph{completeness} of PTT with respect to this
semantics, which informally means that if a type or term dynamism
ordering is satisfied in every CPM that interprets a signature, then
the ordering is syntactically derivable.
We prove it in the standard method for categorical models, which is to
show that the \emph{syntax itself} presents a CPM where term and type
dynamism are given by derivability of syntactic term and type dynamism
proofs.
\begin{theorem}[Completeness of CPM Semantics] 
  \label{thm:ptt-completeness}
  Let $\Sigma$ be a PTT signature and let all syntax be relative to
  $\Sigma$.
  \begin{enumerate}
  \item For any two types $A,A'$, if $\interp{A}_0 \dynr \interp{A'}_0$
    for every interpretation $\interp{\cdot}_0,\interp{\cdot}_1$ of
    $\Sigma$, then $A \dynr A'$ is derivable in PTT.
  \item For any two terms $t,t'$, if $\sem{t} \dynr \sem{t'}$ for
    every interpretation $\interp{\cdot}_0,\interp{\cdot}_2$, then $t
    \dynr t'$ is derivable in PTT.
  \end{enumerate}
\end{theorem}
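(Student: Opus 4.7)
The plan is to prove completeness by constructing a \emph{syntactic CPM} $\cat{C}_\Sigma$ out of PTT itself, and observing that the canonical interpretation in this CPM sends each type and term to itself, so any ordering holding universally must already be syntactically derivable.

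First I would define $\cat{C}_\Sigma$ as follows. Objects are PTT types, with preorder given by derivable type dynamism $A \dynr A'$ (reflexive and transitive, so a preorder by \textsc{TyDyn-Refl} and \textsc{TyDyn-Trans}). The contexts $\Ctx(\cat{C}_\Sigma)_0$ then consist of lists of types ordered pointwise, which by construction coincides with the dynamism-context judgment $\Phi : \Gamma \dynr \Gamma'$ of Figure~\ref{fig:type-ctx-prec}. Multiarrows $(\cat{C}_\Sigma)_1(\Gamma; A)$ are PTT terms $\Gamma \vdash t : A$, preordered by derivable term dynamism: $t \dynr t'$ when $\Phi \vdash t \dynr t' : A \dynr A'$ is derivable for the (unique) pointwise-reflexive $\Phi$ on $\Gamma, \Gamma'$ and the assumed $A \dynr A'$. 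Identity is $\id_A = x$ (a variable), the projection $x(\Gamma; A; \Delta)$ is the corresponding variable rule, and composition is capture-avoiding substitution. Monotonicity of composition is exactly \textsc{TmDyn-Comp}, monotonicity of projections is \textsc{TmDyn-Var} plus monotonicity of the context ordering, and the unit, projection, and associativity laws for substitution hold as strict equalities by the standard substitution calculus on syntactic terms.

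Next I would exhibit the canonical interpretation $\interp{\cdot}_0, \interp{\cdot}_2$ of $\Sigma$ in $\cat{C}_\Sigma$: each base type $X \in \Sigma_0$ is sent to $X$ itself, each type dynamism axiom in $\Sigma_1$ is validated because it is itself derivable via \textsc{TyDyn-Ax}, each function symbol $f \in \Sigma_2$ is sent to the corresponding multiarrow $f(x_0, \ldots)$, and each term dynamism axiom in $\Sigma_3$ is validated via \textsc{TmDyn-Ax}. A straightforward induction on type structure using the clauses of $\interp{\cdot}_0$ shows $\sem{A}_0 = A$, and an induction on term structure using the clauses of Definition~\ref{def:ptt-soundness} shows $\sem{\Gamma \vdash t : A}_2 = t$. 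Completeness is then immediate: if $\interp{A}_0 \dynr \interp{A'}_0$ holds in every interpretation, then it holds in the canonical interpretation in $\cat{C}_\Sigma$, but there it \emph{is} the statement that $A \dynr A'$ is derivable in PTT; dually for terms.

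The main obstacle I expect is the verification that $\cat{C}_\Sigma$ really is a CPM and not merely a CPM-up-to-$\equidyn$. The axioms for substitution (identity, projection, associativity) must hold as strict equalities on terms, not merely as equi-dynamism, so I need to commit to a concrete syntactic definition of substitution on open terms under which these laws are definitional. A secondary subtlety is that the preorder on $(\cat{C}_\Sigma)_1$ uses term dynamism indexed by a dynamism context and a type dynamism relation, so I must check that the induced pointwise ordering on substitutions $\Ctx(\cat{C}_\Sigma)_1(\Delta; \Gamma)$ coincides with the substitution-dynamism judgment $\Psi \vdash \gamma \dynr \gamma' : \Phi$, and that \textsc{TmDyn-Comp} applied componentwise gives exactly the monotonicity of $\circ$ as a map on the pullback; this relies on the uniqueness of the dynamism context $\Phi$ given its endpoints in PTT, noted earlier in the paper.
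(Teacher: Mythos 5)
Your proposal is correct and follows essentially the same route as the paper: both construct the syntactic CPM $\PTT(\Sigma)$ (types ordered by derivable type dynamism, terms ordered by derivable term dynamism, composition by substitution), observe the canonical self-interpretation, and conclude that universal validity implies derivability. The only detail the paper makes explicit that you leave implicit is that multiarrows are terms quotiented by $\alpha$-renaming (but not variable reordering), which is what makes the substitution laws hold as strict equalities; your noted ``main obstacle'' is resolved exactly this way.
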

\begin{proof}
  We construct the CPM $\PTT(\Sigma)$ as follows:
  \begin{enumerate}
  \item The objects are the types generated by $\Sigma$.
  \item $A \dynr A'$ holds when $A \dynr A'$ is derivable.
  \item A term $\PTT(\Sigma)(A_0,\ldots;B)$ is a term
    $x_0:A_0,\ldots \vdash t : B$ for some variables
    $x_0,\ldots,$, quotiented by $\alpha$-renaming (but not
    reordering). Composition is given by substitution and
    identity/projection by variable usage.
  \item $t \dynr t'$ holds when $\Phi \vdash t \dynr t' : A
    \dynr A'$ for the unique $\Phi,A,A'$ making that well-formed.
  \end{enumerate}

  Proving this is a CPM involves the standard proofs of the
  associativity and unitality of substitution and an easy proof that
  substitution is monotone with respect to term dynamism.

  There is an obvious interpretation of $\Sigma$ in $\PTT(\Sigma)$
  that interprets every base type and function symbol as itself. Then
  if a type/term dynamism theorem is true in every model, it is in
  particular true for $\PTT(\Sigma)$, which means exactly that it is
  derivable.
\end{proof}

Together these theorems imply \emph{initiality} of Preorder Type
Theory in the category of cartesian preorder multicategories,
analogous to the classical theorem for cartesian closed categories and
typed $\lambda$-calculus \cite{Lambek:1986}. Essentially the
initiality theorem packages up soundness and completeness into a
compact abstract statement, and formalizes the way in which preorder
type theory is a canonical syntax for CPMs.

First, we define a category of signatures and signature translations.
\begin{definition}[{Category of PTT Signatures}]
  We define a translation of PTT signatures $i : \Sigma \to \Theta$
  consists of
  \begin{enumerate}
  \item a function translating base types $i_0 : \Sigma_0 \to \Theta_0$.
  \item and a function translating function symbols $i_2 : \Sigma_2
    \to \Theta_2$ that respects typing in that $t(i_2(f)) = i_0(t(f))$
    and $s(i_2(f)) = i_0(s(f))$ where $i_0(s(f))$ is the pointwise
    application of $i_0$.
  \end{enumerate}
  Such that axioms are translated to axioms in that
  \begin{enumerate}
  \item if $(X,Y) \in \Sigma_1$, then $(i_0(X),i_0(Y)) \in \Theta_1$
  \item if $(f,g) \in \Sigma_3$, then $(i_2(f),i_2(g)) \in \Theta_3$.
  \end{enumerate}

  Translations compose by composing the components and this
  makes a category \PTTS whose objects are PTT signatures and
  morphisms are translations.
\end{definition}

Then we define a category of CPMs and functors.
\begin{definition}[{Category of CPMs}]
  A functor of CPMs $F : \cat C \to \cat D$ consists of a monotone
  function on objects $F_0 : \cat C_0 \to \cat D_0$ and a monotone
  function on multiarrows $F_1 : \cat C_1 \to \cat D_1$ that preserves
  source and target in that $t(F_1 (f)) = F_0(t(f)), s(F_1 (f)) =
  F_0(s(f))$, preserves composition in that $F_1(f \circ \gamma) =
  F_1(f) \circ F_1(\gamma)$ and preserves projection in that
  $F_1(x(\Gamma;A;\Delta)) =
  x(F_1(\Gamma);F_1(A);F_1(\Delta))$. Note that we
  implicitly have lifted $F_0,F_1$ pointwise to apply to contexts and
  substitutions

  This makes a category \CPM whose objects are CPMs and morphisms are
  functors with obvious identity and composition.
\end{definition}

Next, we define a functor that takes a CPM and produces the ``complete
signature'': one that includes all types as base types, all
multi-arrows as function symbols, and all true dynamism facts as
axioms.
\begin{definition}[{Complete Signature}]
  We define the complete signature functor $U : \CPM \to \PTTS$ on
  objects as follows.
  \begin{enumerate}
  \item $(U(\cat C))_0 = C_0$
  \item $(U(\cat C))_1 = \{ (A,A') \in \cat{C}_1^2 \alt A \dynr A' \}$
  \item $(U(\cat C))_2 = \cat{C}_1$
  \item $(U(\cat C))_3 = \{ (f(x_0,\ldots), g(x_0,\ldots)) \alt f,g \in \cat{C}_1 \wedge f \dynr g \wedge |s(f)| = n \}$
  \end{enumerate}
  $U(\cat C)$ has as base types the objects of
  $\cat C$, type dynamism axioms all true orderings between objects
  in $\cat C$, function symbols all morphisms of $U(\cat C)$ and term
  dynamism axioms all true orderings of morphisms in $\cat C$. This
  clearly extends to a functor.
\end{definition}

Then we can prove that model construction is a left adjoint functor to
this forgetful functor, establishing that the construction in the
proof of the completeness theorem produces the ``minimal'' way to make
a CPM from a signature.
The connection to the soundness and completeness theorems can be seen
by the fact that an interpretation $\interp{\cdot}_0,\interp{\cdot}_1$
of a signature $\Sigma$ in a CPM $\cat C$ is equivalent to a
translation of signatures $\Sigma \to U(\cat C)$. The initiality
theorem then states that any such interpretation uniquely extends to a
morphism of CPMs $PTT(\Sigma) \to \cat C$.
\begin{theorem}[PTT is the Initial CPM generated by a Signagture]
  The syntax of preorder type theory $PTT : \PTTS \to \CPM$ is a left
  adjoint functor to $U$.
\end{theorem}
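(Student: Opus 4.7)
The plan is to exhibit a natural bijection
\[
\mathrm{Hom}_{\CPM}(PTT(\Sigma),\cat C) \;\cong\; \mathrm{Hom}_{\PTTS}(\Sigma, U(\cat C))
\]
and to verify naturality in $\Sigma$ (naturality in $\cat C$ then follows by standard adjunction arguments). The passage from right to left is exactly the soundness package of Theorem~\ref{def:ptt-sem}: a signature translation $i:\Sigma \to U(\cat C)$ unpacks to an interpretation $\interp{\cdot}_0,\interp{\cdot}_1,\interp{\cdot}_2$ of the four components of $\Sigma$ in $\cat C$, because $U(\cat C)$ has base types $= \cat C_0$, function symbols $= \cat C_1$, and axioms $=$ true dynamism facts. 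Extending this interpretation by structural recursion on types and terms (as in Definition~\ref{def:ptt-soundness}) yields the object and multi-arrow parts of a CPM functor $\widehat i : PTT(\Sigma) \to \cat C$; preservation of projection, identity, and composition is immediate from the clauses of $\sem{\cdot}$, and monotonicity on objects/arrows is precisely soundness of type and term dynamism.

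The passage from left to right uses a unit map $\eta_\Sigma : \Sigma \to U(PTT(\Sigma))$ defined by sending each base type to itself (as an object of $PTT(\Sigma)$), each function symbol $f \in \Sigma_2(A_0,\ldots;B)$ to the term $f(x_0,\ldots)$ with distinct fresh variables, and sending axioms in $\Sigma_1, \Sigma_3$ to the derivable dynamism facts they name. Given a CPM functor $F : PTT(\Sigma) \to \cat C$, set $\check F = U(F) \circ \eta_\Sigma : \Sigma \to U(\cat C)$. To functoriality of $PTT$ itself: a translation $i:\Sigma \to \Theta$ induces a CPM functor $PTT(i):PTT(\Sigma) \to PTT(\Theta)$ by applying $i_0$ on base types and recursing over types, and applying $i_2$ on function symbols and recursing over terms; the resulting maps respect $\dynr$ because $i$ preserves axioms, and they respect composition/identity because substitution and variables are interpreted homomorphically.

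To verify that $i \mapsto \widehat i$ and $F \mapsto \check F$ are mutually inverse, one checks first that $\widehat{\check F} = F$: both agree on base types and function symbols by construction of $\eta_\Sigma$, and they agree on every derived type and term by induction, using that $F$, being a CPM functor, commutes with projection and composition—exactly the clauses that define $\sem{\cdot}$. Conversely, $\check{\widehat i} = i$ holds essentially by definition, since $\widehat i$ on a base type is $\interp{X}_0 = i_0(X)$ and on a function symbol $f(x_0,\ldots)$ is $\interp{f}_2$ composed with projections, whose image under $U$ pulled back along $\eta_\Sigma$ returns $i_2(f)$. Naturality in $\Sigma$ amounts to checking that precomposing $\widehat i$ with $PTT(j)$ for a translation $j:\Sigma'\to \Sigma$ agrees with $\widehat{i \circ j}$, which holds by the inductive definition of $\sem{\cdot}$.

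The main obstacle is the coherence bookkeeping for $\widehat{\check F} = F$: one must argue by mutual induction that on every term-in-context of $PTT(\Sigma)$, the reconstructed functor $\widehat{\check F}$ produces exactly the same multi-arrow as $F$, which requires carefully tracking how the completeness construction of $PTT(\Sigma)$ (Theorem~\ref{thm:ptt-completeness}) represents identity substitutions and composition so that $F$'s preservation of these operations is what drives the induction. The rest is straightforward diagram chasing in the categories $\PTTS$ and $\CPM$.
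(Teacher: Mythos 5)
Your proof is correct and is essentially the paper's argument: the unit $\eta_\Sigma$ sending generators to themselves, the extension of a translation to a CPM functor via the soundness construction, and the inverse/uniqueness check by induction on terms using that a CPM functor preserves projections and composition are exactly the steps the paper takes. The only difference is packaging: you use the hom-set-bijection formulation of adjointness, which obliges you to construct the functorial action $PTT(j)$ on translations and verify naturality in $\Sigma$ explicitly, whereas the paper uses the universal-morphism formulation of a left adjoint, which delivers functoriality of $PTT$ and naturality for free.
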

\begin{proof}
  We use the formulation of a left adjoint in terms of universal
  morphisms \cite{maclane:71}. Note that this definition does not
  presuppose that $PTT$ is a functor, but rather proves that it is a
  functor.
  \begin{enumerate}
  \item As shown in the completeness theorem
    \ref{thm:ptt-completeness}, $PTT(\Sigma)$ is a CPM.
  \item For any signature $\Sigma$, there is a ``universal'' translation of signatures
     $\eta_{\Sigma} : \Sigma \to U(PTT(\Sigma))$ that
    interprets every generator as ``itself'', i.e.
    \begin{align*}
      \eta(X \in \Sigma_0) &= X\\
      \eta(f \in \Sigma_2) &= f(x_0,\ldots,x_{n-1})
    \end{align*}
    where $n$ is the length of $s(f)$. Every ordering axiom is
    satisfied in $U(PTT(\Sigma))$ by definition of $U$.
  \item For every translation $\interp{\cdot} : \Sigma \to U{\cat C}$, there is a
    morphism of CPMs $\sem{\cdot} : PTT(\Sigma) \to {\cat C}$ given by
    setting $\sem{A}_0 = \interp{A}_0$ and defining $\sem{t}_2$ as in
    the proof of the soundness Theorem~\ref{def:ptt-soundness}. The
    interpretation is functorial because it satisfies
    \begin{align*}
      \sem{t[\gamma]}_2 &= \sem{t}_2 \circ \sem{\gamma}_2
    \end{align*}
    which follows by induction.
  \item Every translation $\interp{\cdot} : \Sigma \to U{\cat C}$ factors through the
    universal translation in that $\interp{\cdot} = U(\sem{\cdot}) \circ
    \eta_{\Sigma}$.
    First, for any base type $X \in \Sigma_0$, we have
    \[ \sem{\eta_{\Sigma}(X)}_0 = \sem{X}_0 = \interp{X}_0 \]
    Next, for any function symbol $f \in \Sigma_1$ with input types $A_0,\ldots$, we
    have
    \begin{align*}
      \sem{\eta_{\Sigma}(f)}_2 &= \sem{f(x_0,\ldots)}_2 \\
      &= \interp{f}_2 \circ (\sem{x_0}_2,\ldots)\\
      &= \interp{f}_2 \circ \id_{(\sem{A_0}_0,\ldots)}\\
      &= \interp{f}_2
    \end{align*}
  \item $\sem{\cdot}$ is the \emph{unique} morphism satisfying this
    factorization. To show this, let $F : PTT(\Sigma) \to \cat{C}$ be
    a CPM morphism satisfying $UF \circ \eta_\Sigma = i$. We need to
    show that $F = \sem{\cdot}$.  First, on objects, because of the
    factorization
    \[ F_0(X) = \interp{X}_0 = \sem{X}_0 \]
    Next, for morphisms we proceed by induction on PTT terms.
    \begin{enumerate}
    \item For a variable $\Gamma,x:A,\Delta \vdash x : A$,
      \begin{align*}
        F_1(x) &= x(F_0(\Gamma);F_0(A);F_0(\Delta))\tag{functoriality}\\
        &= x(\sem{\Gamma}_0;\sem{A}_0;\sem{\Delta}_0)\tag{factorization on objects}\\
        &= \sem{x}_2\tag{definition}
      \end{align*}
    \item For function applications
      \begin{align*}
        F_1(f(t_0,\ldots)) &= F_1(f) \circ (F_1(t_0),\ldots)\tag{functoriality}\\
        &= \interp{f}_2 \circ (F_1(t_0),\ldots) \tag{factorization}\\
        &= \interp{f}_2 \circ (\sem{t_0}_2,\ldots) \tag{induction}\\
        &= \sem{f(t_0,\ldots)}_2 \tag{definition}
      \end{align*}
    \end{enumerate}
  \end{enumerate}
\end{proof}

\fi
\ifshort \subparagraph*{Gradual Typing Structures}
\else \subsection{Gradual Typing Structures}\fi

Next, we describe the additional structure on a CPM to model full
gradual type theory: casts are modeled by the structure of an
\emph{equipment}~\cite{shulman2008framed}, a dynamic type by a
greatest object, and the type error by a least element of every
hom-set.
\ifshort
\begin{definition}[Gradual Structure on a CPM]
  \hfill
\begin{enumerate}
  \item
    A CPM $\cat{C}$ is an \emph{equipment}
if for every $A \dynr A'$, there exist morphisms $u_{A,A'} \in
\cat{C}(A,A')$ and $d_{A,A'} \in \cat{C}(A',A)$ such that
$u_{A,A'} \dynr \id_{A'}$ and $\id_A  \dynr u_{A,A'}$  
and $d_{A,A'} \dynr \id_{A'}$ and $\id_A \dynr d_{A,A'}$.
An equipment is \emph{coreflective} if also $d_{A,A'} \circ u_{A,A'} \dynr \id_A$.  
\item
  A greatest object in a CPM $\cat{C}$ is a greatest element of the
preorder of objects $\cat{C}_0$.
\item
  A CPM $\cat{C}$ \emph{has local bottoms} if every hom set
  $\cat{C}(A_0,\ldots;B)$ has a least element $\bot$
  and for every substitution $\gamma$ we have $\bot \circ \gamma
  \equidyn \bot$.
  \end{enumerate}
\end{definition}
\else
\begin{definition}[Upcasts, Downcasts in a CPM]
  In a CPM $\cat{C}$, if $A \dynr A'$, we define
  \begin{enumerate}
    \item A morphism $u : A \to A'$ is an \emph{upcast} for $A \dynr A'$
      if $u \dynr \id_{A'}$ and $\id_A \dynr u$.
    \item A morphism $d : A' \to A$ is a \emph{downcast} for $A \dynr
      A'$ if $d \dynr \id_{A'}$ and $\id_A \dynr d$.
  \end{enumerate}
\end{definition}

It will be useful for the completeness theorem that upcasts and
downcasts are unique up to order-equivalence, a semantic version of
Theorem \ref{thm:syntactic-casts-are-uniq}.

\begin{lemma}[Upcasts, Downcasts are Unique]
  \label{lem:casts-are-uniq}
  If $u, u'$ are both upcasts for $A \dynr A'$, then $u \equidyn u'$.
  Similarly if $d,d'$ are both downcasts for $A \dynr A'$, then $d
  \equidyn d'$.
\end{lemma}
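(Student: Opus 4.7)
The plan is to mirror the syntactic uniqueness argument of Theorem~\ref{thm:syntactic-casts-are-uniq} semantically, using monotonicity of composition in the CPM. The syntactic proof strung together the defining rules of two putative upcasts by transitivity of term dynamism; semantically, the analogous step collapses into a single use of the fact that $\circ$ is a monotone operation on $\cat{C}_1 \times_{\Ctx(\cat{C})_0} \Ctx(\cat{C})_1$.

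For upcasts, suppose $u, u' : A \to A'$ both satisfy the defining conditions. From $u$'s definition we have $u \dynr \id_{A'}$ in $\cat{C}_1$, and from $u'$'s definition we have $\id_A \dynr u'$ in $\cat{C}_1$. I would check that $(u,\id_A)$ and $(\id_{A'},u')$ both lie in the pullback: the source context of $u$ (resp.\ $\id_{A'}$) equals the target context of $\id_A$ (resp.\ $u'$) viewed as a single-element substitution, namely $A$ (resp.\ $A'$). Since the ordering on $\Ctx(\cat{C})_1$ is pointwise, the relation $\id_A \dynr u'$ on single-element substitutions is just the underlying $\cat{C}_1$-relation, which holds. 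Monotonicity of $\circ$ then yields $u \circ \id_A \dynr \id_{A'} \circ u'$, and by the identity and projection laws this collapses to $u \dynr u'$. Swapping the roles of $u$ and $u'$ gives the reverse inequality, hence $u \equidyn u'$.

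The downcast case is formally dual: given $d, d' : A' \to A$ downcasts for $A \dynr A'$, I would use $\id_A \dynr d'$ in $\cat{C}_1$ and $d \dynr \id_{A'}$ as single-element substitutions, together with the pullback pairs $(\id_A,d)$ (matching on $A$) and $(d',\id_{A'})$ (matching on $A'$). Monotonicity of $\circ$ then delivers $d = \id_A \circ d \dynr d' \circ \id_{A'} = d'$, with the symmetric argument completing the equi-dynamism.

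The main obstacle is not conceptual but bookkeeping: one must be careful to distinguish the orderings on $\cat{C}_1$ versus on $\Ctx(\cat{C})_1$, and to verify the pullback-compatibility conditions $s(f) = t(\gamma)$ for each pair so that the monotone composition map really does apply. Once this is set up, each direction of the equivalence is a one-line appeal to monotonicity, making this lemma the semantic counterpart of the sequent-style ``UR/UL then UL/UR'' manoeuvre used syntactically.
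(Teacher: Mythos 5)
Your proof is correct and is essentially the paper's own argument: whisker each cast with an identity on the appropriate side ($u = u \circ \id_A$, $u' = \id_{A'}\circ u'$), apply monotonicity of composition to the defining inequalities $u \dynr \id_{A'}$ and $\id_A \dynr u'$, and conclude by symmetry. The only differences are presentational — you verify the pullback bookkeeping explicitly and spell out the downcast case rather than appealing to duality.
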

\begin{proof}
  By duality it is sufficient to show the upcast case. By symmetry of
  the situation it is sufficient to show $u \dynr u'$. First, $u = u
  \circ \id_{A}$ and $u' = \id_{A'} \circ u'$. Since they are upcasts,
  we also have $u \dynr \id_{A'}$ and $\id_A \dynr u'$, so by
  monotonicity of composition we have $u = u \circ \id_{A} \dynr \id_{A'}
  \circ u' = u'$.
\end{proof}

\begin{definition}[Equipment \cite{shulman2008framed}]
A CPM $\cat{C}$ is an \emph{equipment} if for every $A \dynr A'$,
there exist upcasts and downcasts for $A \dynr A'$.  An equipment is
\emph{coreflective} if also $d_{A,A'} \circ u_{A,A'} \dynr \id_A$.
\end{definition}

\begin{definition}[Greatest Object]
  A greatest object $\top$ in a CPM $\cat{C}$ is a greatest element of
  the preorder of objects $\cat{C}_0$.
\end{definition}

\begin{definition}[Local Bottoms]
  A CPM $\cat{C}$ \emph{has local bottoms} if every hom set
  $\cat{C}(\Gamma;A)$ has a least element $\bot_{\Gamma;A}$
  and for every substitution $\gamma \iflong\in
  \Ctx(\cat{C})_1(\Delta;\Gamma)\fi$ we have $\bot_{\Gamma;A} \circ \gamma
  \equidyn \bot_{\Delta;A}$.
\end{definition}
\fi

\iflong
\subsection{Interpreting Negative Types}
\fi
Next, we define a \emph{cartesian closed CPM}, which will model
negative function and product types.  
\iflong
While we use the adjectives ``closed'' and ``cartesian'', the
structure exhibited here is unique up to canonical
\emph{isomorphism}, but \emph{not} unique up to
\emph{order-equivalence} (equi-dynamism).
Thus, there may be multiple order-inequivalent ways that a CPM can be closed or cartesian, so it is important that e.g. a
closed CPM is a CPM \emph{with} a \emph{choice} of
exponentials.
\fi
\ifshort
We present the definition for function types in detail; a definition of
a cartesian CPM (CPM with products) is in \cite{extended}.  A
\emph{cartesian closed CPM} is a CPM with a choice of both cartesian and
closed structure.
\fi
Since all of the concrete models we consider are strict, we take a
strict interpretation of naturality and $\beta\eta$, but this could
likely be weakened.
\begin{definition}[Closed CPM]
  A Closed CPM is a CPM $\cat{C}$ with a monotone function
  on objects $\to : \cat{C}_0^2 \to \cat{C}_0$ making for every pair of objects
  $A,B \in \cat{C}$ an ``exponential'' object $A \to B$ with a monotone
  function
  \inlineoronline{\lambda : \cat{C}(\Gamma,A;B) \to \cat{C}(\Gamma;A \to B)}
  that is \emph{natural}
  \ifshort in an appropriate sense,
  \else in that for any appropriate $\Gamma,\gamma,h$
        \inlineoronline{\lambda(h) \circ \gamma = \lambda(h \circ
        (\gamma,x(\Gamma;A;\cdot)))}
  \fi
  with a morphism
  \inlineoronline{\app \in \cat{C}(A\to B, A; B)}
  such that the function given by
  \inlineoronline{f \mapsto \app \circ (f \circ \textrm{wkn}, x(\Gamma;A;\cdot)) : \cat{C}(\Gamma;A \to B) \to \cat{C}(\Gamma,A;B)}
  is an inverse to $\lambda$ (all up to equality), where here
  $\textrm{wkn} \in \Ctx(\cat{C})_1(\Gamma,A;\Gamma)$ is the evident
  weakening substitution.
\end{definition}

\iflong
\begin{definition}[Cartesian CPM]
  A Cartesian CPM is a CPM $\cat{C}$ with a monotone
  function $\times : \cat{C}_0^2 \to \cat{C}_0$ and a chosen object $1 \in \cat{C}_0$
  with functions
  \[ \pair : \cat{C}(\Gamma;A) \times \cat{C}(\Gamma;B) \to \cat{C}(\Gamma;A \times B) \]
  \[ \unit : 1 \to \cat{C}(\Gamma; 1) \]
  that are natural in that for any $f,g,\gamma$
  \[ \pair(f,g) \circ \gamma = \pair(f \circ \gamma, g \circ \gamma) \]
  \[ \unit \circ \gamma = \unit \]
  and morphisms
  \[ \pi_1 : \cat{C}(A\times B; A)\quad \pi_2 : \cat{C}(A\times B; B) \]
  such that the function given by
  \[ f \mapsto (\pi_1 \circ f, \pi_2 \circ f) : \cat{C}(\Gamma ; A \times B) \to \cat{C}(\Gamma;A) \times \cat{C}(\Gamma;B)\]
  is an inverse to $\pair$.
\end{definition}
A \emph{cartesian closed CPM} is a CPM with a choice
of both cartesian and closed structure.
\fi
\ifshort In the extended version \cite{extended}, we prove the following, where a \emph{GTT
  category} is a CPM that is cartesian closed, a coreflective equipment and has a greatest object and local bottoms.  \fi
\iflong
\begin{definition}
A \emph{GTT category} is a CPM that is cartesian closed, a coreflective
equipment and has a greatest object and local bottoms.
\end{definition}
\fi

\subsection{Soundness, Completeness and Initiality of GTT}
Next, we extend the interpretation of preorder type theory to gradual
type theory. Note that here the interpretation of casts depends on the
soundness of type dynamism, since we need to know that dynamism has a
sound interpretation in order for the semantic upcasts and downcasts
to exist. So we proceed in stages: first an interpretation of base
types, then type dynamism, then terms and finally term dynamism.

\begin{definition}[Interpretation of Base Types]
  An interpretation of a $0$-GTT signature $\Sigma_0$ in a GTT
  category $\cat{C}$ is a function $\interp{\cdot}_0 : \Sigma_0 \to
  \cat{C}_0$ assigning an object to each base type.
\end{definition}

\begin{definition}[Interpretation of GTT Types, Contexts]
  Given an interpretation $\interp{\cdot}_0$ of $\Sigma_0$ in a GTT
  category $\cat C$, we extend this to an interpretation
  $\sem{\cdot}_0$ of all types generated from $\Sigma_0$ as follows:
  \begin{align*}
    \sem{X}_0 &= \interp{X}_0\\
      \sem{\dyn}_0 &= \top\\
      \sem{A \to B}_0 &= \sem{A}_0 \to \sem{B}_0\\
      \sem{A \times B}_0 &= \sem{A}_0 \times \sem{B}_0\\
      \sem{1}_0 &= 1
  \end{align*}

  We extend this also to an interpretation of contexts by defining
  \[ \sem{x_0:A_0,\ldots}_0 = \sem{A_0}_0,\ldots\]
\end{definition}

\begin{definition}[Interpretation of Type Dynamism Axioms]
  An interpretation $\interp{\cdot}_0$ of $\Sigma_0$ in $\cat{C}$
  extends to an interpretation of a $1$-GTT signature $\Sigma_1$ if
  for every type dynamism axiom $(A, B) \in \Sigma_1$, $\sem{A}_0
  \dynr \sem{B}_0$.
\end{definition}

\begin{theorem}[Soundness of GTT Type Dynamism]
  If $\interp{\cdot}_0$ is an interpretation of $\Sigma_0,\Sigma_1$ in
  $\cat C$, then if $A \dynr A'$ is provable in GTT from
  $\Sigma_0,\Sigma_1$, then $\sem{A}_0 \dynr \sem{A'}_0$
\end{theorem}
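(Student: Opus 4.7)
The plan is to mimic the proof for PTT soundness given earlier in the excerpt, proceeding by induction on the derivation of $A \dynr A'$, but augmenting it with cases for the new type dynamism rules that GTT contributes over PTT (namely \textsc{$\dyn$Top} and the monotonicity rules \textsc{${\to}$Mon} and \textsc{${\times}$Mon}). The base cases \textsc{TyDyn-Refl}, \textsc{TyDyn-Trans} and \textsc{TyDyn-Ax} are handled exactly as before: reflexivity and transitivity follow because $\dynr$ is a preorder on $\cat{C}_0$ by virtue of $\cat{C}$ being a CPM, and axioms follow directly from the definition of an interpretation of a $1$-GTT signature.

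Next, for the \textsc{$\dyn$Top} rule, I need $\sem{A}_0 \dynr \sem{\dyn}_0 = \top$, which follows immediately from the assumption that $\cat{C}$ has a greatest object, since $\top$ is by definition greater than every object in $\cat{C}_0$. For the monotonicity rule \textsc{${\to}$Mon}, given $A \dynr A'$ and $B \dynr B'$ (with the induction hypothesis supplying $\sem{A}_0 \dynr \sem{A'}_0$ and $\sem{B}_0 \dynr \sem{B'}_0$), I need $\sem{A \to B}_0 = \sem{A}_0 \to \sem{B}_0 \dynr \sem{A'}_0 \to \sem{B'}_0 = \sem{A' \to B'}_0$, and this is exactly the assertion that the operation $\to : \cat{C}_0^2 \to \cat{C}_0$ is monotone, which is part of the definition of a closed CPM. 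The case for \textsc{${\times}$Mon} is identical, using monotonicity of $\times$ from the definition of a cartesian CPM, and the nullary case for the unit type is trivial (or follows from reflexivity).

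The main ``obstacle,'' such as it is, is not really an obstacle but rather a design consideration: the entire proof goes through so smoothly precisely because the semantic definitions have been set up so that every syntactic rule of type dynamism in GTT has a direct counterpart as a semantic property required of a GTT category. That is, this theorem is less an independent result than a verification that the clauses \emph{monotone function} in the definitions of closed and cartesian CPMs, \emph{greatest object} in the definition of greatest object, and \emph{equipment} structure for casts (which is not used here but will be in the term soundness theorem) collectively exhaust the primitive rules for type dynamism, so that the induction has a case for each. No novel semantic reasoning is required beyond unfolding the definitions.
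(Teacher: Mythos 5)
Your proposal is correct and matches the paper's proof essentially verbatim: induction on the type dynamism derivation, with reflexivity/transitivity handled by the preorder structure on $\cat{C}_0$, the \textsc{$\dyn$Top} case by the greatest object, and the function/product cases by the assumed monotonicity of the cartesian and closed structure. Your explicit treatment of the axiom case and the unit type is a minor completion of details the paper leaves implicit, but the argument is the same.
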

\begin{proof}
  By induction on type dynamism derivations.
  \begin{enumerate}
  \item The reflexivity and transitivity cases follow by the fact that
    $\dynr$ in $\cat C$ is a preorder.
  \item $\sem{A}_0 \dynr \sem{\dyn}_0$ holds because $\sem{\dyn}_0 =
    \top$, which is a greatest element.
  \item The function and product cases follow because exponentials and
    products in $\cat C$ are assumed monotone.
  \qedhere
  \end{enumerate}
\end{proof}

\begin{definition}[Interpretation of Function Symbols]
  If $\interp{\cdot}_0$ is an interpretation of $\Sigma_0,\Sigma_1$ in
  $\cat C$, and $\Sigma_2$ is $2$-GTT signature extending
  $\Sigma_0,\Sigma_1$ then an extension of $\interp{\cdot}_0$ to
  interpret function symbols is a function $\interp{\cdot}_2 :
  \Sigma_2 \to \cat{C}_1$ that respects typing in that
  $t(\interp{f}_2) = \sem{t(f)}_0$ and $s(\interp{f}_2) =
  \sem{s(f)}_0$.
\end{definition}

\begin{definition}[Interpretation of GTT Terms, Substitutions]
  If $\interp{\cdot}_0,\interp{\cdot}_2$ form an interpretation of
  $\Sigma_0,\Sigma_1,\Sigma_2$ in $\cat C$, then we extend this to an
  interpretation $\sem{\cdot}_2$ of all terms in GTT generated by
  $\Sigma_0,\Sigma_1,\Sigma_2$ as follows:
  \begin{align*}
    \sem{f(t_0,\ldots)}_2 &= \interp{f}_2 \circ (\sem{t_0}_2,\ldots)\\
    \sem{\Gamma,x:A,\Delta \vdash x : A}_2 &= x(\sem{\Gamma}_0;\sem{A}_0;\sem{\Delta}_0)\\
    \sem{\upcast{A}{A'}{t}}_2 &= u_{\sem{A}_0,{\sem{A'}_0}} \circ (\sem{t}_2)\\
    \sem{\dncast{A}{A'}{t}}_2 &= d_{\sem{A}_0,{\sem{A'}_0}} \circ (\sem{t}_2)\\
    \sem{\err}_2 &= \bot\\
    \sem{\lambda x:A . t}_2 &= \lambda (\sem{t}_2)\\
    \sem{t u}_2 &= \app \circ (\sem{t}_2, \sem{u}_2)\\
    \sem{(t_1,t_2)}_2 &= \pair(\sem{t_1}_2,\sem{t_2}_2)\\
    \sem{\pi_i t}_2 &= \pi_i \circ (\sem{t}_2) \\
    \sem{()}_2 &= \unit
  \end{align*}
  And we extend it to substitutions by defining $\sem{\gamma}_2(i) =
  \sem{\gamma(x_i)}_2$ for $\gamma : \Delta \vdash x_0:A_0,\ldots$.
\end{definition}

\begin{definition}[Interpretation of Term Dynamism Axioms]
  If $\interp{\cdot}_0,\interp{\cdot}_2$ form an interpretation of
  $\Sigma_0,\Sigma_1,\Sigma_2$ in $\cat C$ and $\Sigma_3$ is a $3$-GTT
  signature relative to $\Sigma_0,\Sigma_1,\Sigma_2$, then we say
  $\interp{\cdot}_0,\interp{\cdot}_1$ interpret $\Sigma_3$ if for
  every term dynamism axiom $(t, u) \in \Sigma_3$, $\sem{t}_2 \dynr \sem{u}_2$.
\end{definition}

\begin{theorem}[Soundness of GTT Term Dynamism]
  If $\interp{\cdot}_0,\interp{\cdot}_2$ form an interpretation of
  $\Sigma = (\Sigma_0,\Sigma_1,\Sigma_2,\Sigma_3)$, then for every
  $\Phi \vdash t \dynr t' : A \dynr A'$ provable in GTT from $\Sigma$,
  $\sem{t}_2 \dynr \sem{t'}_2$.
\end{theorem}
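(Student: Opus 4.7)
The plan is to proceed by structural induction on the derivation of $\Phi \vdash t \dynr t' : A \dynr A'$, extending the soundness proof for PTT term dynamism with cases for the new GTT rules. A preliminary \emph{substitution lemma} $\sem{t[\gamma]}_2 = \sem{t}_2 \circ \sem{\gamma}_2$, proved by induction on $t$ using associativity of $\circ$, the projection law, the naturality of $\lambda$, $\pair$, and $\unit$, and the compatibility $\bot \circ \gamma \equidyn \bot$, is needed for both the TmDyn-Comp case and for the $\beta$ rules. Soundness of GTT type dynamism is invoked silently to ensure that, when comparing two semantic morphisms, the induced object-level inequalities $\sem{\Gamma}_0 \dynr \sem{\Gamma'}_0$ and $\sem{A}_0 \dynr \sem{A'}_0$ hold, so that each $\dynr$ in the conclusion is a well-formed comparison in the preorder on $\cat{C}_1$.

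The structural PTT rules (Var, Comp, Refl, Trans, Ax) are handled exactly as in the PTT soundness proof, via monotonicity of projections and of $\circ$, the preorder axioms on $\cat{C}_1$, and the hypothesis on $\Sigma_3$. For each cast rule, the conclusion is exactly one of the defining properties of the equipment structure: UR and UL unfold to $\id_{\sem{A}_0} \dynr u_{\sem{A}_0,\sem{A'}_0}$ and $u_{\sem{A}_0,\sem{A'}_0} \dynr \id_{\sem{A'}_0}$ respectively, DR and DL are dual, and Retract is the coreflective condition $d_{\sem{A}_0,\sem{A'}_0} \circ u_{\sem{A}_0,\sem{A'}_0} \dynr \id_{\sem{A}_0}$. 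The $\err$Bot rule is immediate: $\sem{\err_A}_2$ is by definition the local bottom of $\cat{C}(\sem{\Gamma}_0;\sem{A}_0)$, hence below $\sem{t}_2$.

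For the negative connectives, the monotonicity rules ($\lambda$Mon, AppMon, PairMon, PrjMon) follow by the induction hypothesis together with monotonicity of $\lambda$, $\app$, $\pair$, $\pi_i$, and $\circ$. The $\beta$ and $\eta$ laws hold as strict \emph{equalities} in the model, since we have taken the strict interpretation of cartesian closed structure: combining the substitution lemma with the inverse relationship between $\lambda$ and the app-based map (and between $\pair$ and $(\pi_1,\pi_2)$) yields the $\beta\eta$ equations on the nose, and equi-dynamism then follows trivially from reflexivity. The $1\eta$ rule follows from $1$ being terminal in the model, i.e.\ uniqueness of maps into $1$.

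The main subtlety is arguably the TmDyn-Trans case, but it is in fact cleaner in the semantics than in the syntax: given $\sem{t}_2 \dynr \sem{t'}_2$ and $\sem{t'}_2 \dynr \sem{t''}_2$ by IH, transitivity of the preorder on $\cat{C}_1$ immediately yields $\sem{t}_2 \dynr \sem{t''}_2$, with the required source and target inequalities supplied by the monotonicity of $s,t$ together with transitivity of the object preorder. No construction of a ``middle'' semantic witness is required, because dynamism in a CPM is thin on morphisms as well as on objects; the only real work is the bookkeeping needed to ensure that, at each inductive step, the interpretation respects the source/target data so that every symbolic ``$\dynr$'' corresponds to a legitimate comparison in the correct hom-set.
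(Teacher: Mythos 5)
Your proof is correct and follows essentially the same route as the paper's: induction on term dynamism derivations, reusing the PTT argument for the structural rules, discharging the cast rules by the equipment/coreflection conditions, \textsc{$\err$Bot} by local bottoms, and the congruence and $\beta\eta$ rules by monotonicity and the strict equational laws of the cartesian closed structure. The explicit substitution lemma $\sem{t[\gamma]}_2 = \sem{t}_2 \circ \sem{\gamma}_2$ you isolate is a detail the paper's soundness proof leaves implicit (it appears only later in the initiality argument), and making it explicit is a reasonable refinement rather than a different approach.
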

\begin{proof}
  By induction on term dynamism derivations.
  \begin{enumerate}
  \item The variable, composition, reflexivity, transitivity and axiom
    rules follow b the same argument as for PTT.
  \item The upcast and downcast rules follow by definition of an
    equipment.
  \item The type error rule $\err \dynr t$ follows by
    definition of a local bottom.
  \item For the types $\to,\times,1$, the congruence rules hold by
    monotonicity of the cartesian and closed structures, and the
    $\beta\eta$ by the equational laws for the closed structure.
    \qedhere
  \end{enumerate}
\end{proof}

We summarize these results in the following Soundness theorem
\begin{theorem}[Soundness of Gradual Type Theory]
  \label{def:gtt-sem}
  For any GTT signature $\Sigma$ and GTT category $\cat{C}$ and
  interpretation $\interp{\cdot}$ of $\Sigma$ in $\cat C$,
  \begin{enumerate}
  \item For every GTT type $A$, $\sem{A}_0$ is an object of $\cat{C}$.
  \item If $A \dynr A'$ is provable in GTT, then $\sem{A}_0 \dynr
    \sem{A'}_0$.
  \item For every $\Gamma \vdash t : A$ in GTT, $\sem{t}_2$ is a
    multi-arrow in $\cat C$ with $s(\sem{t}_2) = \sem{\Gamma}_0$ and
    $t(\sem{t}_2) = \sem{A}_0$ where $\sem{x_0:A_0,\ldots}_0 =
    \sem{A_0}_0\ldots$.
  \item If $\Phi \vdash t \dynr t' : A \dynr A'$ is provable in GTT,
    then $\sem{t}_2 \dynr \sem{t'}_2$.
  \end{enumerate}
\end{theorem}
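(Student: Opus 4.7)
The plan is to observe that the theorem is a \emph{consolidation} of the four soundness results already established above: the inductive definition of $\sem{\cdot}_0$ on types, the Soundness of GTT Type Dynamism, the inductive definition of $\sem{\cdot}_2$ on terms, and the Soundness of GTT Term Dynamism. The only real work is verifying the typing conditions on $\sem{A}_0$ and $\sem{t}_2$ and then invoking the preceding lemmas to obtain the dynamism conclusions. Each of the four clauses aligns with one of these pieces, so the argument is essentially a bookkeeping exercise performed in the correct order.

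For clause 1, I would proceed by induction on the structure of $A$. The base case $\sem{X}_0 = \interp{X}_0$ is in $\cat{C}_0$ because $\interp{\cdot}_0$ is assumed to be an interpretation of $\Sigma_0$, and $\sem{\dyn}_0 = \top$ is in $\cat{C}_0$ by the hypothesis that $\cat{C}$ has a greatest object. The inductive cases $A \to B$, $A_1 \times A_2$, and $1$ are immediate from the closed and cartesian structure of $\cat C$, which provides the required monotone operations $\to, \times \colon \cat{C}_0^2 \to \cat{C}_0$ and object $1 \in \cat{C}_0$. Clause 2 is then exactly the statement of the Soundness of GTT Type Dynamism theorem proved just above, so there is nothing further to do. For clause 3, I would induct on the typing derivation of $\Gamma \vdash t : A$, checking at each clause of the definition of $\sem{\cdot}_2$ that source and target match $\sem{\Gamma}_0$ and $\sem{A}_0$; variables, function symbols, $\lambda$, application, pairing, projections, and $()$ all follow from the typing constraints on $\interp{\cdot}_2$ and the cartesian closed structure, $\err$ from local bottoms, and the cast cases from the equipment structure. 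Finally, clause 4 is directly the Soundness of GTT Term Dynamism theorem.

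The main (mild) obstacle is respecting the dependency order among the clauses. The cast cases of clause 3 rely on the semantic existence of $u_{\sem{A}_0,\sem{A'}_0}$ and $d_{\sem{A}_0,\sem{A'}_0}$, which requires $\sem{A}_0 \dynr \sem{A'}_0$ in $\cat{C}$; this in turn requires clause 2, which is why the Soundness of GTT Type Dynamism lemma must be established first. Similarly, clause 4 presupposes that $\sem{t}_2$ and $\sem{t'}_2$ are morphisms with matching source and target, so clause 3 must precede it. Once this ordering is respected, the soundness lemmas already proven for types, type dynamism, terms, and term dynamism assemble into the stated theorem with no additional induction required.
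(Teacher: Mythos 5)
Your proposal matches the paper exactly: the theorem is stated there as a summary ("We summarize these results in the following Soundness theorem") of the preceding interpretation definitions and the Soundness of GTT Type Dynamism and Term Dynamism lemmas, with clauses 1 and 3 following from the well-typedness conditions built into the definitions of $\sem{\cdot}_0$ and $\sem{\cdot}_2$. Your observation about the dependency ordering (type dynamism soundness being needed before the cast cases of the term interpretation are even well-defined) is also precisely how the paper stages its development.
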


Next, we have the \emph{completeness} theorem for GTT which says that
if a dynamism is true in every interpretation of a signature, then it
is provable in GTT.
\begin{theorem}[Completeness of GTT Category Semantics] \label{thm:gtt-compl}
  Let $\Sigma$ be a GTT signature.
  \begin{enumerate}
  \item For any types $A,B$ in GTT, if for every interpretation
    $\interp{\cdot} : \Sigma \to \cat{C}$, $\sem{A}_0\dynr \sem{B}_0$
    holds, then $A \dynr A'$ is provable in GTT.
  \item For any GTT contexts $\Phi : \Gamma \dynr \Gamma'$, types $A
    \dynr A'$, and terms $\Gamma \vdash t : A$ and $\Gamma' \vdash t'
    : A'$ generated from $\Sigma$, if for every interpretation
    $\sem{t}_0 \dynr \sem{t'}_0$, then $\Phi \vdash t \dynr t' : A
    \dynr A'$ is provable in GTT.
  \end{enumerate}
\end{theorem}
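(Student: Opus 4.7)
The plan is to extend the completeness proof of PTT (Theorem~\ref{thm:ptt-completeness}) by constructing a syntactic GTT category $\GTT(\Sigma)$ and instantiating the universal quantification of the hypothesis at this category, so that semantic dynamism collapses to derivability.

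First I would define $\GTT(\Sigma)$ exactly in parallel with $\PTT(\Sigma)$: objects are GTT types (not just base types), with $A \dynr B$ holding in the object preorder whenever it is derivable in GTT; multiarrows $\GTT(\Sigma)(A_0,\ldots;B)$ are $\alpha$-equivalence classes of terms $x_0:A_0,\ldots \vdash t : B$, with the multiarrow preorder given by derivability of term dynamism; composition is substitution and identity/projection is the variable rule. The CPM axioms hold for the same reasons as in $\PTT(\Sigma)$, relying on admissibility of substitution and its monotonicity under \textsc{TmDyn-Comp}.

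Next I would equip $\GTT(\Sigma)$ with the required GTT-categorical structure by reading off the corresponding syntactic rules: the cartesian closed structure is given by the type constructors $\to, \times, 1$ with $\lambda$, $\app$, $\pair$, $\pi_i$, $\unit$ as the relevant structural morphisms, where the $\beta\eta$ rules of Figure~\ref{fig:negative} (as equi-dynamism, hence as equalities in the hom-preorder modulo $\equidyn$---see below) supply the required bijections and naturality equations; the equipment structure is witnessed by the terms $\upcast{A}{A'}{x}$ and $\dncast{A}{A'}{x}$, whose defining inequalities in the rules \textsc{UR}, \textsc{UL}, \textsc{DR}, \textsc{DL} are exactly the equations $u \dynr \id$, $\id \dynr u$, etc. required of upcasts/downcasts; coreflectivity is immediate from \textsc{Retract}; the greatest object is $\dyn$ by \textsc{$\dyn$Top}; and local bottoms are $\err_A$ at each type, which satisfies strictness under substitution since $\err_A$ is a closed constant, so $\err_A[\gamma] = \err_A$ syntactically.

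A subtlety to flag: the model definition requires certain laws on the nose (e.g.\ $\lambda(h)\circ\gamma = \lambda(h\circ(\gamma,x))$, $\app \circ (\lambda(h)\circ\mathrm{wkn},x) = h$ up to equality), whereas the syntax only gives them up to $\equidyn$. The standard fix is to take multiarrows to be $\equidyn$-equivalence classes of terms rather than $\alpha$-equivalence classes; then $\beta\eta$ become literal equalities of morphisms, and monotonicity of the remaining operations ensures everything descends to the quotient. With $\GTT(\Sigma)$ thus constructed, there is an identity interpretation $\interp{\cdot}_{\mathrm{id}} : \Sigma \to \GTT(\Sigma)$ sending each base type, function symbol, and axiom to itself, and an easy induction on types and terms shows $\sem{A}_0 = A$ and $\sem{t}_2 = [t]_{\equidyn}$. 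Instantiating the hypothesis of (1) or (2) at this interpretation yields $A \dynr B$ or $[t]_{\equidyn} \dynr [t']_{\equidyn}$ in $\GTT(\Sigma)$, which by construction is exactly derivability in GTT.

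The main obstacle is the on-the-nose versus up-to-$\equidyn$ issue, since all the cartesian closed structure and composition must be well-defined and strictly equational on the chosen representatives; quotienting by $\equidyn$ resolves this but requires checking that cast formation, $\lambda$, application, pairing, projection, and substitution are all $\equidyn$-congruences, which follows from the monotonicity (congruence) rules of GTT together with antisymmetry of $\equidyn$ by definition.
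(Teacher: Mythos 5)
Your proposal is correct and follows essentially the same route as the paper: both build the syntactic term model $\GTT(\Sigma)$ in parallel with $\PTT(\Sigma)$, read off the GTT-categorical structure from the corresponding rules, and instantiate the hypothesis at the identity interpretation so that semantic dynamism coincides with derivability. The paper dismisses the verification as ``routine,'' whereas you correctly flag and resolve the one genuine subtlety it elides---that the definition of a closed/cartesian CPM demands $\beta\eta$ and naturality on the nose while the syntax gives them only up to $\equidyn$, which your quotient by $\equidyn$ (justified by the congruence rules) handles properly.
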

\begin{proof}
  As with the proof of Theorem~\ref{thm:ptt-completeness}, we show
  that the interpretation used in the soundness Theorem
  ~\ref{def:gtt-sem} constructs a GTT category, i.e. $A \dynr A'$
  holds precisely when $A \dynr A'$ is derivable and similarly for $t
  \dynr t'$.  The proof is routine as the rules correspond precisely
  to the semantic notions.
\end{proof}

\iflong Together these theorems imply that the syntax is initial: the
semantics given by Definition~\ref{def:gtt-sem} is the unique
extension making a morphism of GTT categories using the GTT category
structure of Theorem~\ref{thm:gtt-compl}.  To formalize this, we
define first a category of GTT signatures $\GTTS$ in analogy with the
category of PTT signatures $\PTTS$.  We define the category of GTT
categories to have as functors CPM functors that preserve the GTT
structure.
\begin{definition}[{Category of GTT Categories}]
  We define the category $\GTTC$ to have as objects GTT categories and
  as morphisms $F : {\cat C} \to {\cat D}$ functors $F$ that preserve all GTT
  structure on the nose in that it
  \begin{enumerate}
  \item preserves greatest objects $F(\top) = \top$
  \item preserves local bottoms $F(\bot) = \bot$
  \item preserves exponentials in that $F(A \to B) = FA \to FB$ and
    $F(\lambda(f)) = \lambda(Ff)$ and $F(\app) = \app$.
  \item preserves cartesian products in that $F(A \times B) = FA
    \times F B$ and $F(\pair(f,g)) = \pair(F(f),F(g))$ and $F(\pi_i) =
    \pi_i$ and $F(1) = 1$ and $F(\unit) = \unit$
  \end{enumerate}
\end{definition}

Note that we need not specify that the functors preserve the
upcasts/downcasts because any functor of CPMs automatically preserves
upcasts/downcasts up to $\equidyn$, as shown for double categories in
\cite{shulman2008framed}.
\begin{lemma}[CPM Functors preserve Upcasts/Downcasts]
  \label{lem:functors-preserve-casts}
  If $F : {\cat C} \to {\cat D}$ is a CPM functor, then
  \begin{enumerate}
  \item If $u$ is an upcast for $A \dynr A'$, then $F(u)$ is an
    upcast for $F(A) \dynr F(A')$.
  \item If $d$ is a downcast for $A \dynr A'$, then $F(d)$ is a
    downcast for $F(A) \dynr F(A')$.
  \end{enumerate}
\end{lemma}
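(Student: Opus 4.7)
The plan is to prove both parts by directly unfolding the definitions of upcast/downcast and invoking the three essential properties of a CPM functor: monotonicity on multiarrows, preservation of identities (as a consequence of preservation of projections), and preservation of source and target. Since the downcast case is symmetric to the upcast case, I would state both but only write out one in detail.

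First I would fix an upcast $u$ for $A \dynr A'$ in $\cat{C}$. By definition this means (i) $u \in \cat{C}(A;A')$, (ii) $u \dynr \id_{A'}$ in $\cat{C}_1$, and (iii) $\id_A \dynr u$ in $\cat{C}_1$. The goal is to show $F(u)$ is an upcast for $F(A) \dynr F(A')$. The well-formedness of the latter dynamism is immediate: $F_0$ is a monotone function on objects, and $A \dynr A'$ in $\cat{C}_0$, so $F_0(A) \dynr F_0(A')$ in $\cat{D}_0$.

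Next I would verify the three clauses for $F(u)$ in turn. For the typing, since a CPM functor preserves source and target, $F_1(u) \in \cat{D}(F_0(A); F_0(A'))$. For the two dynamism inequalities, the key observation is that $F_1$ sends the identity multiarrow $\id_A = x(\cdot;A;\cdot)$ to $x(\cdot;F_0(A);\cdot) = \id_{F_0(A)}$, because functors preserve projection multiarrows on the nose; likewise $F_1(\id_{A'}) = \id_{F_0(A')}$. Applying $F_1$, which is monotone on multiarrows, to (ii) and (iii) then yields $F_1(u) \dynr \id_{F_0(A')}$ and $\id_{F_0(A)} \dynr F_1(u)$, as required.

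The downcast case proceeds identically, swapping source and target and swapping the roles of $\id_A$ and $\id_{A'}$. I do not anticipate any serious obstacle here: the statement is essentially a bookkeeping exercise, and the only thing to keep track of is that the functor axioms of a CPM functor (monotonicity on objects, monotonicity on multiarrows, preservation of source/target, preservation of projections) are exactly what is needed to transport each ingredient in the definition of an upcast/downcast across $F$. The uniqueness of casts up to $\equidyn$ established in Lemma~\ref{lem:casts-are-uniq} is not needed for existence, though it justifies the informal slogan that functors preserve casts ``up to $\equidyn$''.
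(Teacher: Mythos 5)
Your proposal is correct and follows essentially the same route as the paper's proof: unfold the definition of an upcast, apply monotonicity of $F_1$ to the two dynamism inequalities, and use that $F$ sends identity multiarrows to identity multiarrows (which, as you note, follows from preservation of projections), with the downcast case dual. The extra bookkeeping you include (monotonicity of $F_0$ for the well-formedness of $F(A) \dynr F(A')$, and preservation of source/target for the typing of $F(u)$) is left implicit in the paper but is accurate.
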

\begin{proof}
  We show the upcast case, the downcast case is dual. We need to show
  $\id_{F(A)} \dynr F(u)$ and $F(u) \dynr \id_{F(A')}$.
  Since $u$ is an upcast, $\id_A \dynr u$. By monotonicity of $F$,
  $F(\id_A) \dynr F(u)$. By functoriality of $F$, $F(\id_A) = \id_{F
    A}$, so $\id_{F A} \dynr F(u)$. $F(u) \dynr \id_{F(A')}$ follows by
  the same argument.
\end{proof}

Then as before there is an obvious ``complete signature functor'' $U :
\GTTC \to \GTTS$, and the interpretation function is a left adjoint,
with the proof an extension of the PTT case. This is a slight
generalization of the usual notion of left adjoint, because the
uniqueness is up to order-equivalence on arrows $\equidyn$, since we
don't require a choice of interpretation of upcasts/downcasts.
\begin{theorem}[GTT is the Initial GTT Category generated by a Signature]
  The GTT syntax $GTT : \GTTS \to \GTTC$ is left adjoint to $U$.
\end{theorem}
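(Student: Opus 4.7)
The plan is to extend the PTT initiality proof, leveraging three results already established: completeness (Theorem~\ref{thm:gtt-compl}), which makes $GTT(\Sigma)$ into a GTT category; soundness (Theorem~\ref{def:gtt-sem}), which provides the candidate interpretation into any other GTT category; and the uniqueness of upcasts/downcasts in a CPM (Lemmas~\ref{lem:casts-are-uniq} and~\ref{lem:functors-preserve-casts}), which is exactly what forces uniqueness in the adjunction to be up to $\equidyn$ on arrows rather than strict equality. I would again verify the adjunction via the universal-morphism formulation of \cite{maclane:71}, which simultaneously establishes that $GTT$ is a functor.

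First I would define the unit $\eta_\Sigma : \Sigma \to U(GTT(\Sigma))$ exactly as in the PTT case: each base type, function symbol, and type/term dynamism axiom maps to itself viewed as a generator of the syntactic GTT category, and axiom-preservation is immediate from the construction of $GTT(\Sigma)$. Second, given any translation $\interp{\cdot} : \Sigma \to U(\cat C)$, soundness produces $\sem{\cdot} : GTT(\Sigma) \to \cat C$, and I would check that it is a GTT category morphism by reading off its defining clauses: it preserves $\top$ via $\sem{\dyn}_0 = \top$; preserves local bottoms via $\sem{\err}_2 = \bot$; preserves exponentials via the clauses for $\to$, $\lambda$, and $\app$; and preserves cartesian/unit structure via the clauses for $\times$, $\pair$, $\pi_i$, $1$, and $()$. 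The factorization $\interp{\cdot} = U(\sem{\cdot}) \circ \eta_\Sigma$ extends the corresponding PTT calculation unchanged on base types and function symbols.

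The main obstacle, and the reason the theorem statement deviates from the textbook notion of left adjoint, is the uniqueness clause, which must be stated up to $\equidyn$ on arrows. For any CPM functor $F : GTT(\Sigma) \to \cat C$ satisfying $UF \circ \eta_\Sigma = \interp{\cdot}$, I would show $F \equidyn \sem{\cdot}$ by induction on GTT terms. The variable and function-symbol cases carry over from the PTT proof. The cases for $\lambda$, $\app$, $\pair$, $\pi_i$, $()$, and $\err$ are strict equalities, because a GTT category morphism preserves the cartesian closed structure and local bottoms on the nose. The critical cases are the casts: by Lemma~\ref{lem:functors-preserve-casts}, $F(\upcast{A}{A'}{-})$ is an upcast for $F(\sem{A}_0) \dynr F(\sem{A'}_0)$, and by the factorization on objects this type is $\sem{A}_0 \dynr \sem{A'}_0$, for which $\sem{\upcast{A}{A'}{-}} = u_{\sem{A}_0,\sem{A'}_0}$ is also an upcast; Lemma~\ref{lem:casts-are-uniq} then forces the two to be $\equidyn$, and the downcast case is dual.

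Finally I would propagate these $\equidyn$'s through compound terms using monotonicity of composition in the CPM, which yields the induction hypothesis at every step, giving uniqueness up to order-equivalence and completing the adjunction. Functoriality of $GTT$ on morphisms of signatures follows, as in the PTT case, from the universal property itself.
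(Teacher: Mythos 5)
Your proposal is correct and follows essentially the same route as the paper: the universal-morphism formulation, the unit interpreting generators as themselves, the soundness-derived interpretation as the factorizing GTT functor, and uniqueness up to $\equidyn$ established by induction on terms with the cast cases handled via Lemmas~\ref{lem:functors-preserve-casts} and~\ref{lem:casts-are-uniq}. The only cosmetic difference is that the paper runs an explicit preliminary induction on types to get $F_0 = \sem{\cdot}_0$ (using that a GTT functor preserves $\top$, $\to$, $\times$, $1$ on the nose), which you fold into the phrase ``factorization on objects''; nothing substantive is missing.
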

\begin{proof}
  Again we use the universal morphism definition of a left adjoint.
  \begin{enumerate}
  \item As shown in Theorem \ref{thm:gtt-compl}, $GTT(\Sigma)$ is a
    GTT category.
  \item For any signature $\Sigma$, there is a ``universal
    interpretation'' $\eta_{\Sigma} : \Sigma \to U(GTT(\Sigma))$
    interpreting the axioms/function symbols as ``themselves'', as
    in the universal interpretation for PTT.
  \item For every interpretation $\interp{\cdot} : \Sigma \to U{\cat
    C}$, there is a functor of GTT categories $\sem{\cdot} :
    GTT(\Sigma) \to {\cat C}$ given by the construction in the
    soundness Theorem \ref{def:gtt-sem}. It is a
    morphism of GTT categories because it is compositional and
    preserves exponentials, products, etc.
  \item Every interpretation $\interp{\cdot}$ factors through the
    universal interpretation in that $\interp{\cdot} = U(\sem{\cdot})
    \circ \eta_{\Sigma}$. This follows by the same argument as the PTT
    case.
  \item $\sem{\cdot}$ is the unique such factorization. Suppose
    $F : PTT(\Sigma) \to {\cat C}$ is a GTT functor such that $\interp{\cdot} =
    UF$ on objects and $\interp{\cdot} \equidyn UF$ on morphisms. We show
    that $F_0 = \sem{\cdot}_0$ on objects and $F_1 \equidyn
    \sem{\cdot}_2$ on morphisms.
    First, by induction on types.
    \begin{enumerate}
    \item For base types, as before $F_0(X) = \interp{X}_0 = \sem{X}_0$
      because $F$ factors through $\interp{\cdot}$.
    \item For the dynamic type, $F_0(\dyn) = \top = \sem{\dyn}_0$
      because $F$ is a GTT functor.
    \item For function types $F_0(A \to B) = F_0(A) \to F_0(B)$
      because $F$ is a GTT functor, then by induction this is
      equivalent to $\sem{A}_0 \to \sem{B}_0 = \sem{A \to B}_0$. The
      product and unit cases are similar.
    \end{enumerate}
    Next, by induction on terms.
    \begin{enumerate}
    \item For variables and function symbols, the same inductive
      argument as the PTT case follows.
    \item For the type error $F_1(\err) = \bot = \sem{\bot}_2$
      because $F$ is a GTT functor.
    \item For the upcast, we need to show $F_1(\upcast{A}{A'}t)
      \equidyn \sem{\upcast{A}{A'}{t}}_2$.  By functoriality we know
      $F_1(\upcast{A}{A'}t) = F_1(\upcast{A}{A'}x) \circ (F_1(t))$
      and by definition of the interpretation function,
      $\sem{\upcast{A}{A'}{t}}_2 = u_{\sem{A}_0,\sem{A'}_0} \circ
      (\sem{t}_2)$ so by induction and monotonicity it is sufficient to
      show $F_1(\upcast{A}{A'}x) \equidyn u_{\sem{A}_0,\sem{A'}_0}$. By
      Lemma \ref{lem:casts-are-uniq}, it is sufficient to show that
      $F(\upcast{A}{A'}x)$ is an upcast, which follows from Lemma
      \ref{lem:functors-preserve-casts}. The downcast case follows by
      the same argument.
    \item For $\lambda x. t$ , we have $F_1(\lambda x. t) =
      \lambda(F_1(t))$, and by induction $\lambda(F_1(t)) \equidyn
      \lambda(\sem{t}_2) = \sem{\lambda x. t}_2$. The pair/unit
      introduction cases is similar.
    \item For application $t u$, we calculate
      \begin{align*}
        F_1(t u) &= F_1(x_1 x_2) \circ (F_1(t),F_1(u)) \tag{functoriality}\\
        &= \app \circ (F_1(t),F_1(u)) \tag{$F$ preserves application}\\
        &\equidyn \app (\sem{t}_2,\sem{u}_2) \tag{induction}\\
        &= \sem{t u}_2
      \end{align*}
      The projection cases are similar.
      \qedhere
    \end{enumerate}
  \end{enumerate}
\end{proof}
\fi

\section{Semantic Contract Interpretation}
\label{sec:contract-translation}

As a next step towards constructing specific GTT categories, we define a
general \emph{contract construction} that provides a semantic account of
the ``contract interpretation'' of gradual typing, which models a
gradual type by a pair of casts.  The input to our contract construction
is a locally thin 2-category $\cat{C}$, whose objects and arrows should be
thought of as the types and terms of a programming language, and each
hom-set $\cat{C}(A,B)$ is ordered by an ``approximation ordering'', which is
used to define term dynamism in our eventual model.
We require each hom-set to have a least element (the type error), and
the category to be cartesian closed (function and product
types/contexts) in the strict sense of a 2-category whose underlying
category is cartesian closed and where $\lambda$, application, pairing
and projection are all functorial in 2-cells. The contract
construction then implements gradual typing using the morphisms of
the non-gradual ``programming language'' $\cat{C}$.

\ifshort\vspace{-0.125in}\fi
\subsection{Coreflections}
\label{sec:models:coreflections}

To build a GTT model from $\cat{C}$, we need to choose an interpretation of
type dynamism (the ordering on objects of the CPM) that
induces appropriate casts, which we know by
Theorem~\ref{thm:galois} must be Galois connections
that satisfy the retract axiom.
Such Galois connections are called Galois insertions (in order theory),
coreflections (in category theory) and embedding-projection pairs (in
domain theory).
In this paper we use the term coreflection since it is shortest.
While we presented the retract axiom earlier as an $\equidyn$, in all
of our models the semantics of the composition $\dncast{A}{A'} \circ
\upcast{A}{A'}$ is strictly equal to the identity so we will make a
model using ``strict'' coreflections because it is slightly simpler.
Since type dynamism judgments must induce a coreflection, we will construct
a model where the semantics of a type dynamism judgment $A \dynr A'$
\emph{is} a coreflection.
However, there can be many different coreflections between two objects
of our 2-category $\cat{C}$, so this first step of our construction does not
produce a preorder category, where type dynamism is an \emph{ordering},
but rather a \emph{double category}.
Double categories generalize preorder categories in the same way that
categories generalize preorders.
Double categories are categories internal to the category of small
categories, rather than the category of preorders.
The ordering on objects is generalized to proof-relevant
data specifying a second class of \emph{vertical morphisms}, and the
ordering on terms becomes a notion of 2-dimensional ``square'' between
morphisms.
\iflong
\begin{definition}[Double Category]
  A double category consists of
  \begin{enumerate}
  \item A category of objects and ``vertical'' arrows $\cat{C}_0$
  \item A category of ``horizontal'' arrows and 2-cells $\cat{C}_1$
  \item source, target and identity \emph{functors} with associativity
    and unitality axioms.
  \end{enumerate}
\end{definition}
\fi
In the model we build from $\cat{C}$, the vertical morphisms will model type
dynamism and be coreflections, while the \emph{horizontal} morphisms
of a preorder category will be arbitrary morphisms of $\cat{C}$ and model
terms.
We still require only double categories that are \emph{locally thin}, in
that there is at most one 2-cell filling in any square.
Thus, the first step of our contract construction can be summarized as
creating a double category that is an equipment
with the retract property, i.e. a double category modeling
upcasts and downcasts
\ifshort, a slight variation on a theorem in
\cite{shulman2008framed}
\fi.
\iflong
We can think of this as a model of a ``type dynamism proof-relevant''
system where there might be many different ways that $A \dynr A'$
(the next step in the construction will remedy this).
Then we get an interpretation of \emph{term dynamism} $\Phi \vdash t
\dynr t' : A \dynr A'$ as well, but
as squares whose sides are on the \emph{proofs} that $\Gamma \dynr \Gamma'$
and $A \dynr A'$ and the terms $t$ and $t'$.
Given specific coreflections $(u_A,d_A) : A \tl A'$ and $(u_B,d_B) : B
\tl B'$ in $\cat{C}$, then a 2-cell from $f : A \to B$ to $f' : A' \to B'$
along them should be thought of as a \emph{logical relatedness proof}.
For instance, in a set-theoretic model any coreflection induces a
\emph{relation} between its domain and codomain, so for instance we
have a relation $\dynr_{A,A'}$ that gives us a notion of when an
element of $A$ is less dynamic than an element of $A'$ by $x
\dynr_{A,A'} x'$ if $u_A(x) \dynr_{A'} x'$ or equivalently $x
\dynr_{A} d_{A}(x')$.
Then a 2-cell from $f$ to $g$ exists if for every $x \dynr_{A,A'}
x'$ then $f(x) \dynr_{B,B'} f'(x')$.
More formally, we can make the following construction, a slight
variation on a construction in \cite{shulman2008framed}.
\fi
\begin{definition}[Equipment of Coreflections \cite{shulman2008framed}]
  Given a 2-category $\cat{C}$ we construct a (double category) equipment
  $\coreflection(\cat{C})$ as follows.
  \ifshort
  Its object category has $\cat{C}_0$ as objects and coreflections in $\cat{C}$ as
  morphisms. Horizontal morphisms are given by morphisms in $\cat{C}$ and a
  2-cell from $f : A \to B$ to $f' : A' \to B'$ along $(u_A,d_A) : A
  \tl A'$ and $(u_B,d_B) : B \tl B'$ is a 2-cell in $\cat{C}$ from $u_B
  \circ f$ to $f' \circ u_A$ or equivalently from $f \circ d_A$ to
  $d_B \circ f'$. From a vertical arrow $(u,d)$, the upcast is $u$ and
  the downcast is $d$.
  \else
  \begin{enumerate}
  \item Its object category has $\cat{C}_0$ as objects and (strict) coreflections in
    $\cat{C}$ as morphisms, i.e., a vertical morphism $A \tl B$ is an
    adjoint pair of morphisms $u : A \to B$ and $d : B \to A$ where
    the unit is an equality: $d \circ u = \id$.  Composition
    of coreflections is covariant in the left adjoint and
    contravariant in the right adjoint: $(u,d) \circ (u', d') = (u
    \circ u', d' \circ d)$.
  \item Its arrow category $\coreflection(\cat{C})_1$ has morphisms of $\cat{C}$
    as objects and a 2-cell from $f : A \to B$ to $f' : A' \to B'$ is
    a triple of a coreflection $(u_A,d_A) : A \tl A'$, a coreflection
    $(u_B,d_B) : B \tl B'$ and a \emph{morphism of coreflections},
    i.e., a 2-cell in $\cat{C}$ $\alpha : u_B \circ f \Rightarrow f' \circ
    u_A$ which by a simple calculation can be equivalently presented
    as a morphism $\alpha' : f \circ d_A\Rightarrow d_B \circ f'$.
  \item The upcast from $(c_l, c_r)$ is $c_l$ and the downcast is
    $c_r$.
  \end{enumerate}
  \fi
\end{definition}

As is well-known in domain theory, covariant, contravariant and any
mixed-variance functor preserves coreflections, \cite{wand79ocat,
  smythplotkin}, so the product and exponential functors of $\cat{C}$
extend to be functorial also in vertical arrows.
This produces the classic ``wrapping'' construction familiar from
higher-order contracts \cite{findler-felleisen02}:\inlineoronline{(u,d) \to (u',d') = (d \to u', u \to d')}
\iflong
This construction preserves the structure from $\cat{C}$ that will
be needed to make a model of gradual type theory:
\begin{theorem}[Properties of $\coreflection(\cat{C})$]
  \hfill
  \begin{enumerate}
  \item If $\cat{C}$ is locally thin then so is $\coreflection(\cat{C})$.
  \item If a 2-category $\cat{C}$ has (pseudo) products and exponentials,
    then so does $\coreflection(\cat{C})$ because all functors preserve
    coreflections.
  \item If $C$ has local $\bot$s then so does $\coreflection(\cat{C})$.
  \end{enumerate}
\end{theorem}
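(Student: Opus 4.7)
\smallskip
\noindent\textbf{Proof plan.} The plan is to handle the three clauses in order, exploiting the fact that the data of a 2-cell in $\coreflection(\cat{C})$ is, by construction, just a 2-cell in $\cat{C}$ subject to a compatibility equation with the chosen coreflections. First, for local thinness: a 2-cell in $\coreflection(\cat{C})$ between horizontal morphisms $f:A\to B$ and $f':A'\to B'$ along vertical arrows $(u_A,d_A):A\tl A'$ and $(u_B,d_B):B\tl B'$ is by definition a 2-cell $\alpha: u_B\circ f\Rightarrow f'\circ u_A$ in $\cat{C}$. If $\cat{C}$ is locally thin then there is at most one such $\alpha$ between any two parallel 1-cells of $\cat{C}$, so at most one such square in $\coreflection(\cat{C})$; thus thinness is inherited immediately with no further work.

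Next, for products and exponentials, the goal is to lift the product/exponential functors on horizontal arrows to also be functorial on vertical arrows (coreflections), and check that the universal properties extend. For the product, given vertical arrows $(u,d):A\tl A'$ and $(u',d'):B\tl B'$, I would define the induced vertical arrow on $A\times B$ by $(u\times u', d\times d')$, and verify the retract equation $(d\times d')\circ(u\times u')=(d\circ u)\times(d'\circ u')=\id\times\id=\id$ using functoriality of $\times$ in $\cat{C}$. For the exponential, whose first argument is contravariant, the induced vertical arrow on $A\to B$ is $(d\to u',\,u\to d')$, and a similar calculation gives $(u\to d')\circ(d\to u')=(u\circ d)\to(d'\circ u')=\id\to\id=\id$. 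Pairing, projection, $\lambda$-abstraction and application then extend straightforwardly to squares by applying the underlying 2-functoriality of products and exponentials in $\cat{C}$ to the defining 2-cell; the mild bookkeeping is to check that the resulting 2-cells satisfy the compatibility equation with the chosen coreflections, which is a direct diagram chase using naturality. This is where the slogan ``functors preserve coreflections'' does the real work.

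Finally, for local bottoms, I would take the bottom morphism $\bot_{A,B}:A\to B$ already supplied by $\cat{C}$ as a horizontal morphism in $\coreflection(\cat{C})$. A square from $\bot_{A,B}$ to any $f':A'\to B'$ along coreflections $(u_A,d_A)$ and $(u_B,d_B)$ is a 2-cell $u_B\circ\bot_{A,B}\Rightarrow f'\circ u_A$ in $\cat{C}$, which exists uniquely because the left-hand side lies in $\cat{C}(A,B')$ and that hom has a least element by $\cat{C}$-local bottoms (with strictness $u_B\circ\bot_{A,B}=\bot_{A,B'}$ following by the same least-element property applied in both directions). The required absorption $\bot\circ\gamma\equidyn\bot$ and $g\circ\bot\equidyn\bot$ transfers from $\cat{C}$ by the same argument.

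The main obstacle I anticipate is not any single verification, each of which is routine, but rather the care needed in clause (2) to check that a lifted 2-cell actually satisfies the adjoint-transpose compatibility condition defining a square of $\coreflection(\cat{C})$, and to ensure the strict equalities $(d\circ u)=\id$ survive under both $\times$ and $\to$; the exponential case requires the contravariance in the first argument to line up correctly, which is the one place where the argument is asymmetric and deserves explicit verification.
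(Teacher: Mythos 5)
The paper states this theorem without proof, deferring to the standard fact that mixed-variance functors preserve coreflections \cite{wand79ocat,smythplotkin}; your overall strategy --- inherit thinness and local bottoms directly from $\cat{C}$, and lift $\times$ and $\to$ to vertical arrows via $(u,d)\times(u',d')=(u\times u',d\times d')$ and $(u,d)\to(u',d')=(d\to u',u\to d')$ --- is exactly the intended argument, and clauses (1) and (3) are essentially fine as you have them. (For (3), note that the definition of local bottoms only demands strictness under \emph{pre}composition, so your claim that $u_B\circ\bot_{A,B}=\bot_{A,B'}$ is neither needed nor actually justified: leastness gives only one of the two directions.)

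Clause (2), however, has two concrete problems, both at the exponential, which is exactly the spot you flagged as delicate. First, the composition order of the contravariant components is backwards: since $(f\to g)(h)=g\circ h\circ f$, one computes $(u\to d')\circ(d\to u')=(d\circ u)\to(d'\circ u')$, not $(u\circ d)\to(d'\circ u')$. Your version would need $u\circ d=\id$, which is false for a coreflection (only $d\circ u=\id$ holds, while $u\circ d$ is merely below the identity); the corrected composite does equal $\id\to\id=\id$. Second, you verify only the unit equation $d\circ u=\id$ for the lifted vertical arrows, but a coreflection is an \emph{adjoint} pair, so you must also exhibit the counit $u\circ d\Rightarrow\id$. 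For the exponential this is the 2-cell $(d\to u')\circ(u\to d')=(u\circ d)\to(u'\circ d')\Rightarrow\id\to\id=\id$, which exists because the \emph{local} (2-cell) ordering of $\to$ is covariant in both arguments --- pre- and postcomposition are locally monotone --- even though $\to$ is contravariant on 1-cells in its first argument. That covariance of the local order is the real content of the slogan ``mixed-variance functors preserve coreflections'' (and of the \textsc{${\to}$Mon} rule), and it is the piece your write-up omits; once it is supplied, the rest of your diagram chases for $\lambda$, application, pairing, and projection go through as you describe.
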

We conjecture that this construction has a universal property: the
coreflection construction should be right adjoint to the forgetful
functor to 2-categories from the double category of coreflective
equipments. Intuitively this means that this is the ``maximal'' way
to add vertical arrows to a 2-category to make it a coreflective
equipment.

\fi

\subsection{Vertical Slice Category}
\label{sec:models:vertical-slice}

The double category $\coreflection{(\cat{C})}$ is not yet a model of gradual
type theory for two reasons.
First, gradual type theory requires a dynamic type: every type
should have a canonical coreflection into a specific type.
Second, type dynamism in GTT is \emph{proof-irrelevant}, because the
rules do not track different witnesses of $A \dynr A'$, but there
may be different coreflections from $A$ to $A'$.
It turns out that we can solve both problems at once by taking what we
call the ``vertical slice'' category\iflong\footnote{This definition
  of vertical slice category is not \emph{quite} the most natural from
  a higher categorical perspective because the horizontal arrows
  ignore the chosen object, but it is more useful for our
  purposes.} \fi over an object $D \in \coreflection(\cat{C})$ that is
rich enough to serve as a model of the dynamic type.
In $\coreflection(\cat{C}) / D$, the objects are not just an object $A$ of $\cat{C}$, but an object
\emph{with} a vertical morphism into $D$, in this case a coreflection
written $(u_A,d_A) : A \tl D$.\footnote{We do not write $A \dynr D$
  because coreflections are not a preorder.}  Thus, gradual types are
modeled as coreflections into the dynamic type, analogous to Scott's
``retracts of a universal domain'' \cite{scott76}.
Then a vertical arrow from $(u_A,d_A) : A \tl D$ to $(u_B,d_B) : B \tl
D$ is a coreflection $(u_{A,B},d_{A,B}) : A \tl B$ that
\emph{factorizes} $u_A = u_{B} \circ u_{A,B}$ and $d_A = d_{A,B}\circ
d_{B}$: this means the enforcement of $A$'s type can be thought of as
also enforcing $B$'s type.
Since upcasts are monomorphisms and downcasts are epimorphisms, this
factorization is \emph{unique} if it exists, so there is at most
one vertical arrow between any two objects of $\coreflection(\cat{C})/D$.
We could weaken this to monomorphism up to $\equidyn$ rather than
strict equality, in which case the factorization would only be
\emph{essentially unique}, i.e. any two factorizations would be
equivalent.
Since our models produce this stricter form of
monomorphism/epimorphism, we defer exploring a weak variant to future
work.
Further, the identity coreflection $(\id,\id) : D \tl D$ is a
vertically greatest element since any morphism is factorized by the
identity.

\begin{definition}[Vertical Slice Category]
  Given any double category $\cat{E}$ and an object $D \in
  \cat{E}$, we can construct a double category $\cat{E}/D$ by
  defining $(\cat{E}/D)_0$ to be the slice category $\cat{E}_0/D$,
  a horizontal morphism from $(c : A \tl D)$ to $(d : B \tl D)$ to be a
  horizontal morphism from $A$ to $B$ in $\cat{E}$, and the 2-cells
  are similarly inherited from $\cat{E}$.
\end{definition}

Next consider a cartesian closed structure on $\coreflection(\cat{C})/D$.
The action of $\to$ (respectively $\times,1$) on objects is
given by composition of the action in $\coreflection(\cat{C})$ $(u,d) \to
(u',d')$ with an \emph{arbitrary choice} of ``encoding'' of the ``most
dynamic function type'' $(u_{\to},d_{\to}) : (D \to D) \tl D$.
In most of the models we consider later, $D$ is a sum and this
coreflection simply projects out of the corresponding case, failing
otherwise.
This reflects the separation of the function contract into
``higher-order'' checking $(u,d) \to (u',d')$ and ``first-order tag''
checking $(u_{\to}, d_{\to})$ that has been observed in
implementations~\cite{henglein94:dynamic-typing}.

\iflong
We summarize the relevant results in the following theorem:
\begin{thm}[Vertical Slice Properties]
\hfill  \begin{enumerate}
  \item If $\cat{C}$ is an equipment, then so is $\cat{C}/D$.
  \item If $\cat{C}$ is cartesian, any pair of vertical morphisms
    $e_{\times} : D \times D \tl D$ and $e_{1} : 1 \tl D$ give $\cat{C}/D$
    the structure of a cartesian double category by defining $c \times
    d$ to be $e_{\times} \circ (c \times d)$ and inheriting the
    relevant morphisms from $\cat{C}$'s cartesian structure.
  \item If $\cat{C}$ is closed, any vertical morphism $e_{\to} : (D \to D)
    \tl D$ gives $\cat{C}/D$ the structure of a closed double category by
    defining $c \to d = e_{\to} \circ (c \to d)$.
  \item $\cat{C}/D$ is vertically thin (i.e., a preorder category) if and
    only if every vertical morphism in $\cat{C}$ is a monomorphism.
  \item If $\cat{C}$ has local $\bot$s then so does $\cat{C}/D$
  \end{enumerate}
\end{thm}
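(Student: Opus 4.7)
The plan is to transport each piece of structure from $\cat{C}$ along the slice construction, leveraging the fact that both vertical morphisms and 2-cells in $\cat{C}/D$ are inherited from $\cat{C}$, subject only to a factorization condition on the vertical direction. Each of the five claims then reduces to a routine check once the structure is written out.

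For part (1), a vertical morphism $(c_A) \to (c_B)$ in $\cat{C}/D$ is, by definition, a vertical morphism $v : A \tl B$ in $\cat{C}$ with $c_B \circ v = c_A$; its upcast and downcast in $\cat{C}/D$ are taken to be the upcast and downcast of $v$ in the equipment $\cat{C}$, and the universal 2-cells witnessing the equipment axioms are inherited directly from $\cat{C}$, since 2-cells in the slice are just 2-cells in $\cat{C}$ along slice vertical morphisms. For part (5), the hom-collections of $\cat{C}/D$ between $(c_A)$ and $(c_B)$ coincide with those of $\cat{C}$ between $A$ and $B$, so the local $\bot$s of $\cat{C}$ serve as local $\bot$s in the slice, and the substitution-absorption law $\bot \circ \gamma \equidyn \bot$ carries over verbatim.

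For parts (2) and (3), I would define $(c_A) \times (c_B) := e_{\times} \circ (c_A \times c_B)$, where the inner product is the action of $\times$ on vertical morphisms available because covariant functors preserve coreflections, as noted in Section~\ref{sec:models:coreflections}; dually for exponentials via the mixed-variance action $(u_A,d_A) \to (u_B,d_B) = (d_A \to u_B, u_A \to d_B)$ followed by $e_{\to}$. The pairing, projection, $\lambda$, application, and unit morphisms are taken unchanged from $\cat{C}$; the $\beta\eta$ equations and naturality squares are facts about the horizontal morphisms alone, so they transfer to $\cat{C}/D$ without modification. Monotonicity in vertical morphisms amounts to checking that composition with a vertical morphism on either side of $\times$ or $\to$ preserves the factorization condition through $e_{\times}$ or $e_{\to}$, which holds by functoriality of $\times$ and $\to$ on vertical morphisms.

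The only subtle claim is part (4). The ``if'' direction is essentially immediate: if every vertical morphism in $\cat{C}$ is a monomorphism, then a vertical morphism in $\cat{C}/D$ from $(c_A)$ to $(c_B)$ is a solution $v$ to $c_B \circ v = c_A$, which is pinned down uniquely by the monomorphism property of $c_B$. For the ``only if'' direction, I would use the thinness hypothesis to conclude uniqueness of factorizations: given parallel $v_1, v_2 : A \tl B$ in $\cat{C}$ with $c \circ v_1 = c \circ v_2$ for some $c : B \tl D$, both $v_i$ are vertical morphisms in $\cat{C}/D$ from the common object $(c \circ v_1 : A \tl D)$ to $(c : B \tl D)$, hence equal by thinness. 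The step I expect to require the most care is this ``only if'' direction, since one must interpret the statement carefully over the choice of $D$; in the intended application $D$ is rich enough that the canonical coreflection into $D$ witnesses monomorphism of the vertical morphisms that arise in the model, so the informal reading is adequate for our purposes.
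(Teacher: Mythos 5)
Your proposal is correct and matches the paper's treatment: the paper states this theorem without an explicit proof, and its surrounding prose justifies it exactly as you do --- horizontal structure and 2-cells are inherited verbatim from $\cat{C}$, the cartesian closed structure is obtained by post-composing the functorial action on vertical morphisms with the chosen encodings $e_{\times}$, $e_{1}$, $e_{\to}$, and thinness follows because the factorization $c_A = c_B \circ v$ is unique when $c_B$ is a monomorphism. Your caveat on the ``only if'' half of part (4) is apt: thinness of $\cat{C}/D$ literally yields only that every vertical morphism \emph{into} $D$ is a monomorphism (unless every object admits some vertical morphism to $D$, which lets you compose), and this weaker direction is all the paper ever uses --- its prose argues only the converse, via ``upcasts are monomorphisms and downcasts are epimorphisms'' --- so the imprecision lies in the theorem statement rather than in your argument.
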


\subsection{Summary}
\fi

Finally, we construct a multi\-category $\Multi(\cat{C})$ from the
double category $\coreflection(\cat{C})/D$.  A multiarrow $A_0,
\ldots; B$ is given by a horizontal arrow from $A_1 \times ... 1$ to
$B$ in $\coreflection(\cat{C})/D$.  The ordering $A \preciser A'$ is
given by the vertical arrows $A \tl A'$ of $\coreflection(\cat{C})/D$
(i.e. coreflections), which is lifted pointwise to contexts by the
definition of a CPM.  The ordering $f : (A_0, \ldots; B)
\preciser g : (A_0', \ldots; B')$ is given by squares in
$\coreflection(\cat{C})/D$ (using the action/monotonicity of $\times$
on the pointwise orderings $A_i \preciser A_i'$ of the context).
\begin{definition}[Cartesian Preorder Multicategory from a Cartesian Preorder Category]
  If $\cat{C}$ is a cartesian preorder category, then we can construct a CPM
  category $\Multi(\cat{C})$ by
  \begin{enumerate}
  \item $\Multi(\cat{C})_0 = \cat{C}_0$
  \item $\Multi(\cat{C})(A_0,\ldots;B) = \cat{C}_1(A_0\times\cdots \times 1;B)$
  \end{enumerate}
\end{definition}
\begin{proof}
  This follows from a quite general result of \cite{crutwell-shulman}.  
\end{proof}
Combining these constructions, we produce:
\begin{theorem}[Contract Model of Gradual Typing]
  \label{thm:construct-gtt-model}
  If $\cat{C}$ is a locally thin cartesian closed 2-category with local
  $\bot$s, then for any object $D \in \cat{C}$ with chosen coreflections
  $c_{\to} : (D \to D) \triangleleft D$, $c_{\times} : (D \times D)
  \triangleleft D$, and $c_{1} : 1 \triangleleft D$, then
  $(\Multi({\coreflection(\cat{C})/id_d}), c_{\to},c_{\times},c_{1})$ is a GTT
  category.
\end{theorem}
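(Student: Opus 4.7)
The plan is to decompose the verification into the three successive constructions and track which GTT-category property is established at which stage. After the three stages, a GTT category requires: a CPM, cartesian closed structure, a coreflective equipment, a greatest object, and local bottoms; I would aim to line up each property with the stage at which it naturally appears, using the three ``Vertical Slice Properties'' (items 1--5) cited earlier as the workhorse.

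First, I would verify that $\coreflection(\cat{C})$ is a locally thin double category that is a coreflective equipment, has cartesian closed structure and local bottoms. Local thinness and local bottoms are inherited pointwise from $\cat{C}$. The equipment structure is immediate from the definition: for a vertical morphism $(u,d) : A \tl A'$ the upcast is $u$ and the downcast is $d$, and the retract condition $d \circ u = \id$ is the definition of coreflection. For cartesian closed structure, the key classical fact is that covariant, contravariant, and mixed-variance functors all preserve coreflections (Wand, Smyth--Plotkin), so $\times$ and $\to$ from $\cat{C}$ extend functorially to vertical arrows, giving the wrapping formula $(u,d)\to (u',d') = (d \to u', u \to d')$.

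Second, I would pass to the vertical slice $\coreflection(\cat{C})/\id_D$ and check the remaining structure. Vertical thinness (hence that this really is a preorder category, so that the CPM in the next step has a genuine ordering of objects rather than a set of vertical arrows) follows because in a coreflection the upcast is a split mono and the downcast is a split epi, so the factorization of a coreflection $A \tl D$ through a coreflection $B \tl D$ is unique when it exists. The greatest object is $\id_D : D \tl D$, since any $c : A \tl D$ factors through it trivially. The equipment structure and local bottoms lift from the previous stage. For cartesian closed structure on the slice, the action on objects is given by composing the lifted $\times$/$\to$ in $\coreflection(\cat{C})$ with the chosen coreflections $c_{\times}, c_{\to}, c_1$ into $D$, and the introduction/elimination morphisms and their naturality come from $\cat{C}$ itself.

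Third, I would apply $\Multi$ to the resulting cartesian closed preorder category to obtain a CPM, using the general construction $\Multi(\cat{C})(A_0,\ldots;B) = \cat{C}(A_0 \times \cdots \times 1; B)$; here the CPM axioms, including monotonicity of the cartesian-product-based substitution structure, are handled by the general theory of Cruttwell--Shulman cited earlier. All previously established features (equipment with retract axiom, cartesian closed structure, greatest object, local bottoms) are preserved by $\Multi$ because $\Multi$ faithfully reflects arrows and orderings into the multi setting. The main obstacle is the second stage: one must be careful that composing with the fixed coreflections $c_{\to}, c_{\times}, c_1$ genuinely yields a cartesian closed structure on the slice in the strict sense required by the definition of a cartesian closed CPM, and that the uniqueness of factorizations argument gives equalities (not just $\equidyn$) of vertical arrows, so that the slice is literally a preorder category as needed.
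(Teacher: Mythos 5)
Your proposal follows the paper's own route exactly: the same three-stage decomposition (the coreflection equipment $\coreflection(\cat{C})$, then the vertical slice over $\id_D$, then $\Multi$), with the same key facts at each stage --- functors preserve coreflections for the wrapping formula, uniqueness of factorizations via upcasts being (split) monos and downcasts (split) epis for vertical thinness, $\id_D$ as the greatest object, and the Cruttwell--Shulman result for the multicategory step. The strictness subtlety you flag at the end is also acknowledged in the paper, which deliberately restricts to strict coreflections and strict cartesian closed structure for exactly this reason.
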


\ifshort \input{models-short}
\else \section{Concrete Models}
\label{sec:models}

Now that we have identified a general method of constructing models of
gradual type theory, we can produce some concrete models by producing
suitable 2-categories.

\subsection{Pointed Preorder Model}
\label{sec:models:first-order}

First, we present a simple first-order preorder model.
The model is first-order because it models the fragment of gradual
type theory without function types.
However, by not accommodating function types it is much more
elementary.
The 2-category for the preorder model is the category $PreOrd_{\bot}$
whose objects are preorders with a least element, which following
domain-theoretic terminology we call ``pointed'' preorders, and whose
morphisms are monotone functions (that don't necessarily preserve
$\bot$) and 2-cells are given by the obvious ordering on morphisms.
This is a cartesian locally thin 2-category with local $\bot$s (also
closed but we will not use this).
To construct a suitable dynamic type, we can start with a base set,
such as the natural numbers $\mathbb{N}$ and construct the dynamic type
by finding a solution to the equation:
\[
D \cong \mathbb{N}_{\bot} \oplus (D \times D)
\]
where $\oplus$ is the wedge sum of pointed preorders that identifies
the $\bot$s of the two sides.
Since this is a \emph{covariant} domain equation, this can be
constructed as a simple colimit, and the solution has as elements
finite binary trees whose leaves are either natural numbers or a base
element $\bot$.
The ordering on the trees $T \dynr T'$ holds when $T$ can be
produced from $T'$ by replacing some number of subtrees by $\bot$,
a simple model of the dynamism ordering.
Finally to get a model, the upcast of the coreflection $D \times D \tl
D$ simply injects to the right side of $\oplus$ and the downcast
errors on the $\mathbb{N}_{\bot}$ case and otherwise returns the pair.

\subsection{Scott's Model}
\label{sec:models-scott}

Next we present two models based on domains that are operationally
\emph{inadequate} because they identify the dynamic type error and
diverging programs.
The first is merely a new presentation of Dana Scott's classical
models of untyped lambda calculus but for a gradually typed language
\cite{scott76}.
The second is a variation on that construction where product and
function types have overlapping representation, showing that the
product and function types cannot be proven disjoint in gradual type
theory.
Both are based on the 2-category of pointed $\omega$-chain complete
partial orders, which we simply call \emph{domains} and continuous
functions.
By standard domain-theoretic techniques (see \cite{wand79ocat,
  smythplotkin, pitts96}) we can construct a suitable dynamic type by
solving the recursive domain equation:
\[ D \cong \mathbb{N}_{\bot} \oplus (D \times D) \oplus (D \to D) \]
where $\oplus$ is the wedge sum of domains that identifies their least
element.
The classical technique for solving this equation naturally produce the
required coreflections $(D\times D) \tl D$ and $(D \to D) \tl D$.

Next, to get a model in which product and function types are not
disjoint we can construct a dynamic type as a \emph{product} of our
connectives rather than a sum:
\[ D' \cong \mathbb{N}_{\bot} \times (D'\times D') \times (D' \to D') \]

This is a kind of ``coinductive'' dynamic type that can be thought of
as somewhat object-oriented: rather than an element of the dynamic
type being a tagged value, it is something that responds to a
set of messages (given by the projections) and if it
``doesn't implement'' the message it merely returns $\bot$.
Then
$\dncast{(\dyn\times\dyn)}{\dyn}{\upcast{(\dyn\to\dyn)}{\dyn}{x}} \not\equiv
\err$ because there are elements of the domain that are non-trivial
both in the $D\times D$ position and $D \to D$ position.

\subsection{Resolution: Pointed Domain Preorders}

We can combine the best aspects of the domain and pointed preorder
models into a single model of pointed, preorder domains, i.e., domains
that in addition to their intrinsic domain ordering that models a
``divergence ordering'' with diverging programs modeled by the
divergence-least element have a second, ``error ordering'' with a
least element $\err$ that models the dynamic type error.
The error ordering needs two properties related to the domain
structure.
First, the diverging element should be maximal, meaning that it is not
considered to be more erroring than anything else, which ensures that
when we take the wedge sum, we \emph{only} identify $\bot$s and don't
otherwise affect the error ordering.
Second, the error ordering should be an \emph{admissible} relation,
meaning that a limit of one chain is error-less-than the limit of
another, if at every step they are related.

\begin{definition}[Domain Preorder, Pointed Domain Preorder, Continuous Functions]
  \hfill
  \begin{enumerate}
  \item A domain preorder $X$ is a set $|X|$ with two orderings
    $\leq_{X}$ and $\dynr_{X}$ such that $(|X|,\leq_X)$ is an
    $\omega$-complete pointed partial order with $\leq_X$-least
    element $\bot_X \in |X|$ and $\dynr_X$ is a preorder such that
    \begin{enumerate}
    \item $\bot_X$ is a $\dynr_X$-maximal element: if $\bot \dynr x$
      then $x = \bot$
    \item $\dynr$ is $\leq$-admissible/closed under limits of
      $\leq_X$-$\omega$-chains: if $\{x_i\}_{i<\omega},
      \{y_i\}_{i<\omega}$ are $\leq_X$-$\omega$-chains in $X$ and $x_i
      \dynr y_i$ for every $i < \omega$, then $\bigvee \{x_i\} \dynr
      \bigvee \{y_i \}$.
    \end{enumerate}

  \item A continuous function of domain preorders is a function of the
    underlying sets that is continuous with respect to $\leq_X$ and
    monotone with respect to $\dynr_X$.
  \item A pointed domain preorder $X$ is a domain preorder with a
    $\dynr_X$-least element $\err \in |X|$. Continuous functions of
    pointed domain preorders are just continuous functions of the
    underlying domain preorders
  \end{enumerate}
\end{definition}

We model our types as pointed domain preorders because they have an
interpretation for both divergence $\bot$ and type error $\err$, and
our terms will be modeled as continuous functions.
First, we show that both categories are cartesian closed.

\begin{lemma}
\label{lem:domain-preorder-ccc}
  The categories of domain preorders and pointed domain preorders with
  continuous functions are cartesian closed.
\end{lemma}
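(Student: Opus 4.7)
The plan is to transport the standard cartesian closed structure on pointed $\omega$-cpos to this two-ordering setting. I would take the terminal object to be a singleton (with trivial orderings, and $\bot = \err$ in the pointed case); the product $A \times B$ to carry the componentwise orderings, with $\bot_{A \times B} = (\bot_A, \bot_B)$ and, in the pointed case, $\err_{A \times B} = (\err_A, \err_B)$; and the exponential $A \to B$ to consist of the continuous functions from $A$ to $B$, ordered pointwise in both $\leq$ and $\dynr$, with $\bot_{A \to B}$ the constant $\bot_B$ function (and $\err_{A \to B}$ the constant $\err_B$ function in the pointed case). Classical domain theory already supplies all of the $\leq$-structure: pointed $\omega$-cpo structure on products and exponentials, continuity of projections, pairing, evaluation, and currying. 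What remains is to check the two extra axioms on $\dynr$ and to verify that the structure maps are $\dynr$-monotone.

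For products, $\bot$-maximality and $\leq$-admissibility of $\dynr_{A \times B}$ reduce componentwise to the hypotheses on $A$ and $B$. For exponentials, the key observation is that $\leq$-sups in $A \to B$ are computed pointwise: $(\bigvee_n f_n)(x) = \bigvee_n f_n(x)$. Thus if $\{f_n\}$ and $\{g_n\}$ are $\leq$-chains of continuous functions with $f_n \dynr g_n$ pointwise for every $n$, then at each $x$ we have $\leq_B$-chains $\{f_n(x)\}$ and $\{g_n(x)\}$ with $f_n(x) \dynr_B g_n(x)$, so admissibility of $\dynr_B$ gives $(\bigvee_n f_n)(x) \dynr_B (\bigvee_n g_n)(x)$, which is exactly $\bigvee_n f_n \dynr \bigvee_n g_n$ in $A \to B$. $\bot$-maximality in $A \to B$ is similar: if $\bot_{A \to B} \dynr f$ pointwise, then each $f(x) = \bot_B$ by maximality of $\bot_B$, hence $f = \bot_{A \to B}$.

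The $\dynr$-monotonicity of projections and pairing is immediate from the componentwise definition; for evaluation, if $f \dynr g$ in $A \to B$ and $x \dynr y$ in $A$, then $f(x) \dynr g(x) \dynr g(y)$ using pointwise dynamism for the first step and $\dynr$-monotonicity of $g$ for the second. The main obstacle, really a bookkeeping task, is checking that currying $\Lambda(f)$ of a continuous $f : A \times B \to C$ lands in the exponential and is itself continuous: for each $a$, $\Lambda(f)(a) = \lambda b.\, f(a, b)$ is $\leq$-continuous and $\dynr$-monotone by partial application; and $\Lambda(f) : A \to (B \to C)$ is $\leq$-continuous and $\dynr$-monotone because these properties on the pointwise-ordered exponential reduce, argument by argument, to the corresponding properties of $f$ in its first component, which is where the pointwise-sup characterization of $\leq$ in $B \to C$ is used again.
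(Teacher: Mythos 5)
Your proposal is correct and follows essentially the same route as the paper's proof: equip products and exponentials with pointwise orderings, let classical $\omega$-cpo theory handle the $\leq$-structure, and verify admissibility and monotonicity of the structure maps via the pointwise computation of sups (your treatment of $\bot$-maximality is in fact slightly more explicit than the paper's). The only point you gloss over that the paper flags explicitly is that the carrier of $A \to B$ --- the $\leq$-continuous \emph{and} $\dynr$-monotone functions --- is closed under pointwise $\leq$-sups, i.e.\ that $\bigvee_n f_n$ is again $\dynr$-monotone; but this follows by exactly the admissibility-of-$\dynr_B$ argument you already give (applied to the chains $\{f_n(x)\}$ and $\{f_n(y)\}$ for $x \dynr_A y$), so it is an omission of statement rather than of idea.
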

\begin{proof}
  Since the category of pointed domain preorders is a full subcategory
  of domain preorders, we can show that it has unit, product and
  exponential if the category of domain preorders has the
  corresponding universal property and the operations happen to
  produce pointed domain preorders when given pointed domain
  preorders.
  \begin{enumerate}
  \item $1$ is the singleton $\{*\}$ with the unique ordering. The
    unique element is a $\leq$ and $\dynr$-least element.
  \item $X \times Y$ is constructed as follows. The underlying set is
    the product of underlying sets $|X\times Y| = |X| \times |Y|$.
    Each ordering is point-wise, and limits are taken
    pointwise. Because limits are pointwise, $\dynr_X \times \dynr_Y$
    is admissible. If $X, Y$ have $\err$s, then $X \times Y$ has a
    $\dynr$-least element $\err_{X\times Y} = (\err_X,\err_Y)$.

    The pairing, and projection functions are clearly continuous in
    $\leq$ and monotone in $\dynr$.
  \item $X \to Y$. The underlying set is the set of continuous
    functions of domain preorders from $X$ to $Y$.  The orderings are
    given pointwise.

    Since this is a subset of the set of continuous functions of the
    underlying domains, to show that it is closed under
    $\leq$-$\omega$-chains, we just need to show that the limit of a
    chain of monotone functions is monotone.  The limit of the chain
    $\{f_i\}$ is just $\lambda x. \bigvee \{f_i(x)\}$.  Given $x \dynr_X
    y$, we need to show $\bigvee\{f_i(x)\} \dynr_Y \bigvee\{f_i(y)\}$.
    Since $\dynr_Y$ is admissible, it is sufficient to show that
    $f_i(x) \dynr f_i(y)$ for each $i$, which follows by monotonicity
    of $f_i$.

    Next, the pointwise ordering is clearly admissible because limits
    are taken pointwise.

    Next, if $f : X \times Y \to Z$ is a continuous function of
    preorder domains, $\lambda_y f : X \to (Y \to Z)$ is clearly
    continuous. And the application function $app : (X\to Y) \times X
    \to Y$ is also continuous.

    If $Y$ has a $\err$, then $X \to Y$ has a least element
    $\err_{X\to Y} = \lambda x. \err_Y$.
    \qedhere
  \end{enumerate}
\end{proof}

Next, we will solve the recursive domain equations using the classic
construction using ep pairs.
Our category of domain preorders is an $O$-category in the sense of
\cite{wand79ocat, smythplotkin}, but does not have all $\omega^{op}$
limits due to the restriction that the $\bot$ element be
$\dynr$-maximal.
However, if the $\omega^{op}$ diagram is made of strict
($\bot$-preserving) functions, the limit does exist, and this is all
the central theorem of \cite{smythplotkin} actually requires because
projections are always strict.

\begin{lemma}
  The category of domain preorders and continuous functions is an
  $O$-category with the point-wise $\leq$ ordering.
\end{lemma}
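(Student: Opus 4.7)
The plan is to verify the three defining properties of an $O$-category in the sense of Wand/Smyth-Plotkin: (i) each hom-set $\mathrm{Hom}(X,Y)$ of continuous functions is a poset under the pointwise $\leq$-ordering and has a least element, (ii) every $\omega$-chain of continuous functions has a least upper bound in this ordering, and (iii) composition is continuous with respect to these orderings. Reflexivity and transitivity of the pointwise ordering are immediate from the same properties of $\leq_Y$, and the constant function $\lambda x.\bot_Y$ is a continuous function of domain preorders (since $\bot_Y$ is $\dynr_Y$-maximal, hence in particular a monotone target) and is pointwise $\leq$-below every element of $\mathrm{Hom}(X,Y)$, giving the required bottom.

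The main step is (ii). Given an $\omega$-chain $f_0 \leq f_1 \leq \cdots$ in $\mathrm{Hom}(X,Y)$, I would define its candidate supremum pointwise by $f(x) = \bigvee_i f_i(x)$, which is well-defined in $Y$ because $\{f_i(x)\}_i$ is a $\leq_Y$-chain. The standard ``swapping of $\omega$-colimits'' argument (as in Lemma~\ref{lem:domain-preorder-ccc} for the exponential) shows that $f$ is $\leq_X$-continuous. The genuinely new ingredient is showing $f$ is $\dynr_X$-monotone, which is precisely where the admissibility clause in the definition of domain preorder is used: given $x \dynr_X y$, monotonicity of each $f_i$ yields $f_i(x) \dynr_Y f_i(y)$ for every $i$, and admissibility of $\dynr_Y$ under $\leq_Y$-$\omega$-chains delivers $f(x) = \bigvee_i f_i(x) \dynr_Y \bigvee_i f_i(y) = f(y)$. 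That $f$ is the least upper bound in the pointwise ordering is then immediate.

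For (iii), I would show that composition $- \circ -$ is continuous in each argument separately (which suffices since hom-sets are $\omega$-CPOs). Continuity in the right argument: for $g : X \to Y$ fixed and an $\omega$-chain $h_i : Y \to Z$, continuity of $\leq_Z$-suprema and the definition of pointwise supremum give $(\bigvee_i h_i) \circ g = \bigvee_i (h_i \circ g)$ directly. Continuity in the left argument requires $\leq$-continuity of $g$: for an $\omega$-chain $g_i : X \to Y$ and fixed $h : Y \to Z$, we compute $h \circ (\bigvee_i g_i)(x) = h(\bigvee_i g_i(x)) = \bigvee_i h(g_i(x)) = (\bigvee_i h \circ g_i)(x)$ using continuity of $h$.

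The only subtle point I anticipate is the $\dynr$-monotonicity of the pointwise supremum, which is really the reason admissibility of $\dynr$ was built into the definition of a domain preorder; without it the hom-sets would fail to be closed under $\omega$-suprema. Every other verification reduces to a routine calculation analogous to the classical CPO-enriched structure on the category of $\omega$-CPOs and continuous functions.
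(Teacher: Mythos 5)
Your proposal is correct and follows essentially the same route as the paper: both reduce the claim to (a) each hom-set being an $\omega$-cppo under the pointwise $\leq$-ordering, where the crucial point is exactly the one you isolate---admissibility of $\dynr$ under $\leq$-$\omega$-chains is what makes the pointwise supremum of continuous functions again $\dynr$-monotone---and (b) $\leq$-continuity of composition. The only cosmetic difference is in step (b): you verify continuity of composition in each argument separately by direct pointwise calculation, whereas the paper factors composition through $\lambda$ and application and appeals to the cartesian closed structure already established; both arguments are routine and equivalent in substance.
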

\begin{proof}
  We need to show two properties
  \begin{enumerate}
  \item First, every hom set is an $\omega$-cppo as shown in
    lemma \ref{lem:domain-preorder-ccc}.
  \item Next we need to show composition is $\leq$-continuous. Since
    composition can be defined as $\lambda f. \lambda g. \lambda x.
    g(f(x))$ it is sufficient to show that $\lambda : (X\times Y \to Z)
    \to (X \to (Y \to X))$ is continuous (application is continuous
    because the category is cartesian closed).
    Given a chain of $f_i : X \times Y \to Z$, on the one hand
    \[ (\bigvee_i \{\lambda f_i\})(x)(y) = \bigvee_i\{f_i(x,y)\} \]
    and on the other hand
    \[ (\lambda \bigvee_i \{f_i\})(x)(y) = (\bigvee_i\{f_i\})(x,y) =
      \bigvee_i\{f_i(x,y)\}
      \tag*{\qedhere}
    \]
  \end{enumerate}
\end{proof}

\begin{lemma}
  The category of domain preorders has all $\omega^{op}$ limits where
  the diagram is made of $\bot$-preserving maps.
\end{lemma}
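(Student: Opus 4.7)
The plan is to construct the limit by the standard ``coherent sequences'' construction, checking that the extra structure on domain preorders (the $\dynr$ ordering, $\dynr$-maximality of $\bot$, and $\leq$-admissibility of $\dynr$) is preserved by pointwise definitions.

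Concretely, given an $\omega^{op}$-diagram $X_0 \xleftarrow{f_1} X_1 \xleftarrow{f_2} X_2 \leftarrow \cdots$ of domain preorders with each $f_i$ strict ($\bot$-preserving), I would define $L$ to have underlying set
\[ |L| = \{(x_i)_{i<\omega} \in \textstyle\prod_i |X_i| : f_{i+1}(x_{i+1}) = x_i \text{ for all } i\}, \]
equipped with both orderings taken pointwise. The projections $\pi_i : L \to X_i$ are then continuous and monotone by construction. I would verify the limit data in three steps: (1) $(|L|,\leq_L)$ is an $\omega$-cppo, with $\bot_L = (\bot_{X_0},\bot_{X_1},\ldots)$---this sequence is coherent precisely because each $f_i$ is $\bot$-preserving, which is where the hypothesis is used---and with $\omega$-chain suprema computed pointwise (which lie in $|L|$ because each $f_{i+1}$ is continuous and hence preserves the pointwise supremum); (2) $\bot_L$ is $\dynr_L$-maximal, because $\bot_L \dynr_L (y_i)$ means $\bot_{X_i} \dynr_{X_i} y_i$ for each $i$, forcing $y_i = \bot_{X_i}$ by $\dynr_{X_i}$-maximality of $\bot_{X_i}$; (3) $\dynr_L$ is $\leq_L$-admissible, which follows pointwise from admissibility of each $\dynr_{X_i}$, since both relations and $\omega$-suprema are computed componentwise.

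For the universal property, given any cone $(g_i : Y \to X_i)_i$ of continuous functions of domain preorders satisfying $f_{i+1} \circ g_{i+1} = g_i$, the mediating map $\langle g_i \rangle : Y \to L$ sends $y$ to $(g_i(y))_i$, which is coherent by the cone condition. This map is continuous in $\leq$ and monotone in $\dynr$ because each $g_i$ is, and both structures on $L$ are pointwise; uniqueness is immediate from the fact that the $\pi_i$ are jointly set-theoretically injective.

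The main point to get right, and essentially the only place the strictness hypothesis is used, is the existence of the $\leq_L$-least element: without $\bot$-preservation of the $f_i$, the candidate sequence $(\bot_{X_i})_i$ need not be coherent and there may be no $\leq$-least coherent sequence at all. Everything else is a routine pointwise check, relying on the fact that each of the three defining properties of a (pointed) domain preorder is preserved by arbitrary pointwise constructions.
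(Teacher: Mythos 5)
Your proof is correct and follows essentially the same route as the paper's: the limit is the set of coherent sequences with both orderings taken pointwise, strictness of the maps is used exactly to ensure that the sequence of bottoms is coherent, and maximality of $\bot$, admissibility of $\dynr$, and the universal property all follow from the pointwise nature of the construction. Your write-up is somewhat more explicit than the paper's (which leaves the cone/mediating-map verification implicit), but there is no substantive difference.
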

\begin{proof}
  Given an $\omega^{op}$ diagram of domain preorders $f_i : D_{i+1}
  \to D_i$ where $i \in \omega$, the proposed limit $D_{\omega}$ has
  as underlying set
  \[ |D_{\omega}| = \{ d : \Pi_{i\in\omega}D_i \alt \forall i \in \omega.~ f_i(d_{i+1}) = d_i \}\]
  The orderings $\leq,\dynr$ are both given point-wise, and
  admissibility follows from that.
  $\lambda i. \bot_{D_i} \in D_{\omega}$ because $f_i(\bot_{i+1}) =
  \bot_i$ by strictness. Since the order is pointwise, it is a least
  element and maximal.
  Finally, because the ordering is pointwise, a function into
  $D_{\omega}$ is continuous if and only if its composition with each
  projection is continuous, so $D_{\omega}$ is the limit of the given diagram.
\end{proof}

\begin{lemma}
  The product $\times$, exponential $\to$, wedge sum $\oplus$ and
  adjoining an error $\cdot_{\err}$ are all locally continuous
  mixed-variance functors on the category of domain preorders.
\end{lemma}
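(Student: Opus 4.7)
The plan is to handle each of the four functors in turn, first verifying (mixed-variance) functoriality and then local continuity, where the latter means that the action on morphisms preserves suprema of $\leq$-$\omega$-chains of continuous functions in each argument jointly. In every case the crucial tool is the previous lemma, which already established that composition is $\leq$-continuous in each argument and that $\lambda$-abstraction is continuous; since each of the functorial actions can be built out of composition, pairing, projection, and $\lambda$, local continuity should follow by a short calculation.

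For $\times$, the action on morphisms sends $(f,g)$ to $\langle f\circ \pi_1,\, g \circ \pi_2\rangle$, which is continuous in both $f$ and $g$ because pairing, projection, and composition are. For $\to$, the action $(f,g) \mapsto \lambda h.\, g \circ h \circ f$ is contravariant in $f$ and covariant in $g$; its $\leq$-continuity in each slot reduces to $\leq$-continuity of $\lambda$ and of iterated composition, and its $\dynr$-monotonicity in each slot reduces to monotonicity of composition under the error ordering, which is pointwise. For $\cdot_{\err}$, which freely adjoins a new $\dynr$-least element $\err$ below an existing domain preorder, the functorial action extends any $f : X \to Y$ by sending $\err_X$ to $\err_Y$; continuity and monotonicity are immediate since the new element is isolated from any nontrivial chain.

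For the wedge sum $\oplus$, which identifies the $\leq$-least elements of the two summands, I would first check that the functorial action $f \oplus g$ is well defined: on the identified point it is forced to go to the identified $\bot$ of the target, so I must take the action to be the case-split by summand composed with $f$ or $g$ respectively, and then quotient. This is well defined precisely because the identified point is the $\leq$-least element on both sides and continuous functions are monotone in $\leq$, so $f(\bot_X) \leq $ anything and likewise $g(\bot_Y)$, but to equate them on the identified class one uses that their common image is the $\bot$ in the quotient by monotonicity. Local continuity then follows because suprema in $X \oplus Y$ are computed summand-wise (chains eventually stay in one summand or are eventually $\bot$), and both projections out of the wedge compose continuously with $f$ and $g$.

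The main obstacle I expect is the wedge sum case, both in setting up $\oplus$ as a genuine functor on a category whose morphisms are not required to be strict, and in ensuring that the action respects the two orderings $\leq$ and $\dynr$ simultaneously, given that $\bot$ is required to be $\dynr$-maximal while also being $\leq$-least. The other three cases should reduce transparently to the local continuity of composition and the pointwise definition of suprema in products and function spaces.
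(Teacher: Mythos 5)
Your treatment of $\times$, $\to$, and $\cdot_\err$ matches the paper's: the same morphism actions ($\langle f\circ\pi_1,\, g\circ\pi_2\rangle$, $\lambda h.\, g\circ h\circ f$, and the extension of $f$ sending $\err$ to $\err$), with local continuity discharged by the fact that suprema in the hom-posets are computed pointwise. The paper says essentially nothing more for those three cases, so there is no disagreement there.

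The gap is in your argument for the wedge sum. You correctly identify that well-definedness of $f\oplus g$ on the identified basepoint is the crux, but the resolution you offer is wrong: $\leq$-monotonicity of $f$ only yields $f(\bot_X)\leq f(x)$ for $x$ in the domain, i.e.\ $f(\bot_X)$ is a lower bound of the \emph{image} of $f$, not of the codomain, and morphisms in this category are explicitly not required to preserve $\bot$. For non-strict $f,g$ the two branches of your case split send the identified class $[(0,\bot_X)]=[(1,\bot_Y)]$ to $(0,f(\bot_X))$ and $(1,g(\bot_Y))$, which are in general not identified in $X'\oplus Y'$, so the map does not descend to the quotient. The repair is to define the action of $\oplus$ only on strict ($\bot$-preserving) maps and observe that this suffices for the intended application: the Smyth--Plotkin construction solves the domain equation in the category of embedding-projection pairs, and both components of an ep-pair are automatically strict (the paper itself makes the analogous remark when restricting $\omega^{\mathrm{op}}$ limits to strict diagrams). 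For what it is worth, the paper's own proof of this case never writes down the morphism action of $\oplus$ at all --- it only verifies that the two orderings make $X\oplus Y$ a domain preorder --- so you are ahead of it in noticing the problem, but the argument you give to close it does not work as stated.
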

\begin{proof}
  \begin{enumerate}
  \item $\times$: $f\times g = \lambda (x,y). (fx,gy)$, this is
    obviously functorial and is locally continuous because the orderings
    are point-wise.
  \item $\to$: $f \to g = \lambda h. g \circ h \circ f$, obviously
    functorial and locally continuous because the ordering is
    pointwise.
  \item $X_{\err}$ The underlying set is $\{0\}\times |X| \uplus
    \{1\}\times \{\err\}$ and the orderings are defined as
    \[ (i,z) \leq (i',z') = i = i' \wedge z \leq z' \]
    and
    \[ (i,z) \dynr (i',z') = i = 1 \vee (i=i'=0 \wedge z \dynr_X z') \]
    Which is clearly a domain preorder.
    The functorial action is defined by
    \begin{align*}
      f_\err(1,\err) = \err\\
      f_\err(0,x) = (0,f(x))
    \end{align*}
    which is clearly functorial and continuous.
  \item $\oplus$. $|X\oplus Y| = (\{0\}\times X \uplus \{1\}\times
    Y)/((0,\bot_X)=(1,\bot_Y))$ with the $\leq$-ordering defined as

    \[
    (i,z) \leq_{X\oplus Y} (i',z') = (z = \bot_X)\vee z = \bot_Y \vee (i=i' \wedge z \leq z')
    \]

    And error ordering similarly defined as
    \[
    (i,z) \dynr_{X\oplus Y} (i',z') = (z,z' \in \{\bot_X,\bot_Y\}) \vee (i = i' \wedge z \dynr z')
    \]
    These define a domain and preorder structure respectively, and
    $\dynr$ is admissible. Since $\bot_X,\bot_Y$ are maximal,
    $\bot_{X\oplus Y}$ is also maximal.
    \qedhere
  \end{enumerate}
\end{proof}

We can then construct a suitable dynamic type using the construction of \cite{smythplotkin}.
\begin{theorem}
  There exists a domain preorder with an isomorphism
  \[i : D \cong (\mathbb{N}_{\bot})_\err \oplus (D_\err \times D_\err) \oplus (D_\err \to D_\err) \]
\end{theorem}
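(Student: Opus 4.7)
The plan is to apply the standard Smyth--Plotkin inverse-limit construction \cite{smythplotkin} using the machinery that has been assembled in the preceding lemmas. Define the mixed-variance functor
\[ F(X^-, X^+) = (\mathbb{N}_\bot)_\err \oplus (X^-_\err \times X^+_\err) \oplus (X^-_\err \to X^+_\err) \]
on the category of domain preorders. Since $\oplus$, $\times$, $\to$ and $(-)_\err$ have each been shown to be locally continuous mixed-variance functors, $F$ is locally continuous as a composite. Once we have a fixed point $D$ of $F$, the required isomorphism is exactly the unit/counit of that fixed point.

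To obtain the fixed point, I will follow the usual recipe with embedding-projection pairs. Start from the terminal domain preorder $1$, together with the unique strict map $p_0 : F(1) \to 1$ and its adjoint embedding $e_0 : 1 \to F(1)$, which together form an ep-pair. Iterating $F$ on this ep-pair gives an $\omega$-chain of ep-pairs
\[ 1 \;\tl\; F(1) \;\tl\; F^2(1) \;\tl\; F^3(1) \;\tl\; \cdots \]
whose projections $p_n : F^{n+1}(1) \to F^n(1)$ are all strict (ep-pair projections always preserve $\bot$). The previous lemma guarantees that $\omega^{op}$ limits of strict-map diagrams exist in the category of domain preorders, so we may form $D = \varprojlim_n F^n(1)$ along these projections. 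The canonical cone of embeddings $F^n(1) \to D$ then exhibits $D$ simultaneously as a colimit of the embeddings, which is the content of the standard ep-pair duality.

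From local continuity of $F$, the functor preserves this colimit of embeddings (equivalently, the limit of projections), so the mediating morphism $i : D \to F(D)$ is an isomorphism, giving the desired fixed point. The result then specializes, by setting $X^- = X^+ = D$ along the diagonal, to the stated isomorphism
\[ D \cong (\mathbb{N}_\bot)_\err \oplus (D_\err \times D_\err) \oplus (D_\err \to D_\err). \]

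The main subtlety to check is the restriction in the preceding lemma that $\omega^{op}$ limits only exist for diagrams of \emph{strict} maps: this is exactly why we must package the chain as ep-pairs and take the limit along the projection component rather than along arbitrary continuous maps. A secondary point is that one must verify $\err$ is genuinely preserved in the right way by the $(-)_\err$ wrappers so that each coordinate of the chain is a pointed domain preorder; this is immediate from the definition of $(-)_\err$ and the fact that $\oplus$ identifies bottoms but leaves the added $\err$ elements as strict least points.
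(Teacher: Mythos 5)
Your proposal is correct and follows essentially the same route as the paper, which simply invokes the Smyth--Plotkin inverse-limit construction after establishing that the category of domain preorders is an $O$-category, has $\omega^{op}$ limits of strict-map diagrams, and that $\oplus$, $\times$, $\to$, and $(-)_\err$ are locally continuous mixed-variance functors. You have merely unfolded the standard argument that the paper leaves implicit, including the key observation (which the paper also makes) that strictness of the projections is exactly what licenses the restricted $\omega^{op}$-limit lemma.
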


Then we interpret the dynamic type as $D_\err$.  The coreflection
$(e_1,p_1) : 1 \tl D_\err$ is the unique $\err$-coreflection: $e_1(*) =
\err$ and $p_1(x) = *$.
The coreflection $(e_\times,p_\times) : D_\err \times D_\err \tl D_\err$
is defined as
\begin{align*}
  e_\times(x) &= i((1,x))\\
  p_\times(\err) &= \err\\
  p_\times(\bot) &= \bot\\
  p_\times(0,n : (\mathbb{N}_\bot)_\err) &= \err\\
  p_\times(1,t : (D_\err \times D_\err)) &= t\\
  p_\times(2,f : (D_\err \to D_\err)) &= \err\\
\end{align*}
Which is monotone because $\bot$ is a maximal element.
The coreflection $(e_\to,p_\to) : D_\err \to D_\err \tl D_\err$ is
defined in the same way, but using the function case of the sum.


 \fi

\section{Related and Future Work}
\label{sec:related}

Since the original conference publication of this article
\cite{newlicata18}, we have developed~\cite{newlicataahmed19} a version of gradual type theory
based on call-by-push-value, which extends call-by-value and
call-by-name, and generalizes the type theory we have presented here.
The implementations of the dynamic
value type and dynamic computation type in \cite{newlicataahmed19} are
based on the two models in Section~\ref{sec:models-scott}.
We also developed operational models of CBPV gradual type theory,
based on an interpretation of term dynamism as a kind of contextual
approximation, following~\cite{newahmed18}.
While we did not develop a categorical semantics for
CBPV gradual type theory, there should be a similar
categorical semantics to the one presented here, by generalizing from
a preorder category to a (strong) adjunction of preorder categories.
In particular, our presentation of the operational models there is a
refinement of the contract construction we give here, and so should in
principle be also described by taking a kind of double category of
coreflections and vertical slice.  

\iflong
\subsection{Logic and Semantics of Dynamism}
\label{sec:related:logic}
\fi

Our logic and semantics of type and term dynamism builds on the
formulation introduced with the gradual guarantee in \cite{refined},
but the rules of our system differ in two key ways.
First, our system includes the $\beta, \eta$ equivalences as
equi-dynamism axioms, making term dynamism a more semantic notion.
Second, we only allow casts that are either upcasts or downcasts (as
defined by type dynamism), whereas their system allows for a more
liberal ``compatibility'' condition.
Accordingly our rules of dynamism for casts are slightly different,
but where it makes sense, the rules of the two systems are
interderivable.
\iflong
Modifying their cast rule to our syntax and ignoring any
compatibility constraint on the casted types, they have the following
two rules:
\begin{mathpar}
  \inferrule*[right=Cast-R]
  {\Phi \vdash t_1 \dynr t_2 : A_1 \dynr A_2 \and (A_1 \dynr B_2)}
  {\Phi \vdash t_1 \dynr \obcast{A_2}{B_2}{t_2} : A_1 \dynr B_2}

  \inferrule*[right=Cast-L]
  {\Phi \vdash t_1 \dynr t_2 : A_1 \dynr A_2 \and (B_1 \dynr A_2)}
  {\Phi \vdash \obcast{A_1}{B_1}{t_1} \dynr t_2 : B_1 \dynr A_2}
\end{mathpar}
Then we see that our four rules for upcast and downcasts are the
special case where the casts involved are upcasts or downcasts.
In the reverse direction, if we \emph{define} the ``oblique'' casts as
$\obcast{A}{B}t = \dncast{B}{\dyn}{\upcast{A}{\dyn}{t}}$, we can
derive their rules in \ref{fig:cast-rules}.
First, their cast right rule follows easily by applying our
sequent-style cast rules \textsc{DR(S)} and \textsc{UR(S)}.
The left rule takes slightly more work, using the retract axiom,
because we can't cast up to $\dyn$ on the left side because it might
be more dynamic than $A_2$.
Instead, we first prove that we can define the oblique cast
$\obcast{A}{B}{t}$ not just as the cast through $\dyn$, but also
through \emph{any} $C$ with $A,B\dynr C$ using the retract axiom and
composition of upcasts, downcasts.
Then we pick $C$ in the \textsc{Cast-L} case to be $A_2$, and then proof
proceeds dually to the \textsc{Cast-R} case.
\begin{figure}
  \begin{mathpar}
  \inferrule*[right=DR(S)]
  {\inferrule*[right=UR(S)]
    {\Phi \vdash t_1 \dynr t_2 : A_1 \dynr A_2}
    {\Phi \vdash t_1 \dynr {\upcast{A_2}{\dyn}{t_2}} : A_1 \dynr \dyn}}
  {\Phi \vdash t_1 \dynr \dncast{B_2}{\dyn}{\upcast{A_2}{\dyn}{t_2}} : A_1 \dynr B_2}\\

  \text{if } A,B \dynr C\\
  \dncast{B}{\dyn}{\upcast{A}{\dyn}{t}} \equidyn \dncast{B}{C}{\dncast{C}{\dyn}{\upcast{C}{\dyn}{{\upcast{A}{C}{t}}}}}\equidyn \dncast B C {\upcast A C t}

  \inferrule*[right=DL(S)]
  {\inferrule*[right=UL(S)]
   {\Phi \vdash t_1 \dynr t_2 : A_1 \dynr A_2}
   {\Phi \vdash \upcast{A_1}{A_2}{t_1} \dynr t_2 : A_2 \dynr A_2}}
  {\Phi \vdash \dncast{B_1}{A_2}{\upcast{A_1}{A_2}{t_1}} \dynr t_2 : B_1 \dynr A_2}
  \end{mathpar}
  \caption{\cite{refined} Cast Rules Derived}
  \label{fig:cast-rules}
\end{figure}
\fi
\iflong

As a relational logic with a sound and complete categorical semantics,
it has commonalities with logics for parametric polymorphism
\cite{plotkin1993logic}, and the categorical semantics in terms of
\emph{reflexive graph categories} which are like double categories
where vertical arrows lack composition \cite{ohearn95}.
In particular the System P logic presented in \cite{Dunphy:2002} is
similar to a ``dynamism proof-relevant'' version of preorder type
theory.
Additionally, the bifibration condition of
\cite{johann15bifibrational} is essentially the same as the
definition of an \emph{equipment}, but with a twist: in gradual typing
every contract induces an adjoint pair of terms, but there every term
induces an adjoint pair of \emph{relations}: the graph and
``cograph''.
Hopefully the similarity with parametric logics will be useful in
studying the combination of graduality with parametricity.
\fi

\iflong
\subsection{Contracts as Coreflections}
\fi

Our semantic model of contracts as coreflections has precedent in much
previous work, though we are the first to make precise the
relationship to gradual typing's notions of type and term dynamism.

\iflong
First, Dana Scott's seminal denotational work on models of the lambda
calculus is very similar to our vertical slice category: types are
modeled as retracts (or their associated idempotent) of a fixed
universal domain and morphisms are continuous functions of the
underlying domain (ignoring the universal domain).
Our treatment of type and term dynamism utilizes additional details
of this model, and the move from retracts to coreflections allows us
to give our specification for upcasts and downcasts.
Additionally, Scott's paper and later denotational work use
coreflections to solve mixed-variance domain equations \cite{scott76,
  wand79ocat, smythplotkin}.
The key reason is that one cannot construct a solution to $D \cong D
\to D$ as a limit or colimit because $\to$ has contravariant and
covariant arguments.
Instead, one moves to the category of coreflections where $\to$ is
covariant in both arguments.
Our coreflection model shows that this ``trick'' is also the reason
that the function type constructor is \emph{monotone} with respect to
type dynamism.
The double category setting allows us to better understand the
intertwined relationship between the categories of continuous maps and
coreflections and in this respect has much similarity to
\cite{pitts96}'s work, much of which could be fruitfully reframed in a
double categorical setting.
\fi

Henglein's work \cite{henglein94:dynamic-typing} on dynamic typing
defines casts that are retracts to the dynamic type, introduced the
upcast-followed-by-downcast factorization that we use here, and defines a
syntactic rewriting relation similar to our term dynamism rules.
Further they define a ``subtyping'' relation that is the same as type
dynamism and characterize it by a semantic property analogous to the
semantics of type dynamism in our contract model.
\ifshort
The upcast-downcast
factorization of an arbitrary cast is superficially similar to the work
on triple casts in \cite{siek-wadler10}, which collapse a sequence of
casts starting at $A$ and ending at $B$ into a downcast to $A \sqcap B$
followed by an upcast to $B$.  But note that this factorization is
opposite (downcast and then upcast), and the upcast-downcast
factorization requires only a dynamic type, while the converse
requires an appropriate middle type, similarly to \emph{image factorization}.
Moreover, \cite{AGT} shows that the correctness of
factorization through $A \sqcap B$ is not always possible.
\fi

Findler and Blume's work on contracts as \emph{pairs of projections}~\cite{findler-blume06}
is also similar.
There a contract is defined in an untyped language to be given by a
\emph{pair} of functions that divide enforcement of a type between a
``positive'' component that checks the term and a ``negative''
component that checks the continuation, naturally supporting a
definition of \emph{blame} when a contract is violated.
We give no formal treatment of blame in this paper, but our separation
into upcasts and downcasts naturally supports a definition of blame
analogous to theirs.
In their paper, each component $c$ is idempotent and satisfies $c
\dynr \id$.  
Their work is fundamentally untyped so a direct comparison is
difficult.
\iflong
Their pairs of projections are not coreflections between the untyped
domain and itself and it doesn't make sense to ask whether our upcasts
and downcasts are error projections because they are not
endomorphisms.
We can say that on the one hand any coreflection with components $u, d
: A \tl \dyn$ produces an error projection $u \circ d$ on $\dyn$, but
then we are left with a single projection rather than two.
We might be able to make a more direct comparison using a
\emph{semantic} type system over an untyped language, in the style of
\cite{constable+86nuprl-book}.
\fi

Recent work on interoperability in a (non-gradual) dependently typed
language \cite{dagand:hal-01629909} defines several variations of
Galois connections to serve as models of casts with different
properties.
This work validates their comments that ordinary monotone Galois
connections serve as a model of the upcasts and downcasts associated
to type dynamism.

\iflong
\subsection{Frameworks for Gradual Typing}
\fi

There are two recent proposals for a more general theory of gradual
typing: Abstracting Gradual Typing (AGT) \cite{AGT} and the
Gradualizer \cite{gradualizer17}.
Broadly, their systems and ours are similar in that type dynamism and
graduality are central and a gradually typed language is constructed
from a statically typed language.
Gradual type theory is quite different in that it is based on an
axiomatic semantics, whereas both of theirs are based on operational
semantics.
As such our notion of gradual type soundness is stronger than theirs:
we assert program equivalences whereas their soundness theorem is
related to the syntactic type soundness theorem of the static
language.
Their systems also develop a \emph{surface syntax} for gradually typed
languages (including implicit casts and gradual type checking), whereas
our logic here only applies to the \emph{runtime semantics} of the
language.
\iflong
In particular, their languages have \emph{implicit} casts which are
elaborated into an explicit cast calculus that is more similar to our
type theory.
Their approaches also consider the problem of how a gradual type
\emph{checker} should balance the demands of disallowing terms that
will produce type errors with the requirement that the language still
have a subset that supports a dynamically typed programming style.
\fi

The AGT framework also allows for a variation on gradual typing where
only some types are ``gradual'' in the sense that they are less
dynamic than the dynamic type. For instance, by removing the rule $A
\to B \dynr \dyn$ we get a dynamic type that only embeds first-order
types, and so rules out costly higher-order casts.
We can accommodate this in our axiomatics by simply limiting the $A
\dynr \dyn$ rule to only apply to certain types $A$, the definition
of the casts can remain the same.
In fact our first-order model or pointed preorders in
Section~\ref{sec:models:first-order} would be a model of such a system
since the category that interprets terms is cartesian closed.
Finally, AGT is based on abstract interpretation and uses a Galois
connection between gradual types and sets of static types that is
actually a coreflection itself, but we do not see a precise
relationship to our use of coreflections, and so this may just reflect
the ubiquity of this mathematical concept.

\iflong
\subsection{Cast Factorization}

The factorization of an arbitrary cast $A \Rightarrow B$ into an
upcast to $\dyn$ followed by a downcast is superficially similar to
the work of \cite{siek-wadler10}, which collapse a
sequence of casts starting at $A$ and ending at $B$ into a
downcast to $A \sqcap B$ followed by an upcast to $B$.
Note that their factorization is in fact opposite: ours is an upcast
followed by a downcast.
The factorization we present is trivial and was originally presented
in \cite{henglein94:dynamic-typing}, whereas theirs involves some
actual computation of a type and is similar to \emph{image
  factorization}.
Furthermore, it was shown in \cite{AGT} that the correctness of
factorization through $A \sqcap B$ is not always possible and is
highly dependent on the available language of gradual types, whereas
our factorization solely depends on the presence of a dynamic type,
which could even be weakened to the two types having a common
$\dynr$-supertype.

\fi

Relative to this related work, we believe the axiomatic specification of
casts via a universal property relative to dynamism is a new idea in
gradual typing, as is our categorical semantics and the
presentation of the contract interpretation as a model construction.

\iflong
\subsection{Future Work}
\fi

The clearest challenges for future work are the axiomatization of
gradual typing with more advanced typing features.
For instance, the combination of gradual typing and parametric
polymorphism has proven quite complex \cite{ahmed08:paramseal,
  ahmed17, igarashipoly17,torolabradatanter19}.
If we could show that the combination of graduality with parametricity
has a unique implementation, as we have shown here for simple typing,
it would provide a strong semantic justification for a design.

Additionally, the combination of dependent typing with gradual typing
is worth exploring, especially because while dependent contract
checking been used for some time and was explicitly inspired in part
by dependent typing, no semantic connection has been established
between dependent contracts and category-theoretic models of dependent
typing.
The main difficulty will be the combination of dependent typing with
effects, but there has been much recent work in this area
\cite{ahmanghaniplotkin16,Vakarthesis,pedrottabareau17}.

Another interesting application would be to apply our semantics to
other forms of gradualization, such as effect typing
\cite{gradeffects2014}, security typing \cite{toro18:sec} and refinement
typing \cite{lehmann17}.

We conjecture that much of our semantics
will hold over, but with the dynamism and casts being in a different
(preorder) category. For instance, effect types can be interpreted as
monads and a cast might be interpreted as an embedding-projection pair
of morphisms of monads.

\bibliography{max}

\end{document}
